\newcommand{\set}[1]{\{ #1 \}}
\newcommand{\dhalf}{\ell + 1}
\newcommand{\infi}{\mathrm{Inf}}
\newcommand{\distk}{\mathrm{dist}_k}
\newcommand{\Pre}{\mathrm{Pre}}
\newcommand{\Act}{\mathrm{Act}}
\newcommand{\SB}{\mathrm{BndSt}}
\newcommand{\PP}{\mathcal{P}}
\newcommand{\Attr}{\mathrm{Attr}}
\newcommand{\Attreve}{\Attr^{\mathrm{E}}}
\newcommand{\Attrevef}{\Attr^{\mathrm{E}}(F)}
\newcommand{\safety}{\mathrm{Safety}}
\newcommand{\bounded}{\mathrm{Bounded}}
\newcommand{\limitbounded}{\mathrm{LimitBounded}}
\newcommand{\buc}{\mathrm{B\ddot{u}chi}}
\newcommand{\bucf}{\buc(F)}
\newcommand{\bucfn}{\buc(F,N)}
\newcommand{\parity}{\mathrm{Parity}}
\newcommand{\parp}{\parity(c)}
\newcommand{\fin}{\mathrm{Fin}}
\newcommand{\bnd}{\mathrm{Bnd}}
\newcommand{\co}{\mathrm{Co}}
\newcommand{\push}{\mathrm{push}}
\newcommand{\pop}{\mathrm{pop}}
\newcommand{\Skip}{\mathrm{skip}}
\newcommand{\EXPTIME}{\mathrm{EXPTIME}}
\newcommand{\N}{\mathbb{N}}
\newcommand{\play}{\pi}
\newcommand{\G}{\mathcal{G}}
\newcommand{\VE}{V_{E}}
\newcommand{\VA}{V_{A}}
\newcommand{\W}{\mathcal{W}}
\newcommand{\WE}{\W_{E}}
\newcommand{\WA}{\W_{A}}
\newcommand{\nex}{\nu}
\newcommand{\M}{\mathcal{M}}
\newcommand{\up}{\mu}
\newcommand{\A}{\mathcal{A}}
\newcommand{\B}{\mathcal{B}}
\def\qed{\rule{0.4em}{1.4ex}}
\newcommand{\Bn}{\mathbb{B}}
\newcommand{\MSO}{\textrm{MSO}}
\begin{document}
\title{Infinite-state games with finitary conditions}

\author{
Krishnendu Chatterjee \inst{1} \and
Nathana\"el Fijalkow \inst{2}
}
\institute{IST Austria, Klosterneuburg, Austria \\
\email{krishnendu.chatterjee@ist.ac.at}
\and
LIAFA, CNRS \& Universit\'e Denis Diderot - Paris 7, France \\
Institute of Informatics, University of Warsaw, Poland \\
\email{nath@liafa.univ-paris-diderot.fr}
}

\maketitle

\begin{abstract}
We study two-player zero-sum games over infinite-state graphs with boundedness conditions.

Our first contribution is about the strategy complexity, \textit{i.e} the memory 
required for winning strategies: we prove that over general infinite-state graphs,
memoryless strategies are sufficient for finitary B\"uchi games, 
and finite-memory suffices for finitary parity games.

We then study pushdown boundedness games, with two contributions.
First we prove a collapse result for pushdown $\omega B$ games, implying the decidability of solving these games.
Second we consider pushdown games with finitary parity along with stack boundedness conditions,
and show that solving these games is $\EXPTIME$-complete.
\end{abstract}

\section{Introduction}

\noindent{\bf Games on graphs.} 
Two-player games played on graphs is a powerful mathematical framework 
to analyze several problems 
in computer science as well as mathematics.
In particular, when the vertices of the graph represent the states 
of a reactive system and the edges represent the transitions, then the synthesis problem
(Church's problem) asks for the construction of a winning strategy in a 
game played on the graph~\cite{BuchiLandweber69,McNaughton93}.
Game-theoretic formulations have also proved useful for the 
verification, refinement, and compatibility 
checking of reactive systems~\cite{AlurHenzingerKupferman02}; 
and has deep connection with automata theory and logic, \textit{e.g} 
the celebrated decidability result
of monadic second-order logic over infinite trees due to Rabin~\cite{Rabin69}.

\smallskip\noindent{\bf Omega-regular conditions: strengths and weaknesses.} 
In the literature, two-player games on finite-state graphs with $\omega$-regular 
conditions have been extensively studied~\cite{EmersonJutla88,EmersonJutla91,LNCS2500,GurevichHarrington82,Zielonka98}. 
The class of $\omega$-regular languages provides a robust specification language 
for solving control and verification problems (see, \textit{e.g}, \cite{PnueliRosner89}).
Every $\omega$-regular condition
can be decomposed into a safety part and a liveness part~\cite{AlpernSchneider85}.
The safety part ensures that the component will not do anything ``bad''
(such as violate an invariant) within any finite number of transitions.
The liveness part ensures that the component will do something ``good''
(such as proceed, or respond, or terminate) in the long-run.
Liveness can be violated only in the limit, by infinite sequences of 
transitions, as no bound is stipulated on when the ``good'' thing must 
happen.
This infinitary, classical formulation of liveness has both strengths and 
weaknesses. 
A main strength is robustness, in particular, independence from the chosen 
granularity of transitions.
Another important strength is simplicity, allowing liveness to serve as an 
abstraction for complicated safety conditions.
For example, a component may always respond in a number of transitions 
that depends, in some complicated manner, on the exact size of the 
stimulus.
Yet for correctness, we may be interested only that the component will 
respond ``eventually''.
However, these  strengths also point to a weakness of the classical 
definition of liveness:
it can be satisfied by components that in practice are quite
unsatisfactory because no bound can be put on their response time.

\smallskip\noindent{\bf Stronger notion of liveness: finitary conditions.}
For the weakness of the infinitary formulation of liveness, alternative and 
stronger formulations of liveness have been proposed.
One of these is {\em finitary} liveness~\cite{AlurHenzinger98}:
it is satisfied if \textit{there exists} a bound~$N$ such that every stimulus is followed by a response 
within $N$ transitions.
Note that it does not insist on a response within a known bound~$N$
(\textit{i.e}, every stimulus is followed by a response within $N$ transitions), 
but on response within some unknown bound, which can be arbitrarily large;
in other words, the response time must not grow forever from one stimulus to the next.
In this way, finitary liveness still maintains the robustness (independence 
of step granularity) and simplicity (abstraction of complicated safety conditions) of 
traditional liveness, while removing unsatisfactory implementations.

All $\omega$-regular languages can be defined by a deterministic parity
automaton; the parity condition assigns to each state an integer representing a priority,
and requires that in the limit, every odd priority is followed by a lower even priority.
Its finitary counterpart, the finitary parity condition, 
strengthens this by requiring the existence of a bound $N$ such that 
in the limit every odd priority is followed by a lower even priority within $N$ transitions.

\smallskip\noindent{\bf Bounds in $\omega$-regularity.}
The finitary conditions are closely related to the line of work initiated by Boja{\'n}czyk in~\cite{Bojanczyk04},
where the $\MSO + \Bn$ logic was defined, generalizing $\MSO$ by adding a bounding quantifier $\Bn$.
The satisfiability problem for this logic has been deeply investigated 
(see for instance~\cite{Bojanczyk04,BojanczykColcombet06,BojanczykTorunczyk12}),
but the decidability for the general case is still open.
A fragment of $\MSO + \Bn$ over infinite words was shown to be decidable in~\cite{BojanczykColcombet06}, 
by introducing the model of $\omega B$-automata, which manipulate counters.
They perform three kind of actions on counters: increment ($i$),
reset ($r$) or nothing ($\varepsilon$).
The relation with finitary conditions has been investigated in~\cite{ChatterjeeFijalkow11},
where it was shown that automata with finitary conditions exactly correspond to star-free $\omega B$-expressions.
Moreover, the finitary conditions are recognized by $\omega B$-automata,
hence they can be considered as a subcase of $\omega B$-conditions.

\smallskip\noindent{\bf Regular cost-functions.}
A different perspective for bounds in $\omega$-regularity
was developed by Colcombet in~\cite{Colcombet09} with functions instead of 
languages, giving rise to the theory of regular cost-functions and cost-$\MSO$.
The decidability of cost-$\MSO$ over finite trees was established in~\cite{ColcombetLoeding10},
but its extension over infinite trees is still open,
and would imply the decidability of the index of the non-deterministic Mostowski hierarchy~\cite{ColcombetLoeding08},
a problem open for decades.
A subclass of cost-$\MSO$ called temporal cost logic was introduced in~\cite{ColcombetKuperbergLombardy10}
and is the counterpart of finitary conditions for regular cost-functions~\cite{ChatterjeeFijalkow11},
also reminiscent of desert automata~\cite{Kirsten04}.

\smallskip\noindent{\bf Quantification order.}
The essential difference between the approaches underlying the logics $\MSO + \Bn$ and cost-$\MSO$ is a quantifier switch.
We illustrate this in the context of games: a typical property expressed in $\MSO + \Bn$
is ``there exists a strategy, such that for all plays, there exists a bound on the counter values'',
while cost-$\MSO$ allows to express properties like 
``there exists a strategy, there exists a bound $N$, such that for all plays, the counter values are bounded by $N$''.
In other words, $\MSO + \Bn$ expresses non-uniform bounds while bounds in cost-$\MSO$ are uniform.

\smallskip\noindent{\bf Solving boundedness games.}
Games over finite graphs with finitary conditions have been studied in~\cite{ChatterjeeHenzingerHorn09},
leading to very efficient algorithms: finitary parity games can be solved in polynomial time
(unlike classical parity games).
In this paper, we study games over infinite graphs with finitary conditions,
and then focus on the widely studied class of pushdown games,
which model sequential programs with recursion.
This line of work belongs to the tradition of infinite-state systems and games
(see \textit{e.g}~\cite{AbdullaBouajjanidOrso08,BrazdilJAncarKucera10}).
Pushdown games with the classical reachability and parity conditions have 
been studied in~\cite{AlurTorreMadhusudan06,Walukiewicz01}.
It has been established in~\cite{Walukiewicz01} that the problem of deciding the 
winner in pushdown parity games is EXPTIME-complete. 
However, little is known about pushdown games with boundedness conditions; 
one notable exception is parity and stack boundedness conditions~\cite{BouquetSerreWalukiewicz03,Gimbert04}.
The stack boundedness condition naturally arises
with the synthesis problem in mind,
since bounding the stack amounts to control
the depth of recursion calls of the sequential program.

\smallskip\noindent{\bf Memoryless determinacy for infinite-state games.}
Our motivation to prove the existence of finite-memory winning strategies is towards
automata theory, where several constructions rely on the existence 
of memoryless winning strategies (for parity games):
for instance to complement tree automata~\cite{EmersonJutla91},
or to simulate alternating two-way tree automata by non-deterministic ones~\cite{Vardi98}.

In particular, Colcombet pointed out in~\cite{Colcombet13} that the remaining difficulty to establish the decidability
of cost-$\MSO$ over infinite trees is a good understanding of boundedness games,
and more specifically the cornerstone is to extend the memoryless determinacy of parity games over infinite graphs,
following~\cite{EmersonJutla88,EmersonJutla91,GurevichHarrington82}.

\smallskip\noindent{\bf Our contributions.}
We study two questions about infinite-state games with boundedness conditions:
the \emph{memory requirements} of winning strategies
and the \emph{decidability} of solving a pushdown game. 

\smallskip\noindent{\it Strategy complexity.}
We give (non-effective) characterizations of the winning regions
for finitary games over countably infinite graphs,
implying a complete picture of the strategy complexity.
Most importantly, we show that for finitary B\"uchi games
memoryless strategies suffice, and 
that for finitary parity games, memory of size $\dhalf$ suffices, 
where $\ell$ is the number of odd priorities in the parity condition.

\smallskip\noindent{\it Pushdown games.} 
We present two contributions.

First we consider pushdown boundedness games
and prove that the following statements are equivalent:
``there exists a strategy, such that for all plays, there exists a bound on the counter values 
and the parity condition is satisfied'' and 
``there exists a strategy, there exists a bound $N$, such that for all plays, \textit{eventually} 
the counter values are bounded by $N$ and the parity condition is satisfied''.
We refer to this as a collapse result, as it reduces a quantification with non-uniform bounds (in the fashion of $\MSO + \Bn$)
to one with uniform bounds (\`a la cost-$\MSO$).
Using this, we obtain the decidability of solving such games
relying on previous results~\cite{BlumensathColcombetKuperbergVandenboom13,Bojanczyk04}.

Second we consider pushdown games with finitary parity along with stack boundedness conditions,
and establish that solving these games is $\EXPTIME$-complete.

\section{Definitions}

\medskip\noindent{\bf Arenas and games.}
The games we consider are played on an \emph{arena} $\A = (V,(\VE,\VA),E)$, 
which consists of a (potentially infinite but countable) graph $(V,E)$
and a partition $(\VE,\VA)$ of the vertex set $V$.
A vertex is controlled by Eve and depicted by a circle if it belongs to $\VE$
and controlled by Adam and depicted by a square if it belongs to $\VA$.
Playing consists in moving a pebble along the edges: initially placed on a vertex $v_0$,
the pebble is sent along an edge chosen by the player who controls the vertex.
From this interaction results a path in the graph, called a \emph{play} 
and usually denoted $\pi = v_0, v_1, \ldots$.
To avoid the nuisance of dealing with finite plays, we assume that the graphs have no dead-ends:
all vertices have an outgoing edge,
so the plays are infinite.
We denote by $\Pi$ the set of all plays, and define \emph{conditions}
for a player by sets of winning plays $\Omega \subseteq \Pi$. 
The games are zero-sum, which means that if Eve's condition is $\Omega$, then Adam's condition is
$\Pi \setminus \Omega$, usually denoted by ``$\co \Omega$'' (the conditions are opposite).
Formally, a \emph{game} is given by $\G = (\A,\Omega)$ where $\A$ is an arena and $\Omega$ a condition.
A condition $\Omega$ is prefix-independent if it is closed under adding and removing prefixes.
Given an arena $\A$, a subset $U$ of vertices induces a subarena
if all vertices in $U$ have an outgoing edge to $U$.
We denote by $\A[U]$ the induced arena.

\medskip\noindent{\bf Strategies.}
A \emph{strategy} for a player is a function that prescribes, given a finite history of the play, the next move.
Formally, a \emph{strategy} for Eve is a function $\sigma : V^* \cdot \VE \to V$ 
such that for a finite history $w \in V^*$ and a current vertex $v \in \VE$, the prescribed move is legal,
\textit{i.e} along an edge: $(v,\sigma(w \cdot v)) \in E$.
Strategies for Adam are defined similarly, and usually denoted by~$\tau$.
Once a game $\G = (\A,\Omega)$, a starting vertex $v_0$ and strategies $\sigma$ for Eve and $\tau$ for Adam are fixed, 
there is a unique play denoted by $\pi(v_0,\sigma,\tau)$, 
which is said to be winning for Eve if it belongs to $\Omega$.
The sentence ``Eve has a winning strategy from $U$'' means that she has a strategy
such that for all initial vertex $v_0$ in $U$,
for all strategies $\tau$ for Adam, the play $\pi(v_0,\sigma,\tau)$ is winning.
By ``solving the game'', we mean (algorithmically) determine the winner.
We denote by $\WE(\G)$ the set of vertices from where Eve wins, also referred as winning set, or winning region, 
and analogously $\WA(\G)$ for Adam.
Whenever the arena $\A$ is clear from the context, we use $\WE(\Omega)$ instead of $\WE(\A,\Omega)$.
A very important theorem in game theory, due to Martin~\cite{Martin75}, states that Borel games (that is,
where the condition is Borel, a topological condition) are determined, \textit{i.e}
we have $\WE(\G) \cup \WA(\G) = V$: from any vertex, exactly one of the two players has a winning strategy.
Throughout this paper, we only consider Borel conditions, hence our games are determined.

\medskip\noindent{\bf Memory structures.}
We define memory structures and strategies relying on memory structures.
A \emph{memory structure} $\M = (M, m_0,\up)$ for an arena $\A$ and an initial vertex $v_0$
consists of a set $M$ of memory states, 
an initial memory state $m_0 \in M$, and an update function $\up: M \times E \to M$. 
A memory structure is similar to an automaton synchronized with the arena: 
it starts from $m_0$ and reads the sequence of edges produced by the arena.
Whenever an edge is taken, the current memory state is updated using the update function $\up$.
A strategy relying on a memory structure $\M$, whenever it picks the next move, 
considers only the current vertex and the current memory state: 
it is thus given by a next-move function $\nex: \VE \times M \to V$.
Formally, given a memory structure $\M$ and a next-move function $\nex$, 
we can define a strategy $\sigma$ for Eve by $\sigma(w \cdot v) = \nex(v, \up^+(w \cdot v))$,
where $\up$ is extended to $\up^+ : V^+ \to M$.
A strategy with memory structure $\M$ has finite memory if $M$ is a finite set.
It is \emph{memoryless}, or \emph{positional} if $M$ is a singleton: in this case, the choice for the next move
only depends on the current vertex,
and can be described as a function $\sigma: \VE \to V$.

We can make the synchronized product explicit: an arena $\A$ and a memory structure $\M$ for $\A$ induce 
the expanded arena $\A \times \M = (V \times M, (\VE \times M, \VA \times M), E \times \up)$ 
where $E \times \up$ is defined by
$((v,m), (v',m')) \in E \times \up$ if $(v,v') \in E$ and $\up(m,(v,v')) = m'$.
There is a natural one-to-one mapping between plays in $\A$ and in $\A \times \M$,
and also from memoryless strategies in $\A \times \M$
to strategies in $\A$ using $\M$ as memory structure.
It follows that if a player has a memoryless strategy for the arena $\A \times \M$, then
he has a strategy using $\M$ as memory structure for the arena $\A$,
producing the same plays.
This \textit{key} property will be used throughout the paper.

\medskip\noindent{\bf Attractors.}
Given $F \subseteq V$, define $\Pre(F)$ as the union of
$\set{u \in \VE \mid \exists (u,v) \in E, v \in F}$
and $\set{u \in \VA \mid \forall (u,v) \in E, v \in F}$.
The attractor sequence is the step-by-step computation of the least fixpoint of the monotone function
$X \mapsto F \cup \Pre(X)$:
$$\left \{
\begin{array}{l}
\Attreve_0 (F) = F \\[1.4ex]
\Attreve_{k+1} (F) = \Attreve_k(F) \cup \Pre(\Attreve_k(F))
\end{array}
\right.$$
The sequence $(\Attreve_k (F))_{k \ge 0}$ is increasing with respect to set inclusion,
so it has a limit\footnote{Here we use the assumption that the set of vertices is countable.
We could drop this assumption and define the sequence indexed by ordinals, which we avoided for the sake
of readability.}, denoted $\Attrevef$, the attractor to $F$.
An attractor strategy to $F \subseteq V$ for Eve is a memoryless strategy 
that ensures from $\Attreve(F)$ to reach $F$ within a finite number of steps.
Specifically, an attractor strategy to $F$ from $\Attreve_N(F)$ 
ensures to reach $F$ within the next $N$ steps.

\medskip\noindent{\bf $\omega$-regular conditions.} 
We define the B\"uchi and parity conditions.
We equip the arena with a coloring function $c : V \rightarrow [d]$ where
$[d] = \set{0,\ldots,d}$ is the set of \emph{colors} or \emph{priorities}.
For a play $\play$, let $\infi(\play) \subseteq [d]$ 
be the set of colors that appear infinitely often in~$\play$.
The parity condition is defined by $\parp = \set{\play \mid \min(\infi(\play)) \mbox{ is even}}$,
\textit{i.e} it is satisfied if the lowest color visited infinitely often is even.
Here, the color set $[d]$ is interpreted as a set of priorities,
even priorities being ``good'' and odd priorities ``bad'',
and lower priorities preferable to higher ones.
The parity conditions are self-dual, meaning that the completement
of a parity condition is another parity condition:
$\co\parp = \Pi \setminus \parp = \parity(c+1)$.
As a special case, the class of B\"uchi conditions are defined 
using the color set $[1] = \set{0,1}$ (\textit{i.e} $d = 1$). 
We define the B\"uchi set $F$ as $c^{-1}(0) \subseteq V$,
say that a vertex is B\"uchi if it belongs to $F$,
and define $\bucf  =  \set{\pi \mid 0 \in \infi(\play)}$, \textit{i.e} 
the B\"uchi condition $\bucf$ requires that infinitely many times vertices in $F$ are reached.

The dual is $\co\bucf$ condition, which requires that finitely many times vertices in $F$ are reached.

\medskip\noindent{\bf $\omega B$-conditions.} 
We equip the arena with $k$ counters and an update function $C : E \rightarrow \set{\varepsilon,i,r}^k$,
associating to every edge an action for each counter.
The value of a counter along a play is incremented by the action $i$, reset by $r$ and left unchanged by $\varepsilon$.
We say that a counter is bounded along a play if the set of values assumed is finite,
and denote by $\bounded$ the set of plays where all counters are bounded,
and $\bounded(N)$ if bounded by $N$.
The conditions of the form $\bounded \cap \parp$ are called $\omega B$-conditions.

Note that the bound requirement for $\omega B$-conditions is not uniform: a strategy is winning if for all plays,
there exists a bound $N$ such that the counters are bounded by $N$ and the parity condition is satisfied.
In other words, the bound $N$ depends on the path.
The sentence ``Eve wins for the bound $N$'' means that Eve has a strategy which ensures
the bound $N$ uniformly: for all plays, the counters are bounded by the same $N$.
Similarly, the sentence ``the strategy (for Adam) fools the bound $N$'' means that 
it ensures that for all plays, either some counter reaches the value $N$
or the parity condition is not satisfied.

\medskip\noindent{\bf Finitary conditions.} 
Finitary conditions add bounds requirements 
over $\omega$-regular conditions~\cite{AlurHenzinger98}.
Given a coloring function $c : V \rightarrow [d]$, and a position $k$ we define:
\[
\distk (\play,c) =
\inf_{k' \ge k} \left\{
k'-k \mid 
\begin{array}{c}
c(\play_{k'}) \mbox{ is even, and } \\
c(\play_{k'}) \leq c(\play_k)
\end{array}
\right\};
\]
\textit{i.e} $\distk(\play,c)$ is the ``waiting time'' by means of number of steps 
from the $k$\textsuperscript{th} vertex to a preferable priority
(that is, even and lower).
The finitary parity winning condition $\fin\parp$ 
was defined as follows in~\cite{ChatterjeeHenzingerHorn09}:
$\fin\parp = \set{\play \mid \limsup_k \distk (\play,c) < \infty}$,
\textit{i.e} the finitary parity condition requires that the supremum 
limit of the distance sequence is bounded.
A good intuition is to see the finitary parity condition as bounding the waiting time
between requests, which are odd priorities, and responses, which are even priorities.
In this terminology, the priority $3$ is a request, answered by $0$ and $2$ since they are smaller, 
but not by $4$.
The finitary parity condition is satisfied by a play if there exists $N \in \N$
such that from some point onwards, all requests are answered within $N$ steps.

In the special case where $d = 1$, this defines the finitary B\"uchi condition:
setting $F = c^{-1} (0)$, we denote $\distk (\play,F) = \inf \set{k'-k \mid k' \geq k, \play_{k'} \in F}$,
\textit{i.e} $\distk(\play,F)$ is the number of steps from the $k$\textsuperscript{th} vertex 
to the next B\"uchi vertex.
(Note that this is consistent with the previous notation $\distk(\play,c)$.)
Then $\fin\bucf = \set{\play \mid \limsup_k \distk (\play,F) < \infty}$.
In the context of finitary conditions, the sentence 
``the strategy (for Adam) fools the bound $N$'' means that the strategy
ensures that for all plays, there exists a position $k$ such that $\distk(\pi,c) > N$.

We shall refer to games with $\omega B$-conditions as $\omega B$ games, and the same
applies for all kinds of conditions.

\begin{remark}
As defined, finitary conditions do not form a subclass of $\omega B$-conditions; however,
there exists a deterministic $\omega B$-automaton which recognizes
$\fin\parp$, so finitary games reduce to $\omega B$ games by composing with this deterministic automaton.
We informally describe this automaton: it has a counter for each odd priority,
and keeps track of the set of open requests.
As long as a request is open, the corresponding counter is incremented at each step,
and it is reset whenever the request is answered.
\end{remark}

\begin{example}\label{ex:intro}
We conclude this section by an example witnessing the difference between playing a B\"uchi condition
and a finitary B\"uchi condition over an infinite graph.
This is in contrast to the case of finite graphs, where winning for B\"uchi and finitary B\"uchi conditions are equivalent.
Figure~\ref{fig:difference_buchi_finbuchi} presents an infinite graph where only Adam has moves;
he loses the B\"uchi game but wins for the finitary B\"uchi game.
We give two representations: on the left as a pushdown graph (defined in Section~\ref{sec:pushdown}),
and on the right explicitely as an infinite-state graph.
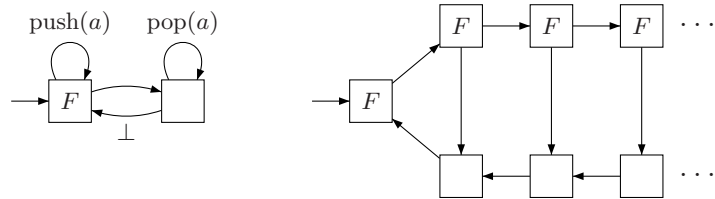
\begin{figure}
\begin{center}
\begin{picture}(80,25)(0,0)
	\rpnode[Nmarks=i,polyangle=45](push)(0,10)(4,4){$F$}
	\rpnode[polyangle=45](pop)(15,10)(4,4){}

	\drawedge[curvedepth=2](push,pop){}
	\drawedge[curvedepth=2](pop,push){$\bot$}
	\drawloop[loopdiam=5,loopangle=90](push){$\push(a)$}
	\drawloop[loopdiam=5,loopangle=90](pop){$\pop(a)$}

	\rpnode[Nmarks=i,polyangle=45](v_0)(40,10)(4,4){$F$}

	\rpnode[polyangle=45](v_1)(52,20)(4,4){$F$}
	\rpnode[polyangle=45](v_2)(64,20)(4,4){$F$}
	\rpnode[polyangle=45](v_3)(76,20)(4,4){$F$}
	\put(81,20){\begin{large}$\ldots$\end{large}}

	\rpnode[polyangle=45](u_1)(52,0)(4,4){}
	\rpnode[polyangle=45](u_2)(64,0)(4,4){}
	\rpnode[polyangle=45](u_3)(76,0)(4,4){}
	\put(81,0){\begin{large}$\ldots$\end{large}}

	\drawedge(v_0,v_1){}
	\drawedge(u_1,v_0){}
	\drawedge(v_1,v_2){}
	\drawedge(u_2,u_1){}
	\drawedge(v_1,u_1){}
	\drawedge(v_2,u_2){}
	\drawedge(v_2,v_3){}
	\drawedge(u_3,u_2){}
	\drawedge(v_3,u_3){}
\end{picture}
\end{center}
\caption{Adam loses the B\"uchi game but wins the finitary B\"uchi game.}
\label{fig:difference_buchi_finbuchi}
\end{figure}

A play consists in rounds, each starting whenever the pebble hits the leftmost vertex.
In a round, Adam chooses a number $N$ and follows the top path for $N$ steps, remaining in B\"uchi vertices; 
then he goes down, and follows a path of length $N$ without B\"uchi vertices,
before getting back to the leftmost vertex.
Whatever Adam does, infinitely many B\"uchi vertices will be visited,
so Adam loses the B\"uchi game.
However, by describing an unbounded sequence
(\textit{e.g} for $N$ steps in the $N$\textsuperscript{th} round),
Adam ensures longer and longer paths without B\"uchi vertices,
hence wins the finitary B\"uchi game.
\end{example}

\section{Strategy complexity for finitary conditions over infinite-state games}
\label{sec:strategy_complexity}

In this section we give characterizations of the winning regions
for finitary conditions over infinite arenas,
and use them to establish the strategy complexity for both players.
The main results are summarized in the following theorem.

\begin{theorem}[Strategy complexity for finitary games]
\label{thm:strategy_complexity}
The following assertions hold:
\begin{enumerate}
	\item For all finitary B\"uchi games, Eve has a memoryless winning strategy 
from her winning set.
	\item For all finitary parity games, Eve has a finite-memory winning strategy from her winning set 
that uses at most $\dhalf$ memory states, where $\ell$ is the number of odd colors.
\end{enumerate}
\end{theorem}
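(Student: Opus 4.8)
\medskip\noindent\textbf{Plan.} For a bound $N \in \N$ let $Y^{(N)}$ denote the greatest fixpoint of the monotone operator $Z \mapsto \Attreve_N(F \cap \Pre(Z))$; informally, $Y^{(N)}$ is the largest region from which Eve can forever keep the play inside the region while forcing a visit to $F$ within the next $N$ steps. For finitary B\"uchi the plan is to characterize $\WE(\fin\bucf)$ as the (transfinitely iterated) closure of $\Attreve\big(\bigcup_{N \ge 1} Y^{(N)}\big)$ and to read off a memoryless strategy. For finitary parity the plan is to iterate this \emph{bounded} fixpoint through a Zielonka/McNaughton-style recursion on the odd colors, each odd color contributing one memory state.

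\medskip\noindent\textbf{Finitary B\"uchi.} I would define the memoryless strategy on $\Attreve\big(\bigcup_N Y^{(N)}\big)$ as follows: at a vertex $v \in \bigcup_N Y^{(N)}$ put $N(v) = \min\{N : v \in Y^{(N)}\}$ and follow a fixed memoryless attractor strategy towards $F \cap \Pre(Y^{(N(v))})$ inside $Y^{(N(v))} = \Attreve_{N(v)}\big(F \cap \Pre(Y^{(N(v))})\big)$; at a vertex outside $\bigcup_N Y^{(N)}$ follow a memoryless attractor strategy towards $\bigcup_N Y^{(N)}$. Chasing the fixpoint equality $Y^{(N)} = \Attreve_N(F \cap \Pre(Y^{(N)}))$ gives the two facts needed for correctness: once the play enters $\bigcup_N Y^{(N)}$ it never leaves (the $\Pre$-guard together with the decreasing attractor rank keep it in $Y^{(N(\cdot))}$ at each step), and the index $N(\cdot)$ is nonincreasing along the play, hence stabilizes at some $N^{*}$, after which $\distk(\play, F) \le N^{*}$ at every position $k$, so $\play \in \fin\bucf$. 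Thus $\Attreve\big(\bigcup_N Y^{(N)}\big) \subseteq \WE(\fin\bucf)$ with a memoryless witness; if the complement $\bar Z = V \setminus \Attreve\big(\bigcup_N Y^{(N)}\big)$ still contains vertices won by Eve, one recomputes $\bigcup_N Y^{(N)}$ inside the subarena $\A[\bar Z]$ (in which Eve cannot escape) and adds its attractor, iterating transfinitely; on the leftover subarena $\bigcup_N Y^{(N)}$ is empty. Because on this leftover Eve's winning set for $\fin\bucf$ is then empty, Borel determinacy gives Adam a strategy winning $\co(\fin\bucf)$ there, which — since Eve is trapped in the leftover — also wins in $\A$. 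This identifies $\WE(\fin\bucf)$ with the union of the attractors added and proves item~1; gluing the per-layer memoryless strategies (lower layers being traps whose attractors point into already-settled regions) keeps the global strategy memoryless.

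\medskip\noindent\textbf{Finitary parity.} I would proceed by induction on the number $\ell$ of odd colors. When $\ell = 0$ the condition holds on every play; when $\ell = 1$ it reduces, on the product of the arena with a two-state memory recording whether a request is pending, to a finitary B\"uchi condition, so item~1 applies within the budget $\dhalf = 2$. For the inductive step, single out the highest odd color and run the Zielonka recursion for parity games with the bounded Eve-attractor from item~1 used at the steps where a request must be resolved and the ordinary Eve/Adam attractors used otherwise: this yields a nested family of sub-arenas; on the sub-arena handling the highest odd color Eve plays a finitary-B\"uchi-style memoryless sub-strategy driving the play to an even color below that request, and on the complementary sub-arena she follows, by the induction hypothesis, a strategy with at most $\ell$ memory states; adding one new memory state for the top level gives at most $\ell + 1 = \dhalf$ states. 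Correctness is the layered analogue of the B\"uchi argument: the play eventually settles inside one sub-arena of the recursion on which a single bounded-B\"uchi-type sub-strategy governs, which yields the uniform bound witnessing $\fin\parp$.

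\medskip\noindent\textbf{Main obstacle.} In both items the hard part is the passage from the non-uniform finitary condition to the uniform bound-$N$ regions $Y^{(N)}$: showing that from outside the closure of $\Attreve\big(\bigcup_N Y^{(N)}\big)$ Adam can defeat \emph{every} bound simultaneously. Over finite arenas this is automatic; over infinite arenas it rests on the leftover subarena trapping Eve and on the bounded fixpoints bottoming out there, combined with determinacy, and for parity this switch has to be carried out level by level along the recursion — which is also where one must verify that the memory never grows beyond $\dhalf$ states.
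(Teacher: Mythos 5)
Your finitary B\"uchi half follows the same architecture as the paper (the regions $Y^{(N)}$ are exactly $\WE(\bnd\bucfn)$, your min-$N$ switching is the union-of-positional-strategies argument, and the transfinite attractor slicing is the paper's fixpoint iteration, merely skipping the intermediate uniform B\"uchi layer), and the non-increasing-index analysis is sound. But the step you yourself call the main obstacle is not actually proved: you write ``Because on this leftover Eve's winning set for $\fin\bucf$ is then empty, Borel determinacy gives Adam a strategy winning $\co(\fin\bucf)$'', which is circular --- the emptiness of $\WE(\fin\bucf)$ on the leftover is precisely the claim to establish from the emptiness of all $Y^{(N)}$. Determinacy only yields, for each fixed $N$, an Adam strategy $\tau_N$ fooling the bound $N$ inside the trap; the missing idea is the diagonalization that chains these strategies: play $\tau_1$ until a gap of more than $1$ consecutive non-B\"uchi vertices is realized (a finite-time event), then restart with $\tau_2$ from the vertex reached (still in the trap, where $\WE(\bnd\buc(F,2))$ is empty), and so on, producing unbounded gaps and hence defeating $\fin\bucf$. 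This is the argument the paper supplies explicitly, and without it your induction has no reason to terminate exactly at $\WE(\fin\bucf)$.

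The finitary parity half has a more substantial flaw. Your base case asserts that with one odd color, finitary parity reduces, on the product with a two-state ``request pending'' memory, to finitary B\"uchi. That equivalence is false, and the paper points out the exact counterexample: the play with colors $1 \cdot 2^\omega$ satisfies $\fin\parp$ (the $\limsup$ forgives the single never-answered request), but in the product the memory records the pending request forever, so the B\"uchi set is never visited. The request-tracking memory $\M$ (with $\dhalf$ states, one per odd color plus one) only translates the \emph{bounded}, prefix-dependent condition: $\bnd\parp$ on $\A$ corresponds to $\bnd\bucf$ on $\A \times \M$; the finitary (prefix-independent) version must then be recovered by a least-fixpoint/slice iteration over $\A \times \M$ with slices $\WE(\A',\bnd\parp)$, which is how the paper gets the $\dhalf$ bound directly from the size of $\M$. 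Your alternative, a Zielonka-style recursion on the odd colors, is not carried out: the sub-arenas, the conditions they carry, and above all the memory accounting are only asserted --- there is no reason why gluing an $\ell$-state strategy from the induction hypothesis with the top-level sub-strategy costs only ``one new memory state'' rather than composing multiplicatively, and this bookkeeping is exactly what an explicit construction must justify. So item 2, as proposed, does not go through.
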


\subsection{Bounded and uniform conditions}

To obtain Theorem~\ref{thm:strategy_complexity},
we take five steps, summarized in Figure~\ref{fig:results},
which involve two variants of finitary conditions: uniform and bounded.
\begin{figure}[!ht]
\begin{center}
\begin{picture}(80,10)(0,0)
	\gasset{Nadjust=wh,Nframe=n,AHLength=5,AHlength=3}

	\node(bndunibuchi)(0,5){\begin{tabular}{ccc}bounded\\uniform\\B\"uchi\end{tabular}}
	\node(unibuchi)(20,5){\begin{tabular}{ccc}uniform\\B\"uchi\end{tabular}}
	\node(finbuchi)(40,5){\begin{tabular}{ccc}finitary\\B\"uchi\end{tabular}}
	\node(bndparity)(60,5){\begin{tabular}{ccc}bounded\\parity\end{tabular}}
	\node(finparity)(80,5){\begin{tabular}{ccc}finitary\\parity\end{tabular}}

  	\drawedge(bndunibuchi,unibuchi){}
  	\drawedge(unibuchi,finbuchi){}
  	\drawedge(finbuchi,bndparity){}
  	\drawedge(bndparity,finparity){}
\end{picture}
\caption{Results implications.}
\label{fig:results}
\end{center}
\end{figure}
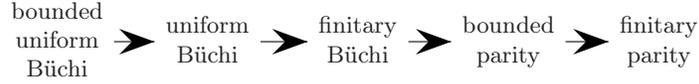

\medskip\noindent{\bf Uniform conditions.} 
Unlike finitary conditions, the bound $N \in \N$ is made explicit;
for instance the uniform B\"uchi condition is 
$\bucfn = \set{\play \mid \limsup_k \distk (\play,F) \leq N}$.

\medskip\noindent{\bf Bounded conditions.} 
Unlike finitary conditions, the requirement is not in the limit, but from the start of the play,
\textit{i.e} the distance function is bounded rather than eventually bounded;
for instance the bounded parity condition is 
$\bnd\parp = \set{\play \mid \sup_k \distk (\play,c) < \infty}$.

The two variants can be combined, for instance the bounded uniform B\"uchi condition is
defined as $\bnd\bucfn = \set{\play \mid \sup_k \distk (\play,F) \leq N}$.
Let us point out that in the special case of B\"uchi conditions, we have 
$\bnd\bucf = \fin\bucf$,
hence we can refer to these conditions either as bounded B\"uchi or as finitary B\"uchi.

\subsection{Constructing positional strategies}

We start with two general techniques to construct positional strategies.
Both techniques are about composing several positional strategies into one.
The first lemma deals with union.

\begin{lemma}[Union and positional strategies~\cite{Gimbert04}]
\label{lem:union}
Let $\A$ be an arena and $(\Omega_n)_{n \in \N}$ be a family of Borel conditions. 
If $\cup_{n \in \N} \Omega_n$ is prefix-independent and for all $n \in \N$, 
Eve has a positional winning strategy for the condition $\Omega_n$ from $V_n$, 
then she has a positional winning strategy for the condition $\cup_{n \in \N} \Omega_n$ from $\cup_{n \in \N} V_n$.
\end{lemma}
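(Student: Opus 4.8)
The plan is to build a single positional strategy by carefully partitioning the union $\bigcup_n V_n$ into layers according to which $\Omega_n$ ``takes responsibility'' for each vertex, and then playing the corresponding positional strategy on each vertex. Concretely, I would first reindex so that the $V_n$ are not assumed disjoint; define $U_0 = V_0$ and $U_n = V_n \setminus \bigcup_{m < n} V_m$, so that $\{U_n\}_{n \in \N}$ is a partition of $\bigcup_n V_n$, and let $\sigma_n$ be a fixed positional winning strategy for $\Omega_n$ from $V_n$. The candidate strategy $\sigma$ is defined on a vertex $v \in \VE \cap U_n$ by $\sigma(v) = \sigma_n(v)$. Since each $\sigma_n$ is a legal strategy on all of $V_n \supseteq U_n$ and each $U_n$ is contained in $V_n$, this is well-defined; one must check only that following $\sigma$ never leaves $\bigcup_n V_n$, which holds because each $\sigma_n$ keeps plays inside $V_n$ (as $V_n$ must be a trap/subarena for $\sigma_n$-consistent plays, a standard consequence of $\sigma_n$ being winning everywhere on $V_n$ for a condition, once one notes Adam also cannot escape a winning region — here I would invoke determinacy and prefix-independence to see $V_n$ is closed under $\sigma_n$-plays).

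The heart of the argument is showing that any $\sigma$-consistent play $\pi$ starting in $\bigcup_n V_n$ is in $\bigcup_n \Omega_n$. The key observation is to track the index $n(i)$ such that the $i$-th vertex $\pi_i$ lies in $U_{n(i)}$. I would argue that the sequence $n(0), n(1), \dots$ is \emph{eventually non-increasing}: whenever the play is at a vertex of $U_n$ and a move is made (by either player), the next vertex cannot lie in $U_m$ for $m > n$. For Eve's vertices this is because $\sigma$ plays $\sigma_n$, which stays in $V_n$, so the successor is in $V_n$, hence in some $U_m$ with $m \le n$. For Adam's vertices in $U_n \subseteq V_n$: since $\sigma_n$ is winning from all of $V_n$ and $V_n$ is $\sigma_n$-closed, $V_n$ is in particular a region Adam cannot escape under $\sigma_n$'s dual — more carefully, I would note $V_n$ is a subarena in the sense that Adam has no edge out of $V_n$ either (otherwise Eve could not win from the source of such an edge using a positional strategy confined to $V_n$); this uses that $V_n$ equals the winning region's relevant trap. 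Thus every successor of a vertex in $U_n$ lies in $\bigcup_{m \le n} V_m = \bigcup_{m \le n} U_m$, so $n(i+1) \le n(i)$, and being a non-increasing sequence of naturals it stabilizes at some value $n^\star$. From that point on $\pi$ stays in $U_{n^\star} \subseteq V_{n^\star}$ and is consistent with $\sigma_{n^\star}$; since $\Omega_{n^\star}$ is Borel and $\sigma_{n^\star}$ wins $\Omega_{n^\star}$ from $V_{n^\star}$, the suffix of $\pi$ is in $\Omega_{n^\star}$, and by prefix-independence of $\bigcup_n \Omega_n$ (and the fact that $\Omega_{n^\star} \subseteq \bigcup_n \Omega_n$) the whole play $\pi$ is in $\bigcup_n \Omega_n$.

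The main obstacle I anticipate is justifying rigorously that each $V_n$ behaves as a subarena closed under all plays — i.e., that neither player (in particular Adam) has an edge leaving $V_n$ from a vertex of $V_n$, so that the monotonicity-of-index argument goes through. This is where I would lean on Borel determinacy (cited via Martin in the excerpt) together with prefix-independence: if Adam had an escaping edge from $v \in V_n$ to some $v' \notin V_n$, then from $v'$ Adam would need a winning counter-strategy against every Eve strategy for $\Omega_n$, contradicting that $v \in V_n$ means Eve wins $\Omega_n$ from $v$ with a strategy that must therefore be able to handle that move — so the successor must still be in Eve's winning region, which we take to be exactly $V_n$ (or we replace $V_n$ by the winning region, which only enlarges it and is harmless). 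Once this closure property is nailed down, the rest is bookkeeping: the definition of $\sigma$, the eventual-stabilization of $n(i)$, and the appeal to prefix-independence. I would also remark that positionality of $\sigma$ is immediate since it is defined purely as a function $\VE \to V$ with no memory.
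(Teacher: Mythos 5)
Your construction and the overall shape of the argument are the same as the paper's: play, at each vertex of $\bigcup_n V_n$, the strategy $\sigma_k$ of least index $k$ (your partition into the sets $U_n$ is just a rephrasing of this), argue that the index along a consistent play eventually stabilizes, deduce that a suffix of the play is consistent with a single $\sigma_{n^\star}$ from a vertex of $V_{n^\star}$, and conclude by prefix-independence of $\bigcup_n \Omega_n$. You have also correctly identified the crux: one needs that from a vertex of index $n$, every successor occurring in the play again lies in $\bigcup_{m \le n} V_m$.

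The gap is that your justification of this closure property does not follow from the stated hypotheses. Nothing makes $V_n$ a trap or a subarena: a positional strategy winning from $V_n$ is defined on all of $\VE$, and plays consistent with it may perfectly well leave $V_n$ and still be winning, so the parenthetical ``otherwise Eve could not win from the source of such an edge using a positional strategy confined to $V_n$'' is a non sequitur --- Adam may have edges leaving $V_n$, and even Eve's own move $\sigma_n(v)$ from $v \in V_n$ may leave $V_n$. The fallback of replacing $V_n$ by the winning region does not repair this: positionality is only assumed from $V_n$, and, more importantly, the individual $\Omega_n$ are \emph{not} assumed prefix-independent (only their union is), so neither $\WE(\Omega_n)$ nor the set of vertices from which $\sigma_n$ wins need be closed under Adam's moves --- for a prefix-dependent $\Omega_n$, ``Eve wins after the history $v v'$'' does not imply ``Eve wins from $v'$''. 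Your appeal to determinacy does not supply this either. For comparison, the paper's own proof simply asserts that the index decreases along the play; what makes this assertion legitimate is visible in the one place the lemma is used, namely $\Omega_N = \bucfn$ and $V_N = \WE(\bucfn)$: there each $\Omega_N$ is itself prefix-independent and $V_N$ is its winning region, and then every Adam-successor of a vertex of $V_N$, as well as the $\sigma_N$-successor, stays in $V_N$ (residual-strategy argument plus positionality), which is exactly the closure your monotonicity step needs. So either strengthen the hypotheses in this way (each $\Omega_n$ prefix-independent and $V_n$ its winning region, or at least $V_n$ closed under Adam's moves and under $\sigma_n$), or find a different argument for the eventual stabilization of the index; as written, that step of your proof fails.
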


\begin{proof}
We denote by $\Omega$ the condition $\cup_{n \in \N} \Omega_n$.

For all $n \in \N$, let $\sigma_n$ be a positional strategy winning from $V_n$ 
for the condition $\Omega_n$.
We construct $\sigma$ positional strategy on $\cup_{n \in \N} V_n$: 
for $v \in \cup_{n \in \N} V_n$,
we define $\sigma(v) = \sigma_k(v)$ 
where $k$ is the smallest integer such that $v \in V_k$.
Consider a play $\pi$ consistent with $\sigma$ from $\cup_{n \in \N} V_n$:
it can be decomposed into finitely many infixes, each consistent with some strategy $\sigma_k$.
Furthermore, the index $k$ decreases along the play, hence is ultimately constant,
so $\pi$ is ultimately consistent with some $\sigma_k$.
Since $\Omega$ is prefix-independent, $\pi$ is in $\Omega$,
hence $\sigma$ is a positional winning strategy from $\cup_{n \in \N} V_n$
for the condition $\Omega$.
\hfill\qed
\end{proof}

The second lemma is about fixpoint iteration.

\begin{lemma}[Fixpoint and positional strategies]
\label{lem:fixpoint}
Let $\G = (\A,\Omega)$ be a game, where $\Omega$ is Borel and prefix-independent.
If there exists an operator $\Xi$ which associates to each subarena $\A'$ of $\A$ 
a subset of vertices of $\A'$ satisfying the following properties,
for all subarenas $\A'$:
\begin{enumerate}
	\item $\Xi(\A') \subseteq \WE(\A',\Omega)$.
	\item If $\WE(\Omega)$ is non-empty then $\Xi(\A')$ is non-empty.
	\item Eve has a positional winning strategy from $\Xi(\A')$ in the game $(\A',\Omega)$.
\end{enumerate}
then Eve has a positional winning strategy for Eve from her winning set in $\G$.
\end{lemma}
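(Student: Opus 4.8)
The plan is to iterate the operator $\Xi$ transfinitely, carving off a positional-strategy region at each stage, and then glue the pieces together using the prefix-independence of $\Omega$ together with the attractor machinery.

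First I would set up the transfinite construction. Start with $\A_0 = \A$. At a successor stage, given the subarena $\A_\alpha$, let $X_\alpha = \Xi(\A_\alpha)$; by property~(3) Eve has a positional winning strategy $\sigma_\alpha$ from $X_\alpha$ in $(\A_\alpha,\Omega)$, and by property~(1) we have $X_\alpha \subseteq \WE(\A_\alpha,\Omega)$. Now remove from $\A_\alpha$ the set $\Attreve(X_\alpha)$ computed \emph{inside} $\A_\alpha$ (the attractor to $X_\alpha$ for Eve), and let $\A_{\alpha+1} = \A_\alpha[V_\alpha \setminus \Attreve(X_\alpha)]$. One must check this is genuinely a subarena: the key observation is that if $v \in \VA$ has an edge leaving the attractor it is not in the attractor, and if $v \in \VE$ is outside the attractor then not all its successors are in the attractor, so every remaining vertex retains an outgoing edge inside $\A_{\alpha+1}$. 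At a limit stage $\lambda$, set $\A_\lambda = \bigcap_{\alpha < \lambda} \A_\alpha$, which is again a subarena as a decreasing intersection of subarenas. Since the vertex set strictly decreases at each successor stage (property~(2) guarantees $X_\alpha \neq \emptyset$ as long as $\WE(\A_\alpha,\Omega) \neq \emptyset$, hence as long as $\A_\alpha$ is nonempty — here one needs that $\WE$ of a nonempty game with our conditions is nonempty, or rather one stops precisely when $\WE(\A_\alpha,\Omega)$ becomes empty), the process terminates at some ordinal $\theta$ with either $\A_\theta$ empty, or $\WE(\A_\theta,\Omega) = \emptyset$.

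Next I would assemble the global positional strategy and identify the winning region. Every vertex of $\WE(\G)$ gets removed at some stage: I would argue that $\WE(\A_\theta,\Omega) = \emptyset$ implies all of Eve's winning vertices have been peeled off, because a vertex not removed by stage $\theta$ lies in $\A_\theta$, and one shows $\WE(\A_\theta,\Omega) \supseteq \WE(\G) \cap V_\theta$ — the point is that removing $\Attreve(X_\alpha)$ from a winning vertex's perspective only removes vertices from which \emph{Adam} cannot profitably force play, so Adam's winning region never shrinks; equivalently Eve's winning region in the subarena equals $\WE(\G) \cap V_\alpha$ at each stage (this uses prefix-independence and determinacy). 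Define $\sigma$ on $V$ by: for $v \in X_\alpha$, play $\sigma_\alpha(v)$; for $v \in \Attreve(X_\alpha) \setminus X_\alpha$, play the positional attractor strategy toward $X_\alpha$ within $\A_\alpha$, where $\alpha$ is the unique stage at which $v$ is removed. For a play $\pi$ from some $v \in \WE(\G)$ consistent with $\sigma$: while $\pi$ sits in stage-$\alpha$ vertices it either reaches $X_\alpha$ via the attractor, or — crucially — it can never move to an earlier-removed (higher-$\alpha$) region, since leaving $\A_{\alpha+1}$ would mean entering $\Attreve(X_\alpha)$, but $\sigma$ keeps Eve-controlled moves inside and Adam-controlled vertices outside the attractor have no edges into it. So the stage index is nonincreasing along $\pi$; since ordinals are well-founded it is eventually constant, say equal to $\alpha$, and from that point $\pi$ stays in $\A_\alpha$, enters $X_\alpha$ (the attractor guarantees this within finitely many steps), and thereafter is consistent with $\sigma_\alpha$ inside $\A_\alpha$. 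Hence a suffix of $\pi$ is a $\sigma_\alpha$-play in $(\A_\alpha,\Omega)$, so it is in $\Omega$; by prefix-independence $\pi \in \Omega$.

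The main obstacle, which I would spend most care on, is the interplay between the \emph{induced} winning regions and attractor removal: I need that at every stage $\WE(\A_\alpha,\Omega) = \WE(\G) \cap V_\alpha$, so that property~(2) keeps firing until exactly Eve's whole winning set is exhausted and the residual game $\A_\theta$ has empty Eve-winning region. This requires a small lemma that removing an Eve-attractor (to a set inside her winning region) from a prefix-independent Borel game does not add new vertices to Adam's winning region — intuitively, from a vertex where Eve wins, she can avoid entering the removed attractor unless she wants to, and entering it is only good for her; combined with determinacy this pins down the winning regions of the subarena. A secondary subtlety is the transfinite bookkeeping: using the countability footnote from the excerpt one can keep attractor computations indexed by $\N$, but the outer peeling may still need ordinals, so I would either invoke that the process must stabilise at a countable ordinal or, more cleanly, present it as a (possibly transfinite) fixpoint and note termination follows from strict decrease of vertex sets. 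Everything else — that unions of subarenas along a decreasing chain are subarenas, that positional attractor strategies exist (already stated in the excerpt), and the gluing of positional strategies over disjoint regions — is routine.
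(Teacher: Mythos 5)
You take essentially the same route as the paper: peel off, stage by stage, the Eve-attractor of $\Xi(\A_\alpha)$ in the current subarena, glue the attractor strategies and the positional winning strategies slice by slice, and conclude via prefix-independence together with well-foundedness of the stage index; the paper's proof is exactly this (indexed by integers, with the transfinite bookkeeping left implicit). Your ``small lemma'' that the residual winning regions satisfy $\WE(\A_\alpha,\Omega)=\WE(\G)\cap V_\alpha$ is indeed the right auxiliary fact, and it is even easier than you suggest: an Eve vertex outside an Eve-attractor has \emph{no} edge at all into the attractor, so removing an Eve-attractor only restricts Adam, which gives one inclusion immediately; the other inclusion uses prefix-independence (Adam's winning strategies never enter $\WE(\G)$) as you indicate.

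Two of the concrete claims you add are, however, false as stated. First, a decreasing intersection of subarenas need not be a subarena: an infinitely-branching Adam vertex can stay outside every attractor $\Attreve(X_\alpha)$ because at each stage some successor is still present, and yet lose all of its successors in the limit, becoming a dead-end of $\A_\lambda$. The repair is to remove, at a limit stage, the Eve-attractor (computed in $\A$) of the union of everything removed so far; the extra vertices swallowed this way are attracted to earlier slices, hence winning by prefix-independence, and every residual vertex set is then the complement of an Eve-attractor and so induces a genuine subarena. Second, your justification of the monotonicity of the stage index is backwards: it is Eve's vertices outside an Eve-attractor that have no edge into it, while an Adam vertex outside may well have such edges, so Adam \emph{can} jump back into earlier-removed regions. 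That direction is harmless (a strict decrease is exactly what well-foundedness handles), but the direction that genuinely needs care is the opposite one, and neither your argument nor, to be fair, the paper's one-line ``can only go down the slices'' addresses it: from a vertex of $\Xi(\A_\alpha)$ the chosen positional winning strategy for $(\A_\alpha,\Omega)$, and Adam moving inside $\A_\alpha$, may send the play to vertices removed at \emph{later} stages, where your glued strategy switches to another slice's strategy, and nothing a priori prevents the index from oscillating. A clean fix is to close $\Xi(\A_\alpha)$ under reachability inside $\A_\alpha$ by plays consistent with the chosen strategy: by prefix-independence the strategy is still positionally winning from this closure, the closure is stable under both the strategy and Adam's moves inside $\A_\alpha$, and with the slice redefined as the attractor of this closure the stage index is truly non-increasing, after which your well-foundedness and prefix-independence argument goes through.
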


This technique will be used several times in the paper 
(see \textit{e.g}~\cite{Kopczynski06} for similar fixpoint iterations).
It consists in decomposing the winning set for~$\Omega$
into a sequence of disjoint subarenas called ``slices'',
and define a positional strategy for each slice.
Aggregating all those strategies yields a positional winning strategy 
for~$\Omega$.

\begin{proof}
We define by induction the following objects:
\begin{itemize}
	\item a sequence $(\A_k)_{k \ge 0}$ of subarenas of $\A$,
	\item a sequence of slices $(S_k)_{k \ge 1}$, 
	\item a sequence of positional strategies $(\sigma_k)_{k \ge 1}$ for Eve from $\Attreve(S_k)$.
\end{itemize}

\noindent The first arena $\A_0$ is $\A$.
Having defined $\A_k$, we set $S_{k+1} = \Attreve(\Xi(\A_k))$, 
$\A_{k+1} = \A_k \setminus S_{k+1}$ and $\sigma_{k+1}$ as 
an attractor strategy on $\Attreve(\Xi(\A_k)) \setminus \Xi(\A_k)$
and a positional winning strategy from $\Xi(\A_k)$ in the game $(\A_k,\Omega)$.

First observe that the union $S$ of all slices is the winning region for Eve in $\G$,
this follows from \textit{1.} and \textit{2.}.
Denote by $\sigma$ the union of all strategies $\sigma_k$ (note that the slices are pairwise disjoint).
The second key observation is that a play consistent with $\sigma$ from $S$ can only go down the slices, 
so eventually remains in one slice, hence is eventually consistent with some $\sigma_k$,
and as a consequence is in $\Omega$.
Thus $\sigma$ is a positional winning strategy from Eve's winning region in $\G$.
\hfill\qed
\end{proof}

\subsection{Strategy complexity for bounded uniform B\"uchi games}
Our first step is the study of bounded uniform B\"uchi games.
In this subsection, we obtain the following results:

\begin{proposition}[Strategy complexity for bounded uniform B\"uchi games]
\label{prop:mem_bounded_uniform_buchi}
For all bounded uniform B\"uchi games with bound $N$, the following assertions hold:
\begin{enumerate}
	\item Eve has a positional winning strategy from her winning set.
	\item Adam has a finite-memory winning strategy with $N$ memory states from his winning set. 
	\item In general, winning strategies for Adam require at least $N-1$ memory states, 
even over finite arenas, for $N \geq 3$. 
\end{enumerate}
\end{proposition}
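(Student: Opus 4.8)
I would establish the three items in turn; items~1 and~2 are attractor/fixpoint arguments (dual in flavour), and item~3, the lower bound, is where I expect the real difficulty.

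\emph{Item 1.} Let $W=\WE(\bnd\bucfn)$ and $B=W\cap F$. First I would check that $\A[W]$ is a subarena: since $\bnd\bucfn$ is closed under removing prefixes, any step taken by an Adam vertex of $W$, or chosen by Eve from a vertex of $W$, into $V\setminus W$ would already be losing for Eve (Adam switches to his winning strategy from $V\setminus W$), contradicting $W=\WE(\bnd\bucfn)$; in particular a winning strategy from $W$ never leaves $W$. The crux is the identity $W=\Attreve_N(B)$ computed inside $\A[W]$: ``$\supseteq$'' is immediate since $B\subseteq W$, and for ``$\subseteq$'' observe that winning means every window of $N+1$ consecutive positions contains a B\"uchi vertex, so from any $v\in W$ a winning strategy reaches $B$ within $N$ steps without leaving $W$. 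Then the memoryless $N$-level attractor strategy to $B$ in $\A[W]$ (descend one attractor level; from $B$ move arbitrarily inside $W$, possible since $\A[W]$ is a subarena) wins: at every position $k$ the current vertex lies in $\Attreve_d(B)$ for some $d\le N$, so a B\"uchi vertex is seen within the next $N$ steps, i.e.\ $\sup_k\distk(\pi,F)\le N$.

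\emph{Item 2.} The condition $\bnd\bucfn$ is not prefix-independent, so Lemma~\ref{lem:fixpoint} does not apply directly, but I would run an analogous slicing for Adam. For a subarena $\A'$ let $R(\A')$ be the set of vertices from which Adam can force $N+1$ consecutive non-B\"uchi vertices \emph{starting now}, i.e.\ the $N$-fold iterate of $X\mapsto (V\setminus F)\cap\Pre^{\A'}(X)$ from $V\setminus F$; realising this once Adam has ``committed'' needs only a bounded countdown through the shells $R_N\supseteq R_{N-1}\supseteq\cdots\supseteq R_0$, so a counter ranging over $\set{1,\dots,N-1}$ suffices. Iterating $\A_0=\A$, $\A_{i+1}=\A_i\setminus\Attradam(R(\A_i))$ (attractor inside $\A_i$) exhausts Adam's winning set as the union of the slices $T_{i+1}=\Attradam(R(\A_i))$, the point being that Eve can only ever escape a slice ``upward'' to a strictly earlier one, hence finitely often. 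Adam's strategy uses the memory states $\set{A,1,2,\dots,N-1}$ --- exactly $N$: in state $A$ he plays the memoryless union of the slice-wise attractor strategies toward the cores $R(\A_i)$, switching to state $N-1$ the instant the play enters a core; in state $m$ he pushes one shell further and decrements, and the step out of the innermost shell completes $N+1$ consecutive non-B\"uchi vertices; any escape by Eve resets the memory to $A$ in a strictly earlier slice, which can happen only boundedly often, so eventually the countdown runs to completion. (A quicker but looser route: $\co\bnd\bucfn$ is the reachability objective ``reach $N+1$ consecutive non-B\"uchi vertices'', which on the product of $\A$ with the capped ``consecutive non-B\"uchi'' counter is won positionally by Adam, already giving a finite-memory winning strategy; one then absorbs the ``already won'' and ``just reset'' counter values to bring the bound down to $N$.) The main obstacle here is the slicing argument itself and the bookkeeping needed to squeeze the memory down to exactly $N$.

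\emph{Item 3.} For each $N\ge 3$ I would exhibit a finite arena in which Adam wins $\co\bnd\bucfn$ but every winning strategy must tell $N-1$ situations apart. The point to exploit is that, $\bnd\bucfn$ being non-prefix-independent, positional strategies for different parts of a winning region need not merge (contrast Lemma~\ref{lem:union}), and one builds a counting gadget around this: Adam can manufacture a window of $N+1$ non-B\"uchi vertices only by executing, against Eve's probing of his ``phase'', a countdown of length $N-1$ through a vertex he must revisit; if a memory state recurs at two distinct countdown values, Eve can pump a cycle that revisits the B\"uchi vertex within $N$ steps forever, defeating that strategy. The delicate part --- and where I expect the actual work --- is calibrating the gadget so that Adam genuinely does win it (so the example is legitimate) while $N-2$ states provably fail: Eve's probing rounds must be kept short enough never to hand Adam an $(N+1)$-window by accident, which is also why the tight bound comes out as $N-1$ rather than $N$.
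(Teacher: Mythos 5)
Items~1 and~2 of your proposal are essentially sound. For item~1 you take $W=\WE(\bnd\bucfn)$ as given, use suffix-closure of the condition to show $W$ is closed under all of Adam's moves and under Eve's winning moves, and identify $W$ with the $N$-step attractor of $W\cap F$ inside $\A[W]$; the attractor strategy, completed at B\"uchi vertices by any move staying in $W$, is positional and winning. This is correct, and it is a genuine alternative to the paper, which instead builds the winning region from scratch as the greatest fixpoint of $X\mapsto\Attreve_N(F\cap\Pre(X))$ and proves the complementary inclusion by a rank argument on the approximants (the fixpoint formulation is what the paper later reuses for the mu-calculus characterizations, but it is not needed for the proposition itself). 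For item~2, your parenthetical ``quicker route'' is exactly the paper's proof: compose with the counter of consecutive non-B\"uchi steps capped at $N$, note that the resulting game is a safety game for Eve (equivalently a reachability game for Adam) won positionally, and discard the counter value $N$ because once it is reached the condition $\co\bnd\bucfn$ is already secured, leaving $N$ memory states. Your main slicing route is heavier and, as written, off by one: the countdown states $N-1,\dots,1$ drive only $N-1$ controlled moves, whereas producing $N+1$ consecutive non-B\"uchi vertices requires $N$ of them, so the final move out of the shell $R_1$ is made after the memory has already returned to $A$ and is not governed by the countdown; also the next-move function at intermediate countdown vertices must know which slice's shells to descend. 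Both points are repairable (e.g.\ let state $A$ itself fire the first countdown move at core vertices), and you flag the bookkeeping yourself, but the simple product reduction already gives the stated bound.

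The genuine gap is item~3. A lower bound needs an explicit finite arena together with proofs that Adam wins and that no winning strategy with fewer memory states exists; your text only outlines a plan and explicitly defers ``the actual work'' of calibrating the gadget, so the claim is not established. Moreover, the mechanism you envisage --- Eve pumping a cycle when a memory state of Adam recurs at two distinct values of his countdown --- is not what drives the known example. In the paper's arena the memory is needed to recall the \emph{opponent's} previous choice, not to keep a countdown in phase: in each round Adam announces some $i\in\{1,\dots,N\}$, then Eve picks $j\neq i$ and the play traverses a path of length $N$ whose only B\"uchi vertex sits at position $j$, before returning to the start. Adam wins $\co\bnd\buc(F,N+1)$ with $N$ memory states by always announcing Eve's last choice $j$ (she must then deviate, and deviating upwards, which she is forced to do infinitely often, creates a gap of more than $N+1$ steps between B\"uchi visits), whereas any strategy with fewer states never announces some value $i$, and Eve defeats it by playing that $i$ forever. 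This yields the stated $N-1$ lower bound for bound $N$, and nothing of this shape appears in your proposal.
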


We start by showing that Eve's winning set can be described using a greatest fixpoint,
which allows to define a positional winning strategy.
We define the following sequence $(Z_k)_{k \ge 0}$ of subsets of $V$:
$$\left \{
\begin{array}{lll}
Z_0 = V \\
Z_{k+1} = \Attreve_N (F \cap \Pre(Z_k)) \\
\end{array}
\right.$$
This sequence is decreasing with respect to set inclusion,
so it has a limit denoted by $Z$, equivalently defined as 
the greatest fixpoint of the monotone function 
$X \mapsto \Attreve_N (F \cap \Pre(X))$.

\begin{lemma}
$$Z = \WE(\bnd\bucfn)\ .$$
\end{lemma}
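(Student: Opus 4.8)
The plan is to show the two inclusions $Z \subseteq \WE(\bnd\bucfn)$ and $\WE(\bnd\bucfn) \subseteq Z$ separately, using the greatest-fixpoint characterization of $Z$ on each side.

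For the inclusion $Z \subseteq \WE(\bnd\bucfn)$, I would build an explicit positional winning strategy for Eve on $Z$. Since $Z$ is a fixpoint, $Z = \Attreve_N(F \cap \Pre(Z))$. From any vertex of $Z$, Eve first plays an attractor strategy towards $F \cap \Pre(Z)$, which by definition of $\Attreve_N$ reaches this target within $N$ steps; note this target set is contained in $F$, so a $\buc(F)$-vertex is hit within $N$ steps. Once at a vertex $v \in F \cap \Pre(Z)$, since $v \in \Pre(Z)$ Eve can (if $v \in \VE$) move to $Z$, or (if $v \in \VA$) every successor is already in $Z$; so the play re-enters $Z$ and the argument repeats. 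Thus along any consistent play, between any vertex and the next $F$-vertex there are at most $N$ steps from the start, so $\sup_k \distk(\play,F) \le N$, which is exactly $\bnd\bucfn$. Care is needed at the very beginning of the play (the first $F$-vertex must also be reached within $N$ steps from position $0$), but this is guaranteed because $v_0 \in Z = \Attreve_N(F \cap \Pre(Z))$.

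For the inclusion $\WE(\bnd\bucfn) \subseteq Z$, equivalently $V \setminus Z \subseteq \WA(\co(\bnd\bucfn))$, I would argue that from outside $Z$ Adam can fool the bound $N$. Write $Z$ as the limit of the decreasing sequence $(Z_k)$; a vertex $v \notin Z$ lies outside some $Z_{k+1} = \Attreve_N(F \cap \Pre(Z_k))$ for the least such $k+1$ (so $v \in Z_k \setminus Z_{k+1}$, using that the sequence decreases and $Z_0 = V$). Being outside $\Attreve_N(F \cap \Pre(Z_k))$ means Adam has a strategy that, for $N$ steps, avoids $F \cap \Pre(Z_k)$ while staying inside $Z_k$ — more precisely, he can avoid the set $F \cap \Pre(Z_k)$ for the next $N$ moves (the complement of an $N$-step attractor), and in doing so he never leaves $Z_k$ unless he passes through... here one must be careful and use a co-attractor / rank argument: staying in $Z_k$, every time the play is at an $F$-vertex it is not in $\Pre(Z_k)$ or Adam can steer out of $Z_{k+1}$-membership, so effectively the play can be kept in $Z_k$ while producing a stretch of at least $N$ steps (actually $N+1$) containing no $F$-vertex that matters, forcing $\distk(\play,F) > N$ at some position. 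Since this can be repeated — the index $k$ is non-increasing and Adam restarts the scheme — some position with $\distk > N$ occurs (indeed infinitely often is not needed; one suffices for $\co\bnd\bucfn$, since $\sup_k$ is used), so the play is outside $\bnd\bucfn$.

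The main obstacle is the second inclusion: making rigorous the claim that from $V \setminus Z$ Adam can produce a block of more than $N$ consecutive non-$F$ (or "useless") steps while not accidentally escaping into the favorable region. The clean way is to index vertices outside $Z$ by the stage at which they drop out of the sequence $(Z_k)$, define Adam's strategy to decrease (or keep) this stage while running an "avoid the $N$-step attractor to $F \cap \Pre(Z_k)$" strategy, and check that a play consistent with it either immediately exhibits a window of length $> N$ without a relevant $F$-vertex, or strictly decreases the stage — and since stages are well-founded (the sequence is indexed by naturals, using countability of $V$), the decrease cannot happen forever, so a bad window must appear. I would present this as a single lemma-internal induction on the stage index, mirroring the attractor-decomposition style already used for Lemma~\ref{lem:fixpoint}.
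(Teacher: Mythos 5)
Your first inclusion is exactly the paper's argument (play an attractor strategy to $F \cap \Pre(Z)$, reach it within $N$ steps, then step back into $Z$ and repeat), and it is correct; likewise the overall plan for the converse inclusion -- rank each vertex outside $Z$ by the stage at which it drops out of the sequence $(Z_k)$, let Adam play an ``avoid the $N$-step attractor to $F \cap \Pre(Z_k)$'' strategy, and use well-foundedness of the rank -- is the same route the paper takes.

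The gap is in how you conclude inside that scheme. You claim that a stretch of more than $N$ steps ``containing no $F$-vertex that matters'' (i.e.\ none in $F \cap \Pre(Z_k)$) forces $\distk(\play,F) > N$. It does not: $\distk(\play,F)$ measures the distance to the next vertex of $F$, full stop, and a B\"uchi vertex lying outside $\Pre(Z_k)$ resets that distance just as well, so such a window yields no violation of $\bnd\bucfn$. The dichotomy has to be the paper's: from a vertex of rank $k+1$, Adam plays for $N$ steps a strategy guaranteeing that \emph{any} $F$-vertex reached within these $N$ steps lies outside $\Pre(Z_k)$; then either no $F$-vertex at all occurs in the window, which is the genuine violation, or some $F$-vertex $v \notin \Pre(Z_k)$ occurs, and then (whether $v$ belongs to Eve or to Adam) the next vertex can be forced into $V \setminus Z_k$, so the rank strictly decreases and the scheme restarts from scratch. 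Since ranks are natural numbers, the second case can occur only finitely often, so the first case -- the real bad window -- eventually happens. Relatedly, your remark that ``the play can be kept in $Z_k$'' is backwards: the progress measure is precisely that the play \emph{leaves} $Z_k$ each time an $F$-vertex is hit (and it is not about leaving $Z_{k+1}$, which already fails at the starting vertex). With the dichotomy corrected in this way, your outline becomes the paper's proof.
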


\begin{proof}
We prove both inclusions.
\begin{itemize}
	\item We first show that $Z \subseteq \WE(\bnd\bucfn)$.
Let $\sigma^N$ be a positional strategy 
that ensures from $\Attreve_N (F \cap \Pre(Z))$ to reach $F \cap \Pre(Z)$ within $N$ steps.
We define a strategy $\sigma$ on $Z$ by:
$$\sigma(v) =
\begin{cases}
\sigma^N(v) & \textrm{if } v \in \Attreve_N (F \cap \Pre(Z)) \setminus F \cap \Pre(Z) \\
v' \in Z & \textrm{if } v \in F \cap \Pre(Z)
\end{cases}$$

Consider $\play = v_0 v_1 \ldots$ a play starting from $v_0 \in Z$ consistent with $\sigma$.
By definition of $\sigma^N$ it will reach $F \cap \Pre(Z)$ within $N$ steps, 
say at vertex $v_{k_0}$ for $0 \leq k_0 \leq N$.
Furthermore the play $v_{k_0 + 1} \ldots$ is consistent with $\sigma$
and starts from $v_{k_0 + 1} \in Z$,
so repeating this reasoning by induction, we show that $\play$ visits $F$ infinitely often,
and that the distance to the next B\"uchi vertex remains smaller than $N$.
Thus $\sigma$ is a positional winning strategy for $\bnd\bucfn$ from $Z$.

	\item We now show that $V \setminus Z \subseteq \WA(\bnd\bucfn)$.
Consider a vertex $v$ not in $Z$,
we define its rank to be the smallest $k$ 
such that $v$ does not belong to $Z_k$;
note that the rank cannot be $0$.
A vertex of rang $k + 1$ belongs to $Z_k$ 
but not to $Z_{k + 1}$.
For each $k$ we define a strategy $\tau_k$:
\begin{itemize}
	\item For $k \neq 0$, the strategy $\tau_k$ ensures that
from $V \setminus \Attreve_N (F \cap \Pre(Z_k))$, 
if a B\"uchi vertex $v$ is reached within $N$ steps, 
then it does not belong to $\Pre(Z_k)$.
	\item For $k = 0$, the strategy $\tau_0$ ensures that 
from $V \setminus \Attreve_N (F)$,
no B\"uchi vertex is reached within $N$ steps.
\end{itemize}

We now define a strategy $\tau$ from $Z$: 
from a vertex of rank $k + 1$, play consistently with $\tau_k$ for $N$ steps
or until a B\"uchi vertex $v$ is reached, whichever comes first.
In the first case, Adam wins, and in the second, by definition of $\tau_k$,
$v$ is not in $\Pre(Z_k)$.
Either $v$ belongs to Eve and any successor will be in $V \setminus Z_k$,
or it belongs to Adam and the strategy $\tau$ chooses a successor in $V \setminus Z_k$.
Denote $v'$ the successor of $v$: since it is in $V \setminus Z_k$, it has a smaller rank
than $v$.
From this vertex $v'$, restart from scratch.

We argue that $\tau$ is a winning strategy from $V \setminus Z$.
Indeed, consider a play $\pi = v_0 v_1 \ldots$ from $V \setminus Z$
consistent with $\tau$.
If $v_0$ has rank $k + 1$, then
either within $N$ steps no B\"uchi vertices are visited (hence Adam wins)
or its successor has a lower rank, and the play starting from this successor is consistent with $\tau$.
Since there is no infinite decreasing sequence of integers, 
the first situation occurs and the play $\pi$ does not satisfy the bounded uniform B\"uchi condition.
Hence $\tau$ is a winning strategy from $V \setminus Z$.
\end{itemize}

\hfill\qed
\end{proof}

\vskip2em
So far, we proved that in bounded uniform B\"uchi games, 
Eve has a positional winning strategy from her winning set.

It is not clear from this characterization how to implement a winning finite-memory strategy
for Adam. To prove the finite-memory determinacy for Adam, we rely on a reduction to safety games,
that we present now.
Define the memory structure $\M = (\set{0,\ldots,N}, 0, \up)$ as:
$$\up(i,(v,v')) = 
\begin{cases}
0 & \textrm{if } v \in F \textrm{ or } v' \in F\\
i+1 & \textrm{if } i < N \textrm{ and } v,v' \notin F \\
N & \textrm{otherwise} \\
\end{cases}$$
Intuitively, the memory structure counts the number of steps
since the last visit to a B\"uchi vertex.
Then $(\G,\bnd\bucfn)$ is equivalent to 
$(\G \times \M, \safety(V \times \set{0,\ldots,N-1}))$.
Since in a safety game Adam has a positional winning strategy from his winning set,
we deduce a finite-state winning strategy using $\M$ as memory structure from his winning set in $\G$. 
Moreover, a winning strategy using $\M$ does not make use of the additional memory state $N$,
hence it actually uses $N$ memory states, and not $N+1$.

\vskip1em
Note that the positional result for Eve cannot be obtained from this reduction.
The following example shows that the upper bound given above is (almost) tight.
\begin{example}
\label{ex:memory_lower_bound_adam_uniform_buchi}
Figure~\ref{fig:memory_lower_bound_adam_uniform_buchi} presents
an arena where Adam wins for the condition $\co\bnd\buc(F,N+1)$ using $N$ memory states
and loses with less.
Here $N \geq 2$.
\begin{figure}
\begin{center}
\begin{picture}(95,40)(0,0)
  	\rpnode[polyangle=45,Nmarks=i](0)(5,20)(4,4){$c$}
	\imark[iangle=200](0)
	\imark[iangle=160](0)

  	\node[linecolor=White](1)(20,40){}
  	\node[Nw=6,Nh=6](i)(20,20){$v_{\neq i}$}
  	\node[linecolor=White](b)(20,0){}

  	\drawedge(0,1){}
  	\drawedge(0,i){}
  	\drawedge(0,b){}

 	\gasset{Nw=.8,Nh=.8,Nfill=y}
 	\node(pt1)(20,12){}
 	\node(pt2)(20,8){}
 	\node(pt3)(20,4){}
 	\node(pt4)(20,28){}
 	\node(pt5)(20,32){}
 	\node(pt6)(20,36){}

 	\gasset{Nw=6,Nh=6,Nfill=n}

  	\node[Nmarks=r](11)(40,35){$1$}
  	\node(1b)(80,35){$N$}

  	\drawedge[dash={3 1 1 1}0](11,1b){}

  	\node(i1)(40,25){$1$}
  	\node[Nmarks=r](ii)(55,25){$i$}
  	\node(ib)(80,25){$B$}

  	\drawedge[dash={3 1 1 1}0](i1,ii){}
  	\drawedge[dash={3 1 1 1}0](ii,ib){}

  	\node(j1)(40,15){$1$}
  	\node[Nmarks=r](jj)(65,15){$j$}
  	\node(jb)(80,15){$N$}

  	\drawedge[dash={3 1 1 1}0](j1,jj){}
  	\drawedge[dash={3 1 1 1}0](jj,jb){}

  	\node(b1)(40,5){$1$}
  	\node[Nmarks=r](bb)(80,5){$N$}

  	\drawedge[dash={3 1 1 1}0](b1,bb){}

  	\drawedge(i,11){}
  	\drawedge(i,j1){}
  	\drawedge(i,b1){}
  	
  	\node[linecolor=White](invisible)(95,20){\large{\ to $c$}}
  	\drawedge(1b,invisible){}
  	\drawedge(ib,invisible){}
  	\drawedge(jb,invisible){}
  	\drawedge(bb,invisible){}
\end{picture}
\end{center}
\caption{An arena where Adam needs $N$ memory states to win a bounded uniform B\"uchi game.}
\label{fig:memory_lower_bound_adam_uniform_buchi}
\end{figure}
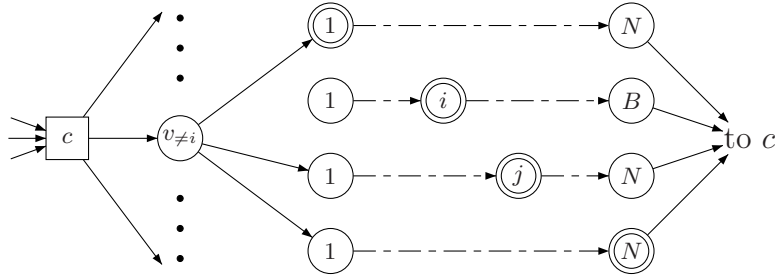
A play consists in repeating infinitely many times the following interaction:
first, from $c$ Adam chooses an $i$ from $\set{1,\ldots,N}$, 
then from $v_{\neq i}$ Eve chooses a $j$ different from $i$,
and follows a path of length $N$ where only the $j$\textsuperscript{th} vertex is B\"uchi.
Adam wins using $N$ memory states by playing the last choice of Eve: 
this way, either Eve chooses a $j$ larger than $i$ 
so no B\"uchi vertices will be visited within $N + 2$ steps, 
or she chooses a $j$ smaller than $i$. 
The first case occurs infinitely many times, 
so the uniform B\"uchi condition is violated.
If Adam uses less than $N$ memory states, 
then there exists an $i$ that he will never choose:
Eve wins $\bnd\buc(F,N+1)$ by choosing $i$ every time.
\end{example}

\subsection{Strategy complexity for uniform B\"uchi games}
\label{subsec:strategy_complexity_uniform_buchi}
Our second step is about uniform B\"uchi games.
In this subsection, we obtain the following results:

\begin{proposition}[Strategy complexity for uniform B\"uchi games]
\label{prop:mem_uniform_buchi}
For all uniform B\"uchi games with bound $N$, the following assertions hold:
\begin{enumerate}
	\item Eve has a positional winning strategy from her winning set.
	\item Adam has a finite-memory winning strategy with $N+1$ memory states from his winning set. 
	\item In general, winning strategies for Adam require at least $N-1$ memory states, 
even over finite arenas, for $N \geq 2$.
\end{enumerate}
\end{proposition}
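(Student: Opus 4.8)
The plan is to reduce everything to the bounded uniform B\"uchi results of Proposition~\ref{prop:mem_bounded_uniform_buchi}, exploiting that the uniform B\"uchi condition $\bucfn$ is prefix-independent (and Borel), whereas $\bnd\bucfn$ is not. Write $W = \WE(\bnd\bucfn)$ for Eve's winning region in the bounded uniform B\"uchi game; by Proposition~\ref{prop:mem_bounded_uniform_buchi}(1) Eve has there a positional winning strategy $\sigma_b$, and the explicit construction of $\sigma_b$ (through the greatest fixpoint $Z$ above) keeps every consistent play inside $W$. The heart of the argument is to prove the identity
\[
\WE(\bucfn) \;=\; \Attreve(W),
\]
from which the strategy complexity of both players is read off.

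For the inclusion $\Attreve(W) \subseteq \WE(\bucfn)$ I would combine a positional attractor strategy to $W$ with a restart of $\sigma_b$: the play reaches $W$ in finitely many steps and then stays inside $W$ with all distances bounded by $N$, so, $\bucfn$ being prefix-independent, the resulting \emph{positional} strategy wins $\bucfn$; this already gives part~1 once the reverse inclusion is known. For $\WE(\bucfn) \subseteq \Attreve(W)$, I would set $U = V \setminus \Attreve(W)$ and work in the subarena $\A[U]$ (the complement of an Eve-attractor is an Eve-trap, so $\A[U]$ is a subarena and Adam cannot leave $U$). The key observation is that every vertex of $\A[U]$ lies in $\WA(\A[U],\bnd\bucfn)$: if Eve won the bounded game from some $u\in U$ inside $\A[U]$, that strategy read in $\A$ would keep the play in $U$ and place $u$ in $\WE(\A,\bnd\bucfn)=W\subseteq\Attreve(W)$, a contradiction. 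By determinacy Adam wins $\co\bnd\bucfn$ from every vertex of $\A[U]$; such a strategy forces, in finitely many steps, a ``window'' of $N+1$ consecutive vertices outside $F$. Restarting the strategy after each completed window, Adam forces infinitely many windows, so $\limsup_k \distk(\pi,F) \ge N+1$ and he wins $\co\bucfn$ from all of $\A[U]$; hence Eve wins $\bucfn$ from no vertex of $U$. This proves the identity and thereby part~1.

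For part~2, the identity shows Adam's winning region is exactly $U$, and the strategy just described --- play a $\co\bnd\bucfn$-winning strategy on $\A[U]$ and reset it each time a window completes --- wins $\co\bucfn$ from all of $U$. Taking for the inner strategy the $N$-memory-state strategy of Proposition~\ref{prop:mem_bounded_uniform_buchi}(2), the reset costs exactly one extra memory state --- the state in which the distance counter first reaches $N$, treated thereafter like the initial state --- for a total of $N+1$ memory states. Equivalently, one may observe that $(\G,\bucfn)$ is isomorphic to $(\G\times\M,\co\buc(F'))$, where $\M$ is the $(N+1)$-state distance-counter memory and $F'$ the pairs whose counter equals $N$: this is a B\"uchi game for Adam, and the B\"uchi player has a positional winning strategy over arbitrary arenas, which again yields an $(N+1)$-memory strategy for Adam.

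For part~3 I would simply reuse Example~\ref{ex:memory_lower_bound_adam_uniform_buchi}: on that finite arena Adam wins $\co\bnd\bucfn$ with $N-1$ memory states, and the counter-strategy described there for Eve, with fewer Adam-memory-states, wins $\bnd\bucfn$, hence also $\bucfn$ since $\bnd\bucfn\subseteq\bucfn$; so the same arena witnesses the lower bound $N-1$ for Adam in the uniform B\"uchi game (vacuous for $N=2$, genuine for $N\ge 3$). The part I expect to require the most care is the bookkeeping forced by the non-prefix-independence of $\bnd\bucfn$: checking that $\sigma_b$ really stays in $W$, that an $\A[U]$-strategy can be replayed verbatim inside $\A$ without spuriously resetting the distance counter, and that concatenating successive copies of Adam's bounded strategy across windows is a legitimate strategy. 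Once these are settled, the memory count in part~2 ($N+1 = N$ states of the inner strategy plus one reset marker) and the reuse of the example in part~3 are routine.
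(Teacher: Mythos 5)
Your proof of part~1 hinges on the identity $\WE(\bucfn) = \Attreve(\WE(\bnd\bucfn))$, and this identity is false: the paper itself exhibits a counterexample (Figure~\ref{fig:bounded_uniform_vs_uniform}). Concretely, take three vertices controlled by Adam, $v_0 \to v_1 \to v_2$ with self-loops on $v_0$ and $v_2$, where $v_0,v_2 \in F$, $v_1 \notin F$, and $N = 0$: then $W = \WE(\bnd\buc(F,0)) = \set{v_2}$ and $\Attreve(W) = \set{v_1,v_2}$, yet Eve wins $\buc(F,0)$ from $v_0$, since every play from $v_0$ eventually stays forever on a B\"uchi self-loop. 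The precise flaw is in your ``key observation'' for the inclusion $\WE(\bucfn) \subseteq \Attreve(W)$: the set $U = V \setminus \Attreve(W)$ is a trap for \emph{Eve}, not for Adam --- Eve's edges from $U$ stay in $U$, but Adam may have edges leaving $U$, and exactly those edges are deleted when passing to $\A[U]$. Hence a strategy of Eve winning $\bnd\bucfn$ inside $\A[U]$ does not ``read in $\A$'' (Adam can step out of $U$, where the strategy says nothing), and no contradiction with $u \notin W$ is obtained; in the example above Eve trivially wins $\bnd\buc(F,0)$ in $\A[\set{v_0}]$ although $v_0 \notin W$. Consequently the hypothesis needed for your window-iteration argument (Adam wins $\co\bnd\bucfn$ from every vertex of $\A[U]$) fails, and with it both the claimed characterization of the winning regions and the opening of your part~2, which takes Adam's winning region to be $U$.

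A single attractor step is simply not enough; the correct route (the paper's) is to iterate it via Lemma~\ref{lem:fixpoint} with $\Xi(\A') = \WE(\A',\bnd\bucfn)$: one repeatedly removes the attractor of the bounded winning region computed in the \emph{remaining} subarena, and the union of the resulting positional strategies on the slices is positional and winning for $\bucfn$. The two facts needed are $\WE(\bnd\bucfn) \subseteq \WE(\bucfn)$ and ``if $\WE(\A',\bucfn) \neq \emptyset$ then $\WE(\A',\bnd\bucfn) \neq \emptyset$''; your idea of concatenating Adam's bound-fooling strategies across windows is exactly the proof of the latter, but it must be applied in an arena whose own bounded winning region for Eve is empty --- the hypothesis your $\A[U]$ need not satisfy. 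The rest of your proposal is sound and coincides with the paper: the product with the $(N+1)$-state distance counter turns $\bucfn$ into a co-B\"uchi condition, positional determinacy of co-B\"uchi games over arbitrary arenas gives Adam a winning strategy with $N+1$ memory states from his actual winning set (which by determinacy is $V \setminus \WE(\bucfn)$, not $U$), and Example~\ref{ex:memory_lower_bound_adam_uniform_buchi} carries over to give the $N-1$ lower bound for the uniform condition.
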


The bounded uniform B\"uchi conditions are the prefix-dependent counterpart 
of the uniform B\"uchi conditions: 
$$\bucfn = V^* \cdot \bnd\bucfn \ .$$
However, this does not imply the equality between 
$\WE(\bucfn)$ and $\Attreve(\WE(\bnd\bucfn))$.
One inclusion holds: $$\Attreve(\WE(\bnd\bucfn)) \subseteq \WE(\bucfn)\ ,$$
but the other fails, as shown in Figure~\ref{fig:bounded_uniform_vs_uniform}.

\begin{figure}
\begin{center}
\begin{picture}(40,10)(0,0)
	\rpnode[Nmarks=i,polyangle=45](v_0)(0,2)(4,4){$F$}
	\rpnode[polyangle=45](v_1)(15,2)(4,4){}
	\rpnode[polyangle=45](v_2)(30,2)(4,4){$F$}

	\drawloop[loopdiam=4,loopangle=90](v_0){}
	\drawedge(v_0,v_1){}
	\drawedge(v_1,v_2){}
	\drawloop[loopdiam=4,loopangle=90](v_2){}
\end{picture}
\end{center}
\caption{$\Attreve(\WE(\bnd\buc(F,0))) \subsetneq \WE(\buc(F,0))$}
\label{fig:bounded_uniform_vs_uniform}
\end{figure}

This shows that one iteration of the bounded uniform B\"uchi winning set does not
give the whole uniform B\"uchi winning set.
However, the following properties hold:
\begin{enumerate}
	\item $\WE(\bnd\bucfn) \subseteq \WE(\bucfn)$,
	\item if $\WE(\bucfn)$ is non-empty then $\WE(\bnd\bucfn)$ is non-empty.
\end{enumerate}
The first item is clear, we prove the second. 
Assume $\WE(\bnd\bucfn) = \emptyset$,
then $\WA(\bnd\bucfn) = V$: from everywhere Adam can fool the bound $N$.
Iterating such strategies, he can fool the bound $N$ infinitely often,
so $\WA(\bucfn) = V$, which implies $\WE(\bucfn) = \emptyset$.

\noindent We apply Lemma~\ref{lem:fixpoint} with the operator $\Xi$ 
that associates to each subarena $\A'$ the set $\WE(\A',\bnd\bucfn)$.
The first two properties \textit{1.} and \textit{2.} have been proved above,
and the third one is a consequence of Proposition~\ref{prop:mem_bounded_uniform_buchi},
since $V^* \cdot \bnd\bucfn \subseteq \bucfn$.
It follows that in uniform B\"uchi games, Eve has a positional winning strategy from her winning set.

\vskip1em
The proof of the results for Adam follows the same lines as above.
We first lift up the reduction, which is now from uniform B\"uchi games to CoB\"uchi games.
The memory structure is the same as above, and now $(\G,\bucfn)$ is equivalent to 
$(\G \times \M, \co\buc(V \times \set{0,\ldots,N-1}))$.
Since in a CoB\"uchi game, Adam has a positional winning strategy from his winning set, 
we deduce a finite-state winning strategy using $\M$ as memory structure from his winning set in $\G$. 
Notice that this gives an upper bound of $N+1$ memory states, whereas in the case of
bounded uniform B\"uchi games, we had an upper bound of $N$ memory states.

We now discuss the lower bound: we can easily see that the statements 
about the game presented in Example~\ref{ex:memory_lower_bound_adam_uniform_buchi}
hold true for bounded uniform B\"uchi conditions as well as for uniform B\"uchi conditions,
hence the same lower bound of $N-1$ applies.

\subsection{Strategy complexity for finitary B\"uchi games}
Our third step is about finitary B\"uchi games.
In this subsection, we obtain the following results:

\begin{proposition}[Strategy complexity for finitary B\"uchi games]
\label{prop:mem_fin_buchi}
For all finitary B\"uchi games, the following assertions hold:
\begin{enumerate}
	\item Eve has a positional winning strategy from her winning set.
	\item In general winning strategies for Adam require infinite memory, 
even for pushdown arenas.
\end{enumerate}
\end{proposition}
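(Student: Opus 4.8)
\noindent The plan is to obtain assertion~1 by a fixpoint iteration on top of the uniform B\"uchi results, and assertion~2 by exhibiting a pushdown arena --- the one of Example~\ref{ex:intro} --- in which Adam wins but no finite-memory strategy does.

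\medskip\noindent\textbf{Assertion 1.} The key observation is that $\fin\bucf = \bigcup_{N \in \N} \bucfn$, and that this union is Borel and prefix-independent, because adding or removing a finite prefix affects the distance sequence $\distk(\play,F)$ in only finitely many positions and hence does not change its $\limsup$. I would then apply Lemma~\ref{lem:fixpoint} to the game $(\A,\fin\bucf)$ with the operator $\Xi$ sending a subarena $\A'$ to $\Xi(\A') = \bigcup_{N \in \N} \WE(\A',\bucfn)$. Property~1 of Lemma~\ref{lem:fixpoint}, that $\Xi(\A') \subseteq \WE(\A',\fin\bucf)$, is immediate from $\bucfn \subseteq \fin\bucf$. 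Property~3, that Eve has a positional winning strategy for $\fin\bucf$ from $\Xi(\A')$, follows by applying Lemma~\ref{lem:union} inside $\A'$ to the family $(\bucfn)_{N \in \N}$: each $\bucfn$ is Borel and admits a positional Eve winning strategy from $\WE(\A',\bucfn)$ by Proposition~\ref{prop:mem_uniform_buchi}, and their union $\fin\bucf$ is prefix-independent. Lemma~\ref{lem:fixpoint} then delivers a positional winning strategy for Eve from her winning set in $(\A,\fin\bucf)$.

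\medskip\noindent The step I expect to be the main obstacle is Property~2 of Lemma~\ref{lem:fixpoint}: if $\WE(\A',\fin\bucf) \ne \emptyset$ then $\Xi(\A') \ne \emptyset$. I would prove the contrapositive using determinacy. If $\Xi(\A') = \emptyset$, then $\WE(\A',\bucfn) = \emptyset$ for every $N$, so by determinacy $\WA(\A',\bucfn) = V'$, i.e.\ Adam wins $\co\bucfn$ from every vertex of $\A'$, for each $N$. The subtle point --- and the reason infinite arenas are genuinely harder here --- is that winning each $\co\bucfn$ separately does not formally imply winning $\Pi \setminus \fin\bucf = \set{\play \mid \limsup_k \distk(\play,F) = \infty}$, since the intersection of the decreasing family $(\co\bucfn)_N$ need not be won just because each member is. The resolution: $\co\bucfn$ forces, against every Eve strategy, at least one stretch of $N+1$ consecutive non-B\"uchi vertices, and Adam can \emph{recognize such a stretch after finitely many steps}. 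So Adam plays in phases: in phase $N$ he follows a strategy winning $\co\bucfn$ from the current vertex (which exists since $\WA(\A',\bucfn) = V'$) until he has witnessed $N+1$ consecutive non-B\"uchi vertices, and then moves to phase $N+1$. Each phase terminates in finitely many steps, so infinitely many phases occur and the distances grow without bound along every resulting play, i.e.\ $\limsup_k \distk(\play,F) = \infty$. Hence $\WA(\A',\fin\bucf) = V'$, that is $\WE(\A',\fin\bucf) = \emptyset$, as required.

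\medskip\noindent\textbf{Assertion 2.} I would reuse the pushdown arena of Example~\ref{ex:intro}, where Adam wins the finitary B\"uchi game from the leftmost vertex, and show that no finite-memory strategy wins for him. Fix a memory structure $\M$ with a finite set $M$ of states. Since Eve has no moves in this arena, a strategy for Adam using $\M$ determines a single play, which decomposes into rounds each starting at the leftmost vertex. Along the ascending part of a round, the traversed edges form the same fixed sequence every time, so the step at which Adam turns down is a deterministic function of the memory state reached at the beginning of the round, and that state ranges over the finite set $M$. If for some starting state that actually occurs Adam never turns down, the play stays forever in B\"uchi vertices and Eve wins; otherwise the lengths of all rounds are bounded by a constant depending only on $\M$, so $\limsup_k \distk(\play,F) < \infty$ and again Eve wins. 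Thus every finite-memory strategy of Adam is losing, so his winning strategies necessarily use infinite memory, even over this pushdown arena.
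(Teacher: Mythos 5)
Your proof is correct and follows essentially the same route as the paper: positionality via Lemma~\ref{lem:fixpoint} with $\Xi(\A') = \bigcup_N \WE(\A',\bucfn)$, using Lemma~\ref{lem:union} together with Proposition~\ref{prop:mem_uniform_buchi} for the third property and the phase-switching Adam strategy (play $\tau_N$ until a stretch of $N+1$ non-B\"uchi vertices is witnessed, then switch) for the second. Your treatment of Assertion~2 merely spells out in detail the lower-bound argument for the arena of Example~\ref{ex:intro}, which the paper only cites.
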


Let $\G = (\A,\fin\bucf)$ be a finitary B\"uchi game.
We denote by $\Xi$ the operator that associates to a subarena $\A'$ 
the set of vertices $\bigcup_N \WE(\A',\bucfn)$.
To apply Lemma~\ref{lem:fixpoint}, we prove the following properties,
for all subarenas $\A'$:
\begin{enumerate}
	\item $\Xi[\A'] \subseteq \WE(\A',\fin\bucf)$.
	\item If $\WE(\fin\bucf)$ is non-empty then $\Xi[\A']$ is non-empty.
	\item Eve has a positional winning strategy from $\Xi[\A']$ in $(\A',\fin\bucf)$.
\end{enumerate}

The first item is clear, with the following interpretation in mind:
$\Xi[\A']$ is the set of vertices where Eve can announce a bound $N$ upfront and claim 
``I will win for the condition $\bucfn$''.
However, it may be that even if Eve wins, 
she is never able to announce a bound: such a situation happens 
in Example~\ref{ex:boundunknown}.

\begin{example}
\label{ex:boundunknown}
Figure~\ref{fig:boundunknown} presents an infinite one-player arena,
where Eve wins yet is not able to announce a bound.
A loop labeled $n$ denotes a loop of length $n$,
where a B\"uchi vertex is visited every $n$ steps.
In this game, as long as Adam decides to remain in the top path,
Eve cannot claim that she will win for some uniform B\"uchi condition.
\begin{figure}
\begin{center}
\begin{picture}(65,24)(0,0)
	\rpnode[Nmarks=i,polyangle=45](v_0)(0,14)(4,4){$F$}

	\gasset{Nadjust=w}

	\rpnode[polyangle=45](v_1)(20,24)(4,4){$F$}
	\rpnode[polyangle=45](v_2)(40,24)(4,4){$F$}
	\put(45,25){\begin{Huge}$\ldots$\end{Huge}}
	\rpnode[polyangle=45](v_n)(60,24)(4,4){$F$}
	\put(65,25){\begin{Huge}$\ldots$\end{Huge}}

	\rpnode[polyangle=45](u_1)(20,9)(4,4){$F$}

	\rpnode[polyangle=45](u_2)(40,12)(4,4){$F$}
	\rpnode[polyangle=45](u_2b)(40,0)(4,4){}

	\rpnode[polyangle=45](u_n)(60,15)(4,4){$F$}
	\put(65,15){\begin{Huge}$\ldots$\end{Huge}}

	\drawedge(v_0,v_1){}
	\drawedge(v_1,v_2){}
	\drawedge(v_1,u_1){}
	\drawedge(v_2,u_2){}
	\drawedge[curvedepth=5](u_2,u_2b){}
	\drawedge[curvedepth=5](u_2b,u_2){}
	\drawedge(v_n,u_n){}

	\drawloop[loopdiam=5,loopangle=-90](u_1){}
	\drawloop[loopdiam=10,loopangle=-90,dash={3 1.5}{1.5}](u_n){$n$}
\end{picture}
\end{center}
\caption{An infinite arena where Eve cannot predict the bound.}
\label{fig:boundunknown}
\end{figure}
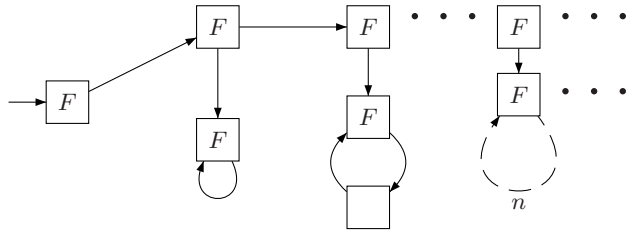
\end{example}

We prove the second item, by the contrapositive.
Assume that for all $N$, the winning set $\WE(\bucfn)$ is empty,
so Adam wins for the condition $\co\bucfn$ from everywhere:
let $\tau_N$ be a winning strategy for Adam.
From any vertex, the strategy $\tau_N$ fools the bound $N$,
\textit{i.e} for all plays consistent with $\tau_N$,
there is a sequence of $N$ consecutive non-B\"uchi vertices.
Playing in turns $\tau_1$ until such a sequence occurs, 
then $\tau_2$, and so on, ensures to spoil the condition $\fin\bucf$.
Hence Adam wins everywhere for the condition $\co\fin\bucf$,
which implies $\WE(\fin\bucf) = \emptyset$.

We now prove the third item. 
We know from Proposition~\ref{prop:mem_uniform_buchi} that
Eve has a positional winning strategy from $\WE(\bucfn)$ for the condition $\bucfn$.
Now thanks to Lemma~\ref{lem:union} we deduce that she has a positional
winning strategy from $\bigcup_N \WE(\bucfn)$ for the condition $\bigcup_N \bucfn$ (that is, $\fin\bucf$).

\vskip1em
\noindent 
By Lemma~\ref{lem:fixpoint}, in finitary B\"uchi games, 
Eve has a positional winning strategy from her winning set,
and the winning region for finitary B\"uchi is obtained
as the least fixpoint of the operator $\Xi$.

An arena where Adam needs infinite memory to win in a finitary B\"uchi game was already presented and discussed in 
Figure~\ref{fig:difference_buchi_finbuchi}.

\vskip1em
We summarize in the following theorem
the winning sets characterizations obtained 
for the three variants of B\"uchi conditions,
using mu-calculus formulae with infinite disjunction.

\begin{theorem}[Characterizations of the winning sets]
$$\WE(\bnd\bucfn) = \nu Z \cdot \Attreve_N (F \cap \Pre(Z))\ ,$$
$$\WE(\bucfn) = \mu Y \cdot \nu Z \cdot \Attreve_N ((F \cup Y) \cap \Pre(Z))\ ,$$
$$\WE(\fin\bucf) = \mu X \cdot \left(\bigcup_{N \in \N} \mu Y \cdot \nu Z \cdot \Attreve_N((F \cup Y \cup X) \cap \Pre(Z))\right)\ .$$
\end{theorem}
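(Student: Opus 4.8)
The plan is to extract the three formulae directly from the fixpoint iterations already carried out in this section, reading off the $\mu$ and $\nu$ operators from the corresponding lemmas and propositions. For the first identity, I would simply invoke the lemma stating $Z = \WE(\bnd\bucfn)$: by construction $(Z_k)_{k \ge 0}$ is the decreasing iteration of $X \mapsto \Attreve_N(F \cap \Pre(X))$ starting from $V$, so its limit is exactly the greatest fixpoint $\nu Z \cdot \Attreve_N(F \cap \Pre(Z))$. For the second identity, recall that Section~\ref{subsec:strategy_complexity_uniform_buchi} applies Lemma~\ref{lem:fixpoint} with $\Xi(\A') = \WE(\A',\bnd\bucfn)$; the slicing construction of Lemma~\ref{lem:fixpoint} peels off attractors to $\Xi$-sets, i.e.\ iterates the operator $Y \mapsto \Attreve(\WE(\A[V \setminus Y'], \bnd\bucfn))$ as a least fixpoint. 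The point is to observe that peeling the bounded-uniform-B\"uchi winning set slice by slice, while each slice is itself computed by the inner $\nu Z$, amounts to nesting: $\mu Y \cdot \nu Z \cdot \Attreve_N((F \cup Y) \cap \Pre(Z))$, where the added disjunct $Y$ records the vertices already secured in lower slices (from which Eve can "fall through" and still win). For the third identity, I would likewise unfold the outer $\mu X$ from the application of Lemma~\ref{lem:fixpoint} to finitary B\"uchi with $\Xi[\A'] = \bigcup_N \WE(\A',\bucfn)$: each slice contributes $\bigcup_N$ of the uniform-B\"uchi winning set of the current subarena, which by the second identity is $\bigcup_N \mu Y \cdot \nu Z \cdot \Attreve_N((F \cup Y \cup X) \cap \Pre(Z))$, the extra disjunct $X$ playing the same "already-won" role at the outer level.

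The key steps, in order, are: (i) identify the first formula with the greatest-fixpoint characterization from the lemma preceding Proposition~\ref{prop:mem_bounded_uniform_buchi}; (ii) verify that when Lemma~\ref{lem:fixpoint} peels attractors to $\WE(\A',\bnd\bucfn)$-slices, the resulting least fixpoint is correctly expressed by adding the variable $Y$ as a disjunct inside the $\Attreve_N(\cdot \cap \Pre(\cdot))$ operator — this is the step where one must check that "$F$ or already-won" is the right set to attract towards, using the containment $V^* \cdot \bnd\bucfn \subseteq \bucfn$ already noted; (iii) repeat the same bookkeeping one level up for $\fin\bucf$, checking that the outer union over $N$ commutes with the outer $\mu X$ in the sense that $X$ enters as a disjunct alongside $F$ and $Y$.

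The main obstacle I expect is making step (ii) rigorous: one has to argue that the two-level nested fixpoint $\mu Y \cdot \nu Z \cdot \Attreve_N((F \cup Y) \cap \Pre(Z))$ really does compute $\WE(\bucfn)$ and not merely an under- or over-approximation. The subtlety is that the slicing argument of Lemma~\ref{lem:fixpoint} produces the winning set as a transfinite (here, countable) union of attractors to $\Xi$-values taken in successively smaller subarenas, and one must check that introducing $Y$ as an extra B\"uchi-like target faithfully simulates "restart in the subarena with the lower slices removed, but allow falling back into them." Concretely, I would show both inclusions: a play consistent with the strategy synthesized from the unfolded formula eventually stabilizes in one slice (monotonicity of the outer iteration), where the inner $\nu Z$ guarantees the uniform bound $N$; conversely any vertex in $\WE(\bucfn)$ is captured at some stage of the outer iteration, by induction on the rank in the attractor decomposition. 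Once step (ii) is settled, step (iii) is a routine adaptation with one further level of nesting and the replacement of a fixed $N$ by the infinite disjunction $\bigcup_{N \in \N}$. \hfill\qed
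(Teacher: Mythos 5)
Your proposal follows essentially the same route as the paper: the paper offers no separate argument for this theorem, presenting it explicitly as a summary of the three preceding constructions (the greatest-fixpoint lemma for $\bnd\bucfn$, and the two applications of Lemma~\ref{lem:fixpoint} with $\Xi(\A') = \WE(\A',\bnd\bucfn)$ and $\Xi[\A'] = \bigcup_N \WE(\A',\bucfn)$), which is exactly what you propose to read off. Your step (ii) -- checking that adding $Y$ (resp.\ $X$) as a disjunct inside $\Attreve_N(\cdot \cap \Pre(\cdot))$ faithfully simulates the attractor-slicing in shrinking subarenas -- is precisely the bookkeeping the paper leaves implicit, and your plan to settle it by proving both inclusions with induction on the slice/attractor rank is the right one.
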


\subsection{Strategy complexity for bounded parity games}
Our fourth step is about bounded parity games.
In this subsection, we obtain the following results:

\begin{proposition}[Strategy complexity for bounded parity games]
\label{prop:mem_bounded_parity}
For all bounded parity games, the following assertions hold:
\begin{enumerate}
	\item Eve has a finite-memory winning strategy that uses $\dhalf$ memory states from her 
	winning set, where $\ell$ is the number of odd colors.
	\item In general, winning strategies for Eve from her winning set require two memory states
	(\textit{i.e}, positional strategies do not suffice for winning).
\end{enumerate}
\end{proposition}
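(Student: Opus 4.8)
The plan is to reduce a bounded parity game to a finitary B\"uchi game over an expanded arena, where the memory structure has exactly $\dhalf$ states — one for each odd colour, plus one — and then invoke Proposition~\ref{prop:mem_fin_buchi} (Eve is positional for finitary B\"uchi games) together with the key property of synchronised products recalled in the Definitions section. This explains the bound $\dhalf$: it is precisely the size of the memory structure. Part~(2) is then handled separately by a small explicit arena.

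Concretely, I would define a memory structure $\M$ whose states are $\perp$ together with one state $p$ for each odd colour $p$; thus $|\M| = \dhalf$ where $\ell$ is the number of odd colours. Intuitively $\M$ monitors the \emph{smallest currently pending request}: $\perp$ means ``no pending request'', and state $p$ means ``$p$ is the smallest odd colour seen so far that has not yet been answered''. The update is: on reading an odd colour $p$, move to the minimum of the current state and $p$ (with $\min(\perp,p)=p$); on reading an even colour $q$, move to $\perp$ if $q$ is at most the current state (in that case $q$ answers every pending request at once), and stay otherwise. Some care is needed for the colour of the initial vertex, which is absorbed into a suitable choice of initial memory state; I will gloss over this routine bookkeeping. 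Let $F'$ be the B\"uchi set $\set{(v,\perp)}$ in $\A\times\M$.

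The core claim is that, under the natural bijection between plays of $\A$ and plays of $\A\times\M$, a play $\play$ satisfies $\bnd\parp$ if and only if its image satisfies $\fin\buc(F')$ (equivalently $\bnd\buc(F')$, since these coincide for B\"uchi sets). The easy direction, $\fin\buc(F')\Rightarrow\bnd\parp$, goes: if $F'$ is revisited with gap at most $N'$, then each time a colour $p$ is read the memory has some value $\le p$ and returns to $\perp$ within $N'$ steps, which forces an even colour $\le p$ within $N'$ steps, so $\sup_k \distk(\play,c)\le N'+O(1)$. The harder direction, and the step I expect to be the main obstacle, is $\bnd\parp\Rightarrow\bnd\buc(F')$: given a uniform bound $N$ on $\distk(\play,c)$, one must show $F'$ is revisited within $\ell\cdot N$ steps from everywhere. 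The argument is a ``cascade'' of length at most $\ell$: from any position where the memory is a non-$\perp$ state $q$, within $N$ steps an even colour $\le q$ must be read (the pending request of colour $q$ is answered in time), and at that moment the memory either drops to $\perp$ or strictly decreases; since the non-$\perp$ memory values lie among the $\ell$ odd colours, after at most $\ell$ such phases the memory reaches $\perp$. This is precisely where the number of odd colours enters, and it is the delicate point.

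Granting the claim, the games $(\A,\bnd\parp)$ and $(\A\times\M,\fin\buc(F'))$ have matching winning regions, so by Proposition~\ref{prop:mem_fin_buchi} Eve has a positional winning strategy from her winning set in the expanded game, and by the synchronised-product property this yields a winning strategy in $(\A,\bnd\parp)$ using $\M$ as memory, i.e.\ with $\dhalf$ memory states, which is assertion~(1). For assertion~(2), I would exhibit a small finite arena on which Eve wins the bounded parity game but no positional strategy does: the arena is designed so that, along the plays Adam can force, Eve must alternate between two distinct ``responses'' in order to keep two competing request--response distances simultaneously bounded, whereas any fixed choice at the relevant Eve-controlled vertex lets one of the two distances grow unboundedly; since winning strategies using $\M$ as memory correspond to positional strategies on $\A\times\M$, two memory states then turn out to suffice, matching the upper bound of $\dhalf = 2$ in this case.
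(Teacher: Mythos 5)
Your part~(1) is essentially the paper's own proof: the same memory structure with $\dhalf$ states tracking the lowest pending request (the paper uses $\set{1,3,\ldots,d-1}\cup\set{d}$, with $d$ playing the role of your $\perp$), the same B\"uchi set, the same play-wise equivalence between $\bnd\parp$ and the bounded B\"uchi condition on $\A\times\M$ (your ``cascade of at most $\ell$ phases'' is exactly the paper's argument that the set of open requests can decrease at most $\ell$ times, giving the bound $N\cdot\ell$), and the same conclusion via Proposition~\ref{prop:mem_fin_buchi} plus the synchronised-product property. So for the upper bound there is nothing to object to.

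Part~(2) contains a genuine gap: you propose a \emph{finite} arena witnessing that positional strategies do not suffice, but no such finite arena exists -- the paper explicitly notes that over finite arenas Eve has positional winning strategies for this condition~\cite{ChatterjeeHenzingerHorn09}, and its own witness (Example~\ref{ex:memory_needed_bounded_parity}) is an infinite arena. The mechanism you describe cannot be realised with finitely many vertices: for a fixed positional choice of Eve to make a request--response distance ``grow unboundedly'', Adam must be able to delay the answer to a pending request for longer and longer, which in a finite arena means he controls a cycle whose colours do not answer that request. But then Adam could also stay in that cycle forever, making the distance infinite against \emph{every} strategy of Eve, so the starting vertex would not be in $\WE(\bnd\parp)$ at all and the example collapses. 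The paper's construction avoids this by using an infinite arena in which the $n$\textsuperscript{th} round contains a path of length $n$ (so delays grow with the round index without Adam ever being able to stall forever), and the two memory states are needed to decide whether to terminate the play through colour $2$ (when the pending request is $3$) or to continue through colour $0$ (when it is $1$); answering $2$ while $1$ is pending leaves that request unanswered forever, and always answering $0$ lets the response time of the requests $3$ diverge. Your lower-bound argument should therefore be replaced by an infinite arena of this kind; the rest of your sketch stands.
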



We present a reduction from bounded parity games to bounded B\"uchi games.
Let $\G = (\A,\bnd\parp)$ be a bounded parity game equipped 
with the coloring function $c : V \rightarrow [d]$, and assume that $d$ is even.
Define the memory structure 
$\M = (\set{1,3,\ldots,d-1} \cup \set{d}, m_0, \up)$,
where:
$$\up(m,(v,v')) = 
\begin{cases}
m & \textrm{if } c(v') \geq m \\
c(v') & \textrm{if } c(v') < m \textrm{ and } c(v') \textrm{ is odd} \\
d & \textrm{if } c(v') < m \textrm{ and } c(v') \textrm{ is even} \\
\end{cases}$$
$$m_0 = 
\begin{cases}
c(v_0) & \textrm{if } c(v_0) \textrm{ is odd}\\
d & \textrm{otherwise} \\
\end{cases}$$
Intuitively, this memory structure keeps track of 
the most important pending request.
It will be used several times in the paper, in Section~\ref{sec:pushdown} as well as in Section~\ref{sec:stack_boundedness}.

Let $F = \set{(v,d) \mid c(v) \textrm{ is even}}$.
which intuitively corresponds to the case where all requests got answered.
We argue that $\G$ is equivalent to $\G \times \M = (\A \times \M, \bnd\bucf)$,
\textit{i.e} the following are equivalent:
$$\pi \in \bnd\parp \ \ \textrm{ if and only if } \ \ \widetilde{\pi} \in \bnd\bucf,$$
where $\widetilde{\pi}$ is the play in $\G \times \M$ corresponding to $\pi$.

We prove the left-to-right direction. 
Let $\pi \in \bnd\parp$, then there exists $N$ such that for all $k$, $\distk(\pi,c) \le N$;
in other words every request is answered within $N$ steps.
We argue that in $\widetilde{\pi}$, for all positions $k$, we have $\distk(\widetilde{\pi},F) \le N \cdot \ell$.
Indeed, consider the memory states assumed along $\widetilde{\pi}$,
\textit{i.e} the set of open requests.
Since each request is answered within $N$ steps, they are removed from the memory state; 
however, it may be that along the way other requests are opened.
If they are not answered within these $N$ steps, then they are smaller.
The set of open requests can only decrease $\ell$ times, implying our claim.

Conversely, let $\widetilde{\pi} \in \bnd\bucf$, then there exists $N$ 
such that for all $k$, $\distk(\widetilde{\pi},F) \le N$.
in other words after $N$ steps no request is pending.
A fortiori, every request is answered within $N$ steps,
so $\pi \in \bnd\parp$.
This concludes.

\vskip1em
\noindent Thanks to Proposition~\ref{prop:mem_fin_buchi}, 
in a bounded B\"uchi game Eve has a positional winning strategy from her winning set,
which implies that she has a positional winning strategy using $\M$ as memory structure from her winning set
in $\G$.

Note that this does not give a reduction from finitary parity games to finitary B\"uchi games:
the above equivalence does not hold for the prefix-independent conditions.
For instance, $\pi = 1 \cdot 2^\omega$ satisfies the finitary parity condition
but $\widetilde{\pi} = (1,1) \cdot (2,1)^\omega$ does not satisfy the finitary B\"uchi condition
(the memory state remains equal to $1$ forever).

\vskip1em
We now consider the lower bounds on memory.
The fact the Eve needs memory is illustrated in Example~\ref{ex:memory_needed_bounded_parity}.
Note that from the special case of bounded B\"uchi conditions we already know an infinite lower bound for Adam.

\begin{example}\label{ex:memory_needed_bounded_parity}
Figure~\ref{fig:memory_needed_bounded_parity} presents an infinite arena,
where for condition $\bnd\parp$, 
Eve needs two memory states to win.
This is in contrast with finite arenas, 
where she has positional winning strategies~\cite{ChatterjeeHenzingerHorn09}.
The label $n$ on an edge indicates that the length of the path is $n$.
A play is divided in rounds, and a round is as follows:
first Adam makes a request, either $1$ or $3$,
and then Eve either answers both requests and proceeds to the next round,
or stops the play visiting color $2$.
Assume Eve uses a positional strategy, and consider two cases:
either she chooses always $0$, then Adam wins by choosing always $3$,
ensuring that the response time grows unbounded,
or at some round she chooses $2$, 
then Adam wins by choosing $1$ at this particular round,
ensuring that this last request will never be responded.
However, if Eve answers correctly -- that is choosing color $0$ for the request $1$,
and color $2$ for the request $3$ -- the bounded parity condition is satisfied,
and this requires two memory states.
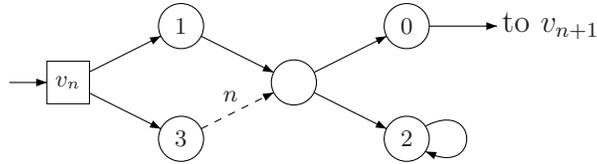
\begin{figure}
\begin{center}
\begin{picture}(60,15)(0,0)
	\rpnode[Nmarks=i,polyangle=45](v_n)(0,7.5)(4,4){$v_n$}

	\gasset{Nh=6,Nw=6}

	\node(v_1)(15,15){$1$}
	\node(v_3)(15,0){$3$}
	\node(v_c)(30,7.5){}
	\node(v_0)(45,15){$0$}
	\node(v_2)(45,0){$2$}
	
	\drawedge(v_n,v_1){}
	\drawedge(v_n,v_3){}
	\drawedge(v_1,v_c){}
	\drawedge[dash={1}0](v_3,v_c){$n$}
	\drawedge(v_c,v_0){}
	\drawedge(v_c,v_2){}

	\drawloop[loopdiam=5,loopangle=0](v_2){}

  	\node[linecolor=White](invisible)(60,15){\large{\qquad to $v_{n+1}$}}
  	\drawedge(v_0,invisible){}
\end{picture}
\end{center}
\caption{An infinite arena where Eve needs memory to win $\bnd\parp$.}
\label{fig:memory_needed_bounded_parity}
\end{figure}
\end{example}

Before proceeding to the fifth and last step, 
let us discuss why the fourth step was about bounded parity conditions
rather than uniform ones.
In both uniform parity games and bounded parity games, Eve needs memory to win;
this is shown in Example~\ref{ex:memory_needed_bounded_parity} for bounded parity conditions, 
and in Example~\ref{ex:memory_needed_uniform_parity} for uniform parity conditions.
It follows that using any of the two routes would not give positional winning strategies
for our final goal, finitary parity conditions.
Furthermore, extending the techniques for bounded B\"uchi games to bounded parity games
is quite technical, as characterizing the winning regions requires nesting least and greatest fixpoints,
whereas the reduction we described from bounded parity games to bounded B\"uchi games
is both conceptually simple and effective.

\begin{example}
\label{ex:memory_needed_uniform_parity}
Figure~\ref{fig:memory_needed_uniform_parity} presents a finite arena,
where for condition $\parity(p,2)$, 
Eve needs two memory states to win.
First Adam makes a request, either $1$ or $3$,
and then Eve chooses between $0$ and $2$.
If Eve answers correctly -- that is choosing color $0$ for the request $3$,
and color $2$ for the request $1$-- the bound requirement is satisfied,
and this requires two memory states.
Otherwise, either the bound requirement is too large 
(if she chooses color $0$ while Adam chose color $3$)
or the answer is not appropriate 
(if she chooses color $2$ while Adam chose color $1$).
This example is easily generalized to the case of $2d+1$ colors,
and there Eve needs $d+1$ memory states to answer the requests appropriately.
\begin{figure}
\begin{center}
\begin{picture}(75,25)(0,0)
	\rpnode[Nmarks=i,polyangle=45](v_0)(0,10)(4,4){$v_0$}

	\gasset{Nh=6,Nw=6}
	
	\node(v_1)(15,20){$3$}
	\node(v_2)(30,20){}
	\node(v_3)(22,0){$1$}
	\node(v_4)(40,10){}
	\node(v_5)(60,20){$2$}
	\node(v_6)(52,0){}
	\node(v_7)(65,0){$0$}
	
	\drawedge(v_0,v_1){}
	\drawedge(v_1,v_2){}
	\drawedge(v_0,v_3){}
	\drawedge(v_2,v_4){}
	\drawedge(v_3,v_4){}
	\drawedge(v_4,v_5){}
	\drawedge(v_4,v_6){}
	\drawedge(v_6,v_7){}

	\drawloop[loopdiam=4,loopangle=0](v_5){}
	\drawloop[loopdiam=4,loopangle=0](v_7){}
\end{picture}
\end{center}
\caption{An arena where Eve needs memory to win $\parity(p,2)$.}
\label{fig:memory_needed_uniform_parity}
\end{figure}
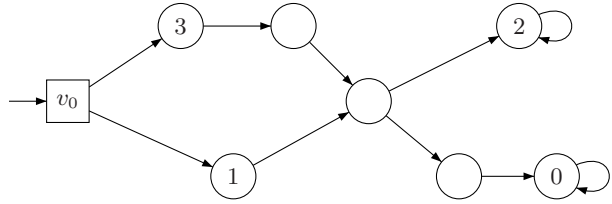
\end{example}

\subsection{Strategy complexity for finitary parity games}
Our last step is about finitary parity games.
In this subsection, we obtain the following results:

\begin{proposition}[Strategy complexity for finitary parity games]
\label{prop:mem_fin_parity}
For all finitary parity games, Eve has a finite-memory winning strategy from her 
winning set that uses at most $\dhalf$ memory states, where $\ell$ is the number of odd colors.
\end{proposition}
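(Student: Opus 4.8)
The plan is to transpose the proof of Proposition~\ref{prop:mem_fin_buchi} from B\"uchi to parity, replacing the uniform-B\"uchi machinery by Proposition~\ref{prop:mem_bounded_parity}; the single new feature is that Eve's strategies now carry the memory structure $\M$ of Proposition~\ref{prop:mem_bounded_parity} (which records the most important pending request and has exactly $\dhalf$ states), so the output is a finite-memory strategy with $\dhalf$ states rather than a positional one. The backbone is the identity $\fin\parp = V^* \cdot \bnd\parp$ --- a play is in $\fin\parp$ exactly when some suffix of it is in $\bnd\parp$ --- which is the parity counterpart of $\bucfn = V^* \cdot \bnd\bucfn$ and invites an application of Lemma~\ref{lem:fixpoint}, in a finite-memory form.

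Concretely, I would take $\G = (\A,\fin\parp)$ and apply Lemma~\ref{lem:fixpoint} with the operator $\Xi$ sending a subarena $\A'$ to $\WE(\A',\bnd\parp)$, checking its three properties for every $\A'$. Property~1, $\WE(\A',\bnd\parp)\subseteq\WE(\A',\fin\parp)$, is immediate from $\bnd\parp\subseteq\fin\parp$. Property~3: from $\WE(\A',\bnd\parp)$ Eve has a winning strategy for $\bnd\parp$ using $\M$ as memory structure (Proposition~\ref{prop:mem_bounded_parity}), and since $\fin\parp$ is prefix-independent and contains $\bnd\parp$, that same strategy wins $\fin\parp$ from $\WE(\A',\bnd\parp)$. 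Property~2 --- that $\WE(\A',\fin\parp)\neq\emptyset$ implies $\WE(\A',\bnd\parp)\neq\emptyset$ --- is the only one requiring an argument, and I would prove its contrapositive exactly as in the finitary B\"uchi case. Suppose $\WE(\A',\bnd\parp)=\emptyset$; then by determinacy Adam wins $\co\bnd\parp$ from every vertex, and any such strategy forces $\sup_k\distk(\cdot,c)=\infty$, so in particular it forces, for any prescribed $N$, the appearance of some $k$ with $\distk(\cdot,c)>N$, a fact visible after finitely many steps. Adam now plays such a strategy until a position with distance exceeding $1$ is seen, then restarts from the vertex reached with one forcing distance exceeding $2$, then $3$, and so on; using that the distance function along a suffix is a shift of the distance function along the whole play, this forces, for every $i$, some $k$ with $\distk(\cdot,c)>i$, and these positions can be taken strictly increasing, so $\limsup_k\distk(\cdot,c)=\infty$. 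Hence Adam wins $\co\fin\parp$ from every vertex and $\WE(\A',\fin\parp)=\emptyset$.

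To obtain the memory bound $\dhalf$ rather than positionality, I would run the fixpoint construction of Lemma~\ref{lem:fixpoint} on the product arena $\A\times\M$ instead of on $\A$: all the bounded-parity strategies supplied by Proposition~\ref{prop:mem_bounded_parity} use the \emph{same} structure $\M$ (it depends only on the colouring $c$, not on the subarena), so on the relevant products they become plain positional strategies, the attractor parts of the slices are positional anyway, and the iteration produces a positional winning strategy for Eve from her winning set in $\A\times\M$. By the one-to-one correspondence between memoryless strategies on $\A\times\M$ and strategies on $\A$ using $\M$, this yields a winning strategy in $\G$ from Eve's winning set with only $\dhalf$ memory states. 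Equivalently, one proves once the finite-memory refinement of Lemma~\ref{lem:fixpoint} in which ``positional'' is replaced by ``using a fixed memory structure $\M$'' in both hypothesis~3 and the conclusion; the winning region then comes out as a least fixpoint, one $\mu$-iteration above the bounded-parity winning regions.

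The step I expect to be the main obstacle is this assembly on the product. Since $\bnd\parp$ is not prefix-independent, one must be careful to work inside the \emph{reachable} part of $\A\times\M$, on whose vertices the memory component always carries an ``honest'' value and the reduction of Proposition~\ref{prop:mem_bounded_parity} is legitimate, and to fold the slices' positional attractor strategies into a single memoryless strategy so that no memory is accumulated when a play descends from one slice to the next (equivalently, re-initialise $\M$ each time the play enters a new bounded-parity winning region). Granting the earlier propositions, everything else --- the identity $\fin\parp=V^*\cdot\bnd\parp$, properties~1 and~3, and the conclusion of the fixpoint lemma --- is routine.
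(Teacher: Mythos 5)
Your proposal follows essentially the same route as the paper: the paper also proves this by applying Lemma~\ref{lem:fixpoint} on the product arena $\A \times \M$ (with $\M$ the request-tracking memory structure from the bounded-parity step), taking $\Xi(\A') = \WE(\A',\bnd\parp)$, and checking the three properties exactly as in the finitary B\"uchi case --- property~2 by iterating Adam's bound-fooling strategies and property~3 via the positional bounded-parity strategies on the product, then transferring the resulting positional strategy back to $\A$ as a strategy with $\dhalf$ memory states. Your filled-in details (the identity $\fin\parp = V^*\cdot\bnd\parp$, the restart argument for property~2, and the care about the reachable/honest part of the product) are consistent with what the paper leaves implicit in its one-line justification.
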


Once again, we rely on Lemma~\ref{lem:fixpoint} to prove this result.
Let $\G = (\A,\fin\bucf)$ be a finitary parity game,
and $\M$ the memory structure defined in the fourth step.
We consider the arena $\A \times \M$, and denote by $\Xi$ the operator that associates to a subarena $\A'$ 
of $\A \times \M$ the set of vertices $\WE(\A',\bnd\parp)$.
Specifically, we have, for all subarenas $\A'$:
\begin{enumerate}
	\item $\Xi[\A'] \subseteq \WE(\A',\fin\parp)$.
	\item If $\WE(\A',\fin\parp)$ is non-empty, then $\Xi[\A']$ is non-empty.
	\item Eve has a positional winning strategy from $\Xi[\A']$ in $(\A',\fin\parp)$.
\end{enumerate}
The proof is easy and follows the same lines as for the third step.

\vskip1em
Theorem~\ref{thm:strategy_complexity} gives the almost complete picture: the notable exception is the gap
for finitary parity games, where we prove that $\dhalf$ memory states are sufficient 
for Eve, yet without showing that any memory is required at all.
Although we think that positional strategies always exist, we were not able to prove it.
Our techniques through bounded parity games cannot be improved for this purpose,
as we showed that for these games memory is required for Eve's winning strategies.

\section{Pushdown $\omega B$ games}
\label{sec:pushdown}

In this section we consider pushdown $\omega B$ games and prove a collapse result.
Along with previous results~\cite{BlumensathColcombetKuperbergVandenboom13,Bojanczyk04},
this implies that determining the winner in such games is decidable.

\medskip\noindent{\bf Pushdown arenas.}
A pushdown process is a finite-state machine which features a stack: it is described as $(Q,\Gamma,\Delta)$
where $Q$ is a finite set of control states, $\Gamma$ is the stack alphabet and $\Delta$ is the transition relation.
There is a special stack symbol denoted $\bot$ which does not belong to $\Gamma$;
we denote by $\Gamma_\bot$ the alphabet $\Gamma \cup \set{\bot}$.
A configuration is a pair $(q,u \bot)$ (the top stack symbol is the leftmost symbol of $u$).
There are three kinds of transitions in $\Delta$:
\begin{itemize}
	\item $(p,a,\push(b),q)$: allowed if the top stack element is $a \in \Gamma_\bot$, 
	the symbol $b \in \Gamma$ is pushed onto the stack.
	\item $(p,\pop(a),q)$: allowed if the top stack element is $a \in \Gamma$,
	the top stack symbol $a$ is popped from the stack.
	\item $(p,a,\Skip,q)$: allowed if the top stack element is $a \in \Gamma_\bot$, the stack remains unchanged.
\end{itemize}
The symbol $\bot$ is never pushed onto, nor popped from the stack.
The pushdown arena of a pushdown process is defined as 
$(Q \times \Gamma^* \bot, (Q_E \times \Gamma^* \bot, Q_A \times \Gamma^* \bot),E)$,
where $(Q_E,Q_A)$ is a partition of $Q$ and $E$ is given by the transition relation $\Delta$.
For instance if $(p,a,\push(b),q) \in \Delta$, 
then $\left((p,a w \bot),(q,baw \bot)\right) \in E$, for all words $w$ in $\Gamma^*$.

\medskip\noindent{\bf Conditions.}
The coloring functions for parity conditions over pushdown arenas are specified over the control states,
\textit{i.e} do not depend on the stack content.
Formally, a coloring function is given by $c : Q \rightarrow [d]$,
and extended to $c : Q \times \Gamma^* \bot \rightarrow [d]$ by $c(q,u \bot) = c(q)$.

We begin this section by giving two examples witnessing interesting phenomena of pushdown finitary games
(hence a fortiori of pushdown $\omega B$ games).

\begin{example}\label{ex:pushdown_games_example_waiting_bound}
Figure~\ref{fig:pushdown_games_example_waiting_bound} presents a pushdown finitary B\"uchi game,
where Eve wins for the bound $0$, but loses the bounded uniform condition for any bound.
Let us first look at the two bottom states: in the left-hand state at the bottom, 
Adam can push as many $b$'s as he wishes,
and moves the token to the state to its right, 
where all those $b$'s are popped one at a time.
In other words, each visit of the two bottom states allows Adam to announce a number $N$ and to prove
that he can ensure a sequence of $N$ consecutive non-B\"uchi states.
We now look at the states on the top line: the initial state is the leftmost one,
where Adam can push an arbitrary number of $a$'s.
We see those $a$'s as credits: from the central state, Adam can use one credit (\textit{i.e} pop an $a$)
to pay a visit to the two bottom states.
When he runs out of credit, which will eventually happen, he moves the token to the rightmost state,
where nothing happens anymore.
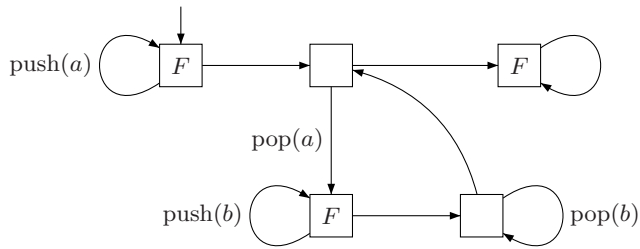
\begin{figure}[!ht]
\begin{center}
\begin{picture}(45,25)(-4,0)
	\gasset{Nh=6,Nw=6}

	\rpnode[Nmarks=i,iangle=90,polyangle=45](q_0)(0,20)(4,4){$F$}
	\rpnode[polyangle=45](q_1)(20,20)(4,4){}
	\rpnode[polyangle=45](q_2)(20,0)(4,4){$F$}
	\rpnode[polyangle=45](q_3)(40,0)(4,4){}
	\rpnode[polyangle=45](q_4)(45,20)(4,4){$F$}

	\drawloop[loopangle=180](q_0){$\push(a)$}
	\drawloop[loopangle=180](q_2){$\push(b)$}
	\drawloop[loopangle=0](q_3){$\pop(b)$}
	\drawloop[loopangle=0](q_4){}

	\drawedge(q_0,q_1){}
	\drawedge[ELside=r](q_1,q_2){$\pop(a)$}
	\drawedge(q_2,q_3){}
	\drawedge[curvedepth=-5](q_3,q_1){}
	\drawedge(q_1,q_4){}
\end{picture}
\end{center}
\caption{A pushdown game where Eve wins $\buc(F,0)$ but loses for any condition $\bnd\bucfn$.}
\label{fig:pushdown_games_example_waiting_bound}
\end{figure}
\end{example}

\begin{example}\label{ex:pushdown_games_example_switch}
Figure~\ref{fig:pushdown_games_example_switch} presents a pushdown finitary B\"uchi game where 
Eve wins for the bound $2$, but to do this she has to maintain a small stack.
A play in this game divides into infinitely many rounds, which start by a visit to $q$.
As in the previous example, each letter $a$ on the stack is a ``credit''.
A round consists in the following actions: 
first Eve chooses whether she wants to pop some $a$'s from the stack (self-loop around $q$),
and then moves the token to the B\"uchi state,
second Adam decides either to push an $a$ and start the next round or
to go to the rightmost state $p$ to pop some $a$'s.
The latter action should be understood as using credits ($a$'s on the stack) to remain away from the B\"uchi state;
using $N$ credits, he can stay in $p$ for $N$ steps.
It follows that Eve should everytime keep the stack low to avoid long stays in $p$.
This rules out the greedy (attractor) strategy for her which would rush to the B\"uchi state without considering the stack;
a wiser strategy ensuring the bound $2$ is to start every round by popping the $a$ pushed during the previous round.
\begin{figure}[!ht]
\begin{center}
\begin{picture}(40,12)(0,0)
	\gasset{Nh=6,Nw=6}

	\node[Nmarks=i,iangle=-90](q)(0,5){$q$}
	\rpnode[polyangle=45](F)(20,5)(4,4){$F$}
	\rpnode[polyangle=45](p)(40,5)(4,4){$p$}

	\drawedge[curvedepth=5](q,F){}
	\drawedge[curvedepth=5](F,q){$\push(a)$}
	\drawedge[curvedepth=5](F,p){$\pop(a)$}
	\drawedge[curvedepth=5](p,F){}

	\drawloop[loopangle=180](q){$\pop(a)$}
	\drawloop[loopangle=0](p){$\pop(a)$}
\end{picture}
\end{center}
\caption{A pushdown game with finitary B\"uchi conditions where Eve has to maintain a small stack.}
\label{fig:pushdown_games_example_switch}
\end{figure}
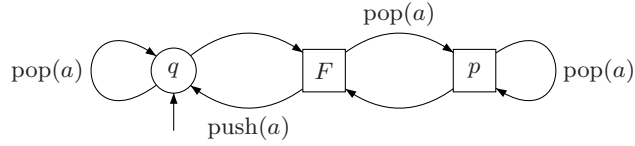
\end{example}

\subsection{Regular sets of configurations and alternating $\PP$-automata}
We will use alternating $\PP$-automata to recognize sets of configurations:
an alternating $\PP$-automaton $\B = (S,\delta,F)$ for the pushdown process $\PP = (Q,\Gamma,\Delta)$
is a classical alternating automaton over finite words:
$S$ is a finite set of control states, $\delta : S \times \Gamma \rightarrow \mathbb{B}^+(S)$
is the transition function (where $\mathbb{B}^+(S)$ is the set of positive boolean formulae over $S$),
and $F$ is a subset of $S$ of final states.
We assume that the set of states $S$ contains $Q$.
A configuration $(q,u\bot)$ is accepted by $\B$ if it is accepted
using $q \in Q \subseteq S$ as initial state, with the standard alternating semantics.
A set of configuration is said regular if it is accepted by an alternating $\PP$-automaton.

The following theorem states that for very general conditions,
the winning region is regular~\cite{Serre03,Serre06}.

\begin{theorem}[\cite{Serre06}]\label{thm:pushdown_games_regularity}
For all pushdown games, for all winning conditions $\Omega \subseteq Q^\omega$
that are Borel and prefix-independent,
the set $\WE(\Omega)$ is a regular set of configurations
recognized by an alternating $\PP$-automaton of size $|Q|$.
\end{theorem}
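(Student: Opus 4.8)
The plan is to construct an alternating $\PP$-automaton $\B = (Q,\delta,F)$ whose state set is exactly $Q$ and which, reading the stack content $u \in \Gamma^*$ of a configuration $(q,u\bot)$ from the top symbol downwards and starting in state $q$, accepts if and only if $(q,u\bot) \in \WE(\Omega)$. This is the stack-elimination idea of Walukiewicz and Serre, specialised to prefix-independent Borel conditions: the winner from $(q, \gamma_1 \cdots \gamma_n \bot)$ should depend only on $q$, on the top symbol $\gamma_1$, and on the set $R = \set{p \mid (p, \gamma_2 \cdots \gamma_n \bot) \in \WE(\Omega)}$ of control states that are winning for Eve ``one level down''; the alternating transition $\delta(q,\gamma_1)$ is designed to encode precisely this dependence. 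Since the transition function has domain $Q \times \Gamma$ and $\bot \notin \Gamma$, the input word is $u$ itself and the final-state set $F$ is used to decide acceptance once the input is exhausted.

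First I would define, for each $\gamma \in \Gamma$ and each $R \subseteq Q$, an auxiliary game $\mathcal{G}(\gamma,R)$ obtained from $\G$ by treating $\gamma$ as an artificial bottom symbol: it is played on configurations $(p, u\gamma)$, moves obey the pushdown rules, a play that executes $\pop(\gamma)$ in control state $p$ terminates and is won by Eve iff $p \in R$, and an infinite play is won by Eve iff its control-state sequence lies in $\Omega$. The resulting condition is Borel (built from $\Omega$ together with a regular condition on finite plays), so $\mathcal{G}(\gamma,R)$ is determined by Martin's theorem; write $E(\gamma,R) \subseteq Q$ for the set of control states from which Eve wins $\mathcal{G}(\gamma,R)$ started at $(p,\gamma)$. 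Since enlarging $R$ can only help Eve, $R \mapsto E(\gamma,R)$ is monotone, hence --- by the classical correspondence between monotone maps $2^Q \to 2^Q$ and positive boolean formulae --- for each $p$ there is $\varphi_{p,\gamma} \in \mathbb{B}^+(Q)$ with $p \in E(\gamma,R) \iff R \models \varphi_{p,\gamma}$ (where $R$ names the valuation making exactly the variables in $R$ true). Set $\delta(p,\gamma) = \varphi_{p,\gamma}$ and $F = \set{p \in Q \mid (p,\bot) \in \WE(\Omega)}$. The construction is non-effective in general, which is enough since the theorem only asserts regularity.

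The core step is the identity, proved by induction on $n$: $(p, \gamma_1 \cdots \gamma_n \bot) \in \WE(\Omega)$ if and only if $p \in E(\gamma_1, R)$ with $R = \set{p' \mid (p', \gamma_2 \cdots \gamma_n \bot) \in \WE(\Omega)}$, the case $n = 0$ reading as $p \in F$. For the ``if'' direction I would splice a winning strategy of Eve in $\mathcal{G}(\gamma_1,R)$ with, for every $p' \in R$, a winning strategy of Eve in $\G$ from $(p', \gamma_2 \cdots \gamma_n \bot)$: follow the first strategy until $\gamma_1$ is popped (if ever), then switch to the matching second strategy; a consistent play either stays above level $n$ forever --- and then its control-state sequence is that of a play won in $\mathcal{G}(\gamma_1,R)$, hence lies in $\Omega$ --- or it pops $\gamma_1$ after a finite prefix into some $p' \in R$ and is afterwards governed by a winning $\G$-strategy, so prefix-independence of $\Omega$ yields the win. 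For the ``only if'' direction I would play a winning $\G$-strategy from $(p, \gamma_1 \cdots \gamma_n \bot)$ inside $\mathcal{G}(\gamma_1,R)$: plays that never pop $\gamma_1$ have control-state sequence in $\Omega$, and every configuration $(p', \gamma_2 \cdots \gamma_n \bot)$ reachable when $\gamma_1$ is popped is still won by Eve's residual strategy (prefix-independence again), so $p' \in R$. Finally, unfolding the alternating acceptance condition of $\B$ on $\gamma_1 \cdots \gamma_n$ gives, by the same induction, that $\B$ accepts $(q, \gamma_1 \cdots \gamma_n \bot)$ iff $q \in E(\gamma_1, R)$ for that same $R$; together with the identity this shows $\B$ recognises $\WE(\Omega)$, and $\B$ has $|Q|$ states.

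The step I expect to be the main obstacle is the identity above --- chiefly the strategy-splicing for the ``if'' direction and its converse. One must check that \emph{every} control state in which $\gamma_1$ can be popped under the relevant strategy (not merely one of them) is covered by $R$, deal cleanly with the possibility that $\gamma_1$ is never popped, and invoke prefix-independence at exactly the point where the strategies are switched. The ancillary points --- that the auxiliary conditions remain Borel in the presence of terminating (finite) plays and that Martin determinacy applies to $\mathcal{G}(\gamma,R)$, and that the monotone-to-formula correspondence is applied coordinatewise over $p \in Q$ --- require a little attention but are standard.
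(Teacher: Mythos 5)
The paper itself gives no proof of this theorem --- it is imported from~\cite{Serre06} --- and your reconstruction is precisely the standard Walukiewicz--Serre argument behind the cited result: conditional games $\mathcal{G}(\gamma,R)$ with the pop of the artificial bottom symbol won according to $R$, monotonicity of $R \mapsto E(\gamma,R)$ turned into positive boolean transition formulae, and the splicing/residual-strategy equivalence using prefix-independence, which you carry out correctly (determinacy of the auxiliary games is in fact not even needed, since only Eve's winning sets enter the construction). The sole cosmetic point is that $\mathbb{B}^+(Q)$ should be understood to contain the constants true and false so that the constant monotone maps $R \mapsto E(\gamma,R)$ can also be encoded.
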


\subsection{The collapse result}
We denote $\limitbounded(N)$ the set of plays which contain a suffix 
for which the counters are bounded by~$N$.
Note that unlike $\bounded(N)$, the condition $\limitbounded(N)$ is prefix-independent,
so Theorem~\ref{thm:pushdown_games_regularity} applies.

\begin{theorem}[The forgetful property]
\label{thm:collapse}
For all pushdown $\omega B$ games, the following are equivalent:
\begin{itemize}
	\item $\exists \sigma$ strategy for Eve, $\forall \pi$ plays, $\exists N \in \N$,\
	$\pi \in \bounded(N) \cap \parp$,
	\item $\exists \sigma$ strategy for Eve, $\exists N \in \N$, $\forall \pi$ plays, 
	$\pi \in \limitbounded(N) \cap \parp$.
\end{itemize}
\end{theorem}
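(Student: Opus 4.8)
\textbf{Overview.} The nontrivial direction is top-to-bottom: assume Eve has a strategy $\sigma$ such that every play $\pi$ lies in $\bounded(N_\pi)\cap\parp$ for some path-dependent bound $N_\pi$, and produce a single uniform bound $N$ together with a (possibly different) winning strategy for $\limitbounded(N)\cap\parp$. The bottom-to-top direction is almost immediate: if Eve ensures a uniform bound $N$ on a suffix of every play together with the parity condition, then in particular every play has \emph{some} finite bound on the counters in the limit, and being in $\limitbounded(N)$ (all counters bounded by $N$ eventually) trivially implies each counter is bounded along the whole play, i.e.\ $\pi\in\bounded$; combined with $\parp$ this gives the first statement. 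I would state this sentence and move on.

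\textbf{Key steps for the hard direction.} First, I would fix $\sigma$ as in the hypothesis and consider the set of configurations from which Eve wins the $\omega B$-condition $\bounded\cap\parp$; call it $W$. Since the pushdown $\omega B$ winning condition need not be prefix-independent as stated, I would first pass to the prefix-independent version: observe that $\bounded\cap\parp$ and $\limitbounded\cap\parp$ (where $\limitbounded=\bigcup_N\limitbounded(N)$) have the same winning region for Eve, because a play has bounded counters on a suffix iff it has bounded counters globally up to the finite initial damage — more precisely, from any configuration Eve can first play an attractor-type argument to regain control and then the prefix-independent condition coincides. So $W=\WE(\limitbounded\cap\parp)$ is prefix-independent and Borel, and by Theorem~\ref{thm:pushdown_games_regularity} it is recognized by an alternating $\P$-automaton $\B$ of size $|Q|$. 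Second — this is the crux — I would invoke the results of~\cite{BlumensathColcombetKuperbergVandenboom13,Bojanczyk04} on pushdown games with $\omega B$ (equivalently, $\MSO+\Bn$-flavored) conditions: these give not just decidability but a \emph{bound-synthesis} statement, namely that whenever Eve wins the non-uniform $\omega B$ game she wins it for some explicit uniform bound $N$ computable from the game (the finite-memory/history-determinism of $\omega B$-automata on finitely many control states forces a uniform bound over the regular winning region). Restricting attention to plays that stay inside $W$ — which Eve can enforce by a strategy that plays $\sigma$ but, whenever the configuration is about to leave $W$, switches to a local winning strategy inside $W\times\B$ — every play eventually stabilizes in one ``slice'' of the regular set $W$, and on that slice the uniform bound $N$ applies from that point on; hence the play is in $\limitbounded(N)\cap\parp$.

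\textbf{Assembling the strategy.} Concretely I would build the witness strategy for the second statement as a product of $\sigma$ with the memory structure $\M$ tracking the most important pending parity request (the one defined in the fourth step of Section~\ref{sec:strategy_complexity}) and with the automaton $\B$ recognizing $W$: this product arena is again a pushdown arena with an $\omega B$ condition, now \emph{uniform}, and the cited decidability/bound results apply to it directly to extract $N$ and a finite-memory strategy realizing $\limitbounded(N)\cap\parp$. The key one-to-one correspondence between plays in $\A$ and in $\A\times\M\times\B$ (the ``key property'' highlighted in the Definitions section) transfers this back to a strategy in the original game.

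\textbf{Main obstacle.} The delicate point is not the bookkeeping but justifying the extraction of a \emph{single} $N$: the hypothesis only gives a bound per play, and a priori the supremum over all plays consistent with $\sigma$ could be infinite. Closing this gap is exactly where regularity of the winning region (finitely many ``types'' of configurations, via $\B$ of size $|Q|$) and the quantitative theory of $\omega B$-automata over pushdown graphs~\cite{BlumensathColcombetKuperbergVandenboom13,Bojanczyk04} must be used; I expect the bulk of the argument to consist in phrasing the problem so that those results apply off the shelf, and in verifying that the prefix-dependent-to-prefix-independent reduction ($\bounded\rightsquigarrow\limitbounded$) does not change the winner — the latter requiring a small attractor/regaining-control argument on the pushdown arena.
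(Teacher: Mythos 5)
There is a genuine gap in the hard direction: you outsource precisely the step that Theorem~\ref{thm:collapse} asserts. Your crux is an appeal to~\cite{BlumensathColcombetKuperbergVandenboom13,Bojanczyk04} for a ``bound-synthesis'' statement saying that whenever Eve wins the non-uniform $\omega B$ game she wins for some uniform bound $N$; no such statement is available there. Those results concern deciding, respectively, the uniform (cost-automata) semantics and the existential-bound ($\MSO+\Bn$) semantics; the equivalence between the two quantifier orders on pushdown arenas is exactly the content of the theorem, and in the paper the cited results are used only \emph{after} the collapse, to get decidability. Invoking them here is circular, and your parenthetical justification (``finitely many control states forces a uniform bound over the regular winning region'') is not an argument -- regularity of the single set $\WE(\bounded \cap \parp)$ does not by itself yield any uniform bound on it. (Incidentally, the claim that $N$ is explicit/computable also contradicts the paper's remark that the proof gives no bound on $N$, with doubly-exponential lower bounds.)

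The actual mechanism is different: apply Theorem~\ref{thm:pushdown_games_regularity} not to the single region $W$ but to the whole family of uniform-bound regions $\WE(\limitbounded(N) \cap \parp)$, $N \in \N$ (each condition is prefix-independent and Borel, so each region is recognized by an alternating $\PP$-automaton of size $|Q|$). Since there are only finitely many such automata, the increasing chain $\WE(0) \subseteq \WE(1) \subseteq \cdots$ takes finitely many values and stabilizes at some $N$. Then a separate argument handles Adam: from any configuration outside $\WE(N)$ he wins against every bound $M \ge N$ while staying outside $\WE(N)$, and iterating these spoiling strategies (one bound after another, using prefix-independence) defeats $\bounded \cap \parp$ altogether, so $\WE(\bounded \cap \parp) \subseteq \WE(N)$. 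Your sketch contains neither the stabilization argument nor this iteration argument for Adam, and these are the two essential ideas. Two smaller points: the memory structure $\M$ tracking the most urgent pending request is irrelevant here, since general $\omega B$ conditions have their counters given on edges rather than derived from requests; and your worry about passing from $\bounded$ to a prefix-independent condition needs no attractor argument at all, because $\bounded$ and $\bigcup_N \limitbounded(N)$ coincide as sets of plays (a finite prefix only perturbs counter values by a finite amount). Your easy direction is fine.
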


We refer to this result as a collapse result, as it shows that the non-uniform quantification (with respect to bound)
of pushdown $\omega B$ games collapses to a uniform quantification (but using a slightly different bounding condition).
It follows that we can associate to a pushdown $\omega B$ game a bound $N$, called the \textit{collapse bound},
which only depends on the pushdown arena and the condition attached.
Later in this section, we will show doubly-exponential lower bounds on this collapse bound.

The intuition behind the name forgetful property is the following: 
even if a configuration carries an unbounded amount of information (since the stack may be arbitrarily large),
this information cannot be forever transmitted along a play.
Indeed, to increase the counter values significantly,
Adam has to use the stack, consuming or forgetting its original information.

Example~\ref{ex:pushdown_games_example_waiting_bound} shows that the content of the stack
can be used as ``credit'' for Adam, but also that if Eve wins
then from some point onwards this credit vanishes.
Slightly modified, it also shows that Theorem~\ref{thm:collapse} does not hold if
$\limitbounded$ is replaced by $\bounded$.

For the sake of readability, we abbreviate $\WE(\limitbounded(N) \cap \parp)$ by $\WE(N)$,
and similarly $\WE(\bounded \cap \parp)$ by $\WE$.
The following properties hold:
\begin{enumerate}
	\item $\WE(0) \subseteq \WE(1) \subseteq \WE(2) \subseteq \cdots \subseteq \WE$.
	\item There exists $N$ such that $\WE(N) = \WE(N+1) = \cdots$.
	\item For such $N$, we have $V \setminus \WE(N) \subseteq \WA$, hence $\WE = \WE(N)$.
\end{enumerate}

The first item is clear.
For the second we rely on Theorem~\ref{thm:pushdown_games_regularity}.
For every $N$ there exists $\B_N$ an alternating $\PP$-automaton of size $|Q|$
recognizing $\WE(N)$.
Since there are finitely many alternating $\PP$-automata of size $|Q|$,
the increasing sequence of the set of configurations they recognize is ultimately constant,
\textit{i.e} there exists $N$ such that $\B_N = \B_{N+1} = \ldots$.
We now argue that the third item holds.
From the complement of $\WE(N)$, Adam can ensure to fool the bound $N$, 
but also $N+1$, and so on, yet remaining there.
Iterating such strategies ensures to spoil the $\omega B$-condition,
which concludes the proof.

\begin{remark}
The above proof does not give a bound on $N$;
indeed, the sequence $(\B_N)_{N \in \N}$ is ultimately constant, 
but the fact that two consecutive automata are equal, \textit{i.e} $\B_N = \B_{N + 1}$,
does not imply that from there on the sequence is constant.
It follows that $N$ can be \emph{a priori} arbitrarily large.

We will later present examples showing that the bound $N$
is at least doubly-exponential in the number of vertices,
and exponential in the stack alphabet.
\end{remark}

\subsection{Decidability of pushdown $\omega B$ games}
We give two proofs of decidability of solving pushdown games:
\begin{itemize}
	\item First, we prove the decidability of solving pushdown finitary games,
relying on the finite-memory results of Section~\ref{sec:strategy_complexity} (Theorem~\ref{thm:strategy_complexity}) 
the collapse result (Theorem~\ref{thm:collapse}), and~\cite{Bojanczyk04}.
	\item Second, we prove the decidability of solving pushdown $\omega B$ games,
generalizing the first item. 
This relies on the collapse result (Theorem~\ref{thm:collapse}) and~\cite{BlumensathColcombetKuperbergVandenboom13}.
\end{itemize}

We begin by proving the decidability of pushdown finitary games.
Note that the second property in Theorem~\ref{thm:collapse}, namely:
\[
\exists \sigma \textrm{ strategy for Eve}, \exists N \in \N, \forall \pi \textrm{ plays},\
\pi \in \limitbounded(N) \cap \parp\ ,
\]
can be written as an existential bounding formula over infinite trees,
whose satisfiability was proved decidable in~\cite{Bojanczyk04}.
This relies on an $\MSO$ interpretation of pushdown graphs into infinite trees, following~\cite{MullerSchupp85}.
More specifically, let $\G = (\A,\fin\parp)$ be a pushdown finitary parity game.
We construct the memory structure $\M$ as in Proposition~\ref{prop:mem_bounded_parity}, which keeps track
of the most important request.
The arena $\A \times \M$ is again a pushdown arena, so it can be $\MSO$-interpreted into the infinite binary tree.
Now thanks to Theorem~\ref{thm:strategy_complexity}, Eve has a memoryless winning strategy 
in $\G \times \M = (\A \times \M,\fin\parp)$ from her winning set.
Such a strategy can be described as a set of edges, hence as a monadic second-order variable
in an $\MSO$ formula over the infinite binary tree.
Consider the following formula:
$$
\exists X, \exists Y,\
\left \{
\begin{array}{cr}
X \textrm{ represents a positional strategy } & \wedge \\
\textrm{all infinite paths end up in } Y & \wedge \\
\Bn Z,\ Z \textrm{ path in } X \cap Y \wedge \textrm{ the minimal color in } Z \textrm{ is odd}
\end{array}
\right.
$$
It expresses the existence of a positional strategy ($X$), a subset of vertices ($Y$)
and a bound $N \in \N$
such that all plays consistent with $X$ eventually enter in $Y$,
where every request in answered within $N$ steps.
This is an existential bounding formula equivalent to the above property,
whose satisfiability is decidable~\cite{Bojanczyk04}.

\vskip1em
\noindent The second proof relies on~\cite{BlumensathColcombetKuperbergVandenboom13},
which studies two-way alternating parity cost-automata over infinite trees.
For our purpose, we consider such automata over $\Gamma$-trees,
which are infinite trees where each node has one child for each element in $\Gamma$.
(Later, $\Gamma$ will be the stack alphabet of a pushdown system.)
We denote by $\Act(\Gamma)$ the following set of actions on a $\Gamma$-tree: 
$\set{\push(a) \mid a \in \Gamma} \cup \set{\pop,\Skip}$,
where $\push(a)$ should be understood as ``going down in the direction $a$'', $\pop$ as ``going up'' and $\Skip$ as ``no move''.
(Note that they are in one-to-one correspondence with actions on a stack over $\Gamma$.)

\begin{definition}
A two-way alternating automaton over $\Gamma$-trees (with $k$ counters)
is a tuple $\B = (Q,A,\delta,q_0,c)$,
where $Q$ is a finite set of states, $A$ is a finite alphabet, 
$\delta : Q \times A \rightarrow \mathbb{B}^+(\Act(\Gamma) \times Q \times \set{\varepsilon,i,r}^k)$ is a transition relation
(note that the counter actions appear here),
$q_0$ is an initial state and $c : Q \rightarrow \N$ is a coloring function. 
\end{definition}

Let $\B$ be such an automaton, it can be considered under two semantics:
as an $\omega B$-automaton or as a parity cost-automaton.
Let $t$ be a $\Gamma$-tree; the automaton $\B$ and the tree $t$ induce 
an infinite-state game $(\A_{\B,t},\bounded \cap \parp)$.
As an $\omega B$-automaton, $t$ is accepted by $\B$ if Eve wins the $\omega B$ game,
and as a parity cost-automaton, $t$ is accepted by $\B$ if there exists $N \in \N$ such that
Eve wins the $\omega B$ game for the bound $N$.

The membership problem for such automata is a decision problem which asks,
given a two-way alternating parity cost-automaton $\B$ and a regular tree $t$,
whether $t$ is accepted by $\B$.
The following result is a consequence of~\cite{BlumensathColcombetKuperbergVandenboom13}:

\begin{theorem}[\cite{BlumensathColcombetKuperbergVandenboom13}]
The membership problem for two-way alternating parity cost-automata over regular trees is decidable.
\end{theorem}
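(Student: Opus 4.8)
The plan is to derive the statement from~\cite{BlumensathColcombetKuperbergVandenboom13} in two reduction steps: first replace the two-way automaton by an equivalent one-way one, and then exploit regularity of the input tree to pass to a \emph{finite} game. For the first step I would invoke the two-way to one-way simulation of~\cite{BlumensathColcombetKuperbergVandenboom13}: from the given two-way alternating parity cost-automaton $\B$ one effectively constructs a \emph{one-way} alternating parity cost-automaton $\B'$ whose cost function is $\approx$-equivalent to that of $\B$. In the parity-cost (equivalently $\omega B$) semantics a tree $t$ is accepted exactly when there exists $N$ with Eve winning $(\A_{\B,t},\bounded(N)\cap\parp)$; this is an \emph{existential} statement about bounds, so the set of accepted trees depends only on the $\approx$-class of the cost function. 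Hence $\B$ and $\B'$ accept the same trees and it suffices to decide membership of $t$ in $L(\B')$.

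For the second step, recall that a regular tree $t$ over a finite alphabet with branching $\Gamma$ is presented by a finite structure: a finite set $D$ of subtree names, a designated root name, a labelling of $D$, and a successor map $D \times \Gamma \to D$. In the acceptance game $(\A_{\B',t},\bounded \cap \parp)$ every move of the one-way automaton descends strictly in $t$, and the behaviour of $\B'$ below a node depends only on that node's subtree name; consequently the winner — and, for each fixed $N$, the winner of the bound-$N$ game — from a configuration $(q,x)$ coincides with the winner from $(q,\text{name of }x)$ in the finite game played on the product of the state set of $\B'$ with $D$, carrying the counter actions and priorities inherited from $\B'$. We are thus reduced to a finite cost-parity ($\omega B$) game, and membership of $t$ amounts to asking whether there exists $N$ for which Eve wins this finite game with bound $N$, equivalently whether its cost-parity value is finite. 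That question over finite arenas is decidable: it specializes to~\cite{ChatterjeeHenzingerHorn09} in the finitary-parity case, and in general is a base case of the machinery developed in~\cite{BlumensathColcombetKuperbergVandenboom13}.

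The main obstacle is the first step: one needs, from~\cite{BlumensathColcombetKuperbergVandenboom13}, a two-way-to-one-way translation that is simultaneously effective, compatible with alternation and with the full parity index, and cost-function-preserving up to $\approx$; and one must check with some care that $\approx$-equivalence is strong enough to preserve our notion of acceptance — it is, precisely because acceptance is an existential quantification over bounds and hence invariant under $\approx$. The second step is comparatively routine, being the cost analogue of model-checking a parity objective on a regular tree via a finite parity game, and the finite-arena decidability it relies on is standard.
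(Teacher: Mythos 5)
Your proposal is correct and takes essentially the same route as the paper, whose entire justification is the effective two-way-to-one-way translation of~\cite{BlumensathColcombetKuperbergVandenboom13} combined with the (cited-as-known) decidability of the membership problem for one-way alternating parity cost-automata over regular trees. Your second step simply unfolds that known base case explicitly (quotienting the acceptance game by subtree names of the regular tree and deciding finiteness of the value of the resulting finite cost-parity game), and your observation that acceptance, being an existential quantification over bounds, is invariant under $\approx$-equivalence of cost functions is a sound precaution that the paper leaves implicit.
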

Indeed, they prove that two-way alternating parity cost-automata can be effectively translated
into one-way alternating parity cost-automata, for which the membership problem
is known to be decidable.
We reduce the problem of solving a pushdown $\omega B$ game to the membership problem 
for two-way alternating parity cost-automata over a given regular tree.

\vskip1em
\noindent Following~\cite{KupfermanVardi00}, we first reduce
the problem of determining the winner in a pushdown game
to the membership problem for two-way alternating $\omega B$ automata
over regular trees.

Consider the pushdown $\omega B$ game $\G = (\A,\bounded \cap \parp)$, 
and fix an initial configuration $(q_0,\bot)$ for the game.
We define a two-way alternating $\omega B$ automaton:
let $\B = (Q,\Gamma_\bot,\delta,q_0,c)$,
where the transition relation $\delta$ is defined as follows:
$\delta(p,a)$ is the disjunction of all possible transitions from $(p,a)$ if $p \in Q_E$,
and the conjunction of all possible transitions from $(p,a)$ if $p \in Q_A$.

We run the automaton $\B$ on the $\Gamma$-tree $t$, which represents the stack contents:
the label of the node $u$ is the last letter of $u$ if $u \neq \varepsilon$, and $\bot$ otherwise.

\begin{lemma}[\cite{KupfermanVardi00}]
The following are equivalent:
\begin{itemize}
	\item Eve wins $\G$ from $(q_0,\bot)$.
	\item $t$ is accepted by $\B$.
	\item $\exists \sigma \textrm{ strategy for Eve}, \forall \pi \textrm{ plays}, \exists N \in \N,
	\pi \in \limitbounded(N) \cap \parp$.
\end{itemize}
\end{lemma}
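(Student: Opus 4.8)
The plan is to prove the three equivalent statements in the lemma by establishing a cycle of implications, exploiting the standard correspondence between pushdown games and two-way automata running on the regular stack-tree $t$, together with Theorem~\ref{thm:collapse}. The non-trivial direction is the equivalence between ``Eve wins $\G$ from $(q_0,\bot)$'' and ``$t$ is accepted by $\B$''; the third statement is just a verbatim unfolding of the $\omega B$-acceptance semantics, and the equivalence of the first and third statements follows directly from Theorem~\ref{thm:collapse} once the first equivalence is in place (note that $\pi \in \bounded(N) \cap \parp$ implies $\pi \in \limitbounded(N) \cap \parp$, so the first bullet of Theorem~\ref{thm:collapse} is exactly ``Eve wins $\G$'' in the unbounded sense, and the last bullet of the lemma matches its hypothesis up to the cosmetic rewriting of $\bounded$ as $\limitbounded$ under the per-play existential on $N$).

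First I would make precise the bijection between plays of the pushdown game $\G$ and plays of the acceptance game $(\A_{\B,t}, \bounded \cap \parp)$. A configuration $(p,u\bot)$ of $\G$ corresponds to the pair (state $p$, node of $t$ labelled by the top letter of $u\bot$, with the word $u$ read off the path from the root); a $\push(b)$ transition of the pushdown system moves the automaton down in direction $b$ in the tree, a $\pop(a)$ transition moves it up, and a $\Skip$ leaves it in place. Crucially $\delta$ is defined so that at an Eve-controlled state the automaton disjuncts over all legal pushdown transitions (Eve chooses) and at an Adam-controlled state it conjuncts over them (Adam chooses), so a strategy for Eve in $\G$ translates to a strategy in the automaton's acceptance game and conversely, preserving the underlying sequence of control states and hence the parity condition, and preserving the counter actions and hence the boundedness condition. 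This yields: Eve wins $\G$ from $(q_0,\bot)$ iff Eve wins the $\omega B$ acceptance game on $t$, which is by definition ``$t$ is accepted by $\B$'' under the $\omega B$-automaton semantics, and is literally statement three after unfolding $\bounded \cap \parp$ along a play as ``$\exists N,\ \pi \in \limitbounded(N) \cap \parp$'' — here one uses that $\pi \in \bounded$ iff $\exists N,\ \pi \in \bounded(N)$, and that for a single fixed play $\bounded(N)$ and $\limitbounded(N)$ differ only by whether the bound holds from the start or from some suffix, which is absorbed by re-choosing a larger $N$ is \emph{not} quite enough — so instead I would route the equivalence with statement three through Theorem~\ref{thm:collapse} directly, since its first bullet is ``Eve wins $\G$'' and statement three is a restatement of the same quantifier pattern.

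The main obstacle I anticipate is purely bookkeeping rather than conceptual: one must check that the translation of strategies is faithful with respect to \emph{both} conditions simultaneously, in particular that the infinite-state acceptance game $\A_{\B,t}$ really has the same play set as the unfolding of $\G$ once we remember that the automaton is two-way and may pop below a symbol it earlier pushed, which matches the pushdown stack discipline exactly because $\bot$ is never popped and the tree is infinite so ``going up'' is always defined along any reachable node. A second minor point is that $\B$ as defined may have runs that try to pop at the root (i.e. pop $\bot$); this should be excluded by the definition of $\delta$ (no pushdown transition pops $\bot$), so no spurious plays are introduced. Once these are checked, the three statements are seen to be three presentations of the same object, and the lemma follows; the reference to~\cite{KupfermanVardi00} covers the standard part of this argument, and the only new ingredient is lining it up with the $\limitbounded$ formulation licensed by Theorem~\ref{thm:collapse}.
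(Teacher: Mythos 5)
The paper itself gives no proof of this lemma: it is delegated to~\cite{KupfermanVardi00}, and your argument for the main equivalence is exactly the standard one that citation stands for. The acceptance game of $\B$ on the stack tree $t$ is the pushdown game $\G$ in disguise: Eve resolves the disjunctions at $Q_E$-states and Adam the conjunctions at $Q_A$-states, the tree moves $\push(b)/\pop/\Skip$ mirror the stack operations (with $\bot$ never popped, as you note), and colors and counter actions are carried along, so strategies transfer in both directions preserving $\parp$ and the counter values. In substance you follow the intended route for the equivalence of the first two items.

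One correction on your handling of the third item. The per-play argument you dismiss as ``not quite enough'' is in fact sufficient: along a single play, the prefix preceding the suffix witnessing $\limitbounded(N)$ is finite and counter values grow by at most one per step, so only finitely many values are assumed there; hence $\limitbounded(N) \subseteq \bounded$ per play, and $\bigcup_N \limitbounded(N) = \bigcup_N \bounded(N) = \bounded$, which makes the third item a literal restatement of ``Eve wins $\G$''. Your replacement detour through Theorem~\ref{thm:collapse} is both structurally backwards (the paper invokes the collapse \emph{after} this lemma, precisely to pass from the non-uniform third item to the uniform statement) and incomplete as written: the collapse gives ``Eve wins $\G$'' implies the uniform statement implies the third item, but the converse direction, from the third item back to ``Eve wins $\G$'', still requires exactly the containment $\limitbounded(N) \subseteq \bounded$ that you set aside. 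Since that containment is true and easy, the defect is trivially repaired, but as written your cycle of implications does not close.
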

Now Theorem~\ref{thm:collapse} implies that this is also equivalent to:
\[
\exists \sigma \textrm{ strategy for Eve}, \exists N \in \N, \forall \pi \textrm{ plays},\
\pi \in \limitbounded(N) \cap \parp\ .
\]
We construct a two-way alternating cost-automaton $\B'$ from $\B$ such that
$\B'$ accepts $t$ if and only if $\B$ accepts $t$ (as an $\omega B$-automaton).
The automaton $\B'$ is obtained by adding at each transition the ability to reset all counters
at the price of visiting a very bad color for the parity condition,
which takes care of the difference between $\limitbounded(N)$ and $\bounded(N)$.

The main result of this section follows:
\begin{theorem}
Solving a pushdown $\omega B$-game is decidable.
\end{theorem}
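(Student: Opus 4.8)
The plan is to assemble the pieces already established in this section into a single decidability argument, since all the substantive work has been done and what remains is to chain the reductions correctly. First I would fix a pushdown $\omega B$ game $\G = (\A, \bounded \cap \parp)$ together with an initial configuration $(q_0,\bot)$; by the discussion preceding the theorem, solving $\G$ amounts to deciding whether Eve wins from $(q_0,\bot)$, which by the lemma attributed to~\cite{KupfermanVardi00} is equivalent to the statement ``$t$ is accepted by $\B$'' where $\B = (Q,\Gamma_\bot,\delta,q_0,c)$ is the two-way alternating $\omega B$-automaton described above, run on the regular $\Gamma$-tree $t$ encoding stack contents. So the first step is simply to recall this reduction and note that $\B$ and $t$ are effectively constructible from the description of the pushdown process.

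The key step is then to invoke Theorem~\ref{thm:collapse} (the forgetful property): acceptance of $t$ by $\B$ as an $\omega B$-automaton, which unfolds to ``$\exists \sigma\ \forall \pi\ \exists N,\ \pi \in \limitbounded(N) \cap \parp$'', is equivalent to the uniform statement ``$\exists \sigma\ \exists N\ \forall \pi,\ \pi \in \limitbounded(N) \cap \parp$''. I would then pass from $\B$ to the two-way alternating parity cost-automaton $\B'$ obtained by augmenting each transition with the option of resetting all counters at the cost of visiting a very bad color; as noted above, this device converts the $\limitbounded(N)$ requirement into a genuine $\bounded(N)$ requirement, so that $\B'$ accepts $t$ as a parity cost-automaton precisely when $\B$ accepts $t$ as an $\omega B$-automaton. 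Since $t$ is a regular tree and $\B'$ is a two-way alternating parity cost-automaton, the membership problem for $t$ and $\B'$ is decidable by the cited result of~\cite{BlumensathColcombetKuperbergVandenboom13}. Chaining the equivalences gives an algorithm deciding whether Eve wins $\G$ from $(q_0,\bot)$, which is what ``solving'' the game means.

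The main obstacle — and the reason the theorem is not entirely trivial — is not in the present proof but in the correctness of the collapse step it relies on: Theorem~\ref{thm:collapse} is what licenses the switch from the non-uniform $\MSO+\Bn$-style quantification to the uniform cost-$\MSO$-style one, and without it the $\omega B$ acceptance condition (non-uniform bounds) would not be expressible by a parity cost-automaton (uniform bounds). A secondary point to verify carefully is that the passage from $\B$ to $\B'$ is sound in both directions, i.e.\ that introducing the ``reset-at-bad-color'' transitions genuinely captures the difference between $\limitbounded(N)$ and $\bounded(N)$ without spuriously changing the winner: Eve can use the reset to discard the finite prefix on which counters are still large, while the very bad color prevents her from abusing resets infinitely often, so the parity condition forces the reset to happen only finitely often along any winning play. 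Assuming these two facts, which are stated earlier in the excerpt, the remaining argument is a routine composition of effective reductions, and the theorem follows.
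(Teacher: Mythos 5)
Your proposal follows the paper's own argument essentially verbatim: the Kupferman--Vardi reduction of the pushdown game to membership of the stack-content tree $t$ in the two-way alternating $\omega B$-automaton $\B$, the collapse result (Theorem~\ref{thm:collapse}) to pass from the non-uniform to the uniform bound, the construction of the cost-automaton $\B'$ with reset-at-bad-color transitions to absorb the $\limitbounded(N)$ versus $\bounded(N)$ discrepancy, and finally decidability of membership for two-way alternating parity cost-automata from~\cite{BlumensathColcombetKuperbergVandenboom13}. This is correct and matches the paper's proof.
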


\subsection{Lower bound on the collapse for finitary conditions}

In this subsection, we prove lower bounds on the collapse bound which appears in Theorem~\ref{thm:collapse},
focusing on pushdown finitary B\"uchi games; 
note that this implies the same lower bound for the more general case of pushdown $\omega B$ games.

For the special case of pushdown finitary B\"uchi games, Theorem~\ref{thm:collapse}
can be stated as follows:
\begin{corollary}
For all pushdown finitary B\"uchi games, there exists $N$ such that:
$$\WE(\fin\bucf) = \WE(\buc(F,N))\ .$$
\end{corollary}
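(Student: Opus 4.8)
The plan is to derive this corollary directly from Theorem~\ref{thm:collapse} (the forgetful property) by specializing it to pushdown finitary B\"uchi games, and then to bridge the small gap between the $\limitbounded$ formulation appearing in the theorem and the prefix-dependent uniform B\"uchi condition appearing in the statement. Recall that the remark following the definition of finitary conditions gives a deterministic $\omega B$-automaton recognizing $\fin\bucf$: it uses one counter, incremented while we are away from $F$ and reset upon visiting $F$. Composing the pushdown arena with this automaton, a finitary B\"uchi game becomes a pushdown $\omega B$ game whose single counter is bounded by $N$ along a play exactly when the distance to the next $F$-vertex never exceeds $N$ from that point on. Under this correspondence, $\bounded(N)$ translates to the bounded uniform B\"uchi condition $\bnd\bucfn$, and $\limitbounded(N)$ translates to the (prefix-dependent) uniform B\"uchi condition $\bucfn$, since $\bucfn = V^* \cdot \bnd\bucfn$ is precisely the set of plays having a suffix in which the counter stays bounded by $N$. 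The parity component is trivial here (a single even color), so $\parp$ is always satisfied and drops out.

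With this dictionary in hand, the two equivalent statements of Theorem~\ref{thm:collapse} become, respectively, ``Eve wins $\fin\bucf$ from $(q_0,\bot)$'' (the non-uniform bound being exactly the $\limsup < \infty$ requirement of the finitary condition) and ``there exists $N$ such that Eve wins $\bucfn$ from $(q_0,\bot)$''. So the collapse result already yields, for each initial configuration, a bound $N$ witnessing $(q_0,\bot) \in \WE(\fin\bucf) \iff (q_0,\bot) \in \WE(\bucfn)$. The only remaining point is uniformity: the corollary asserts a \emph{single} $N$ working for \emph{all} configurations simultaneously, i.e.\ equality of the two regular sets of configurations $\WE(\fin\bucf)$ and $\WE(\bucfn)$. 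For this I would invoke the argument already used in the proof of the second property preceding Theorem~\ref{thm:collapse}: by Theorem~\ref{thm:pushdown_games_regularity} each $\WE(\bucfn)$ is recognized by an alternating $\PP$-automaton of size $|Q|$ (note $\bucfn$ is prefix-independent, so the theorem applies); the sequence $(\WE(\bucfn))_N$ is increasing, and there are only finitely many alternating $\PP$-automata of size $|Q|$, so the sequence is eventually constant. Its limit is contained in $\WE(\fin\bucf)$ by the easy inclusion $\bigcup_N \bucfn \subseteq \fin\bucf$, and the reverse inclusion $\WE(\fin\bucf) \subseteq \bigcup_N \WE(\bucfn)$ is exactly the configuration-wise collapse just established. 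Hence the limit equals $\WE(\fin\bucf)$, and it equals $\WE(\bucfn)$ for all large $N$; picking any such $N$ proves the corollary.

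The main obstacle, if any, is simply making the translation between finitary B\"uchi and the single-counter $\omega B$ condition airtight at the level of \emph{prefix-dependent} conditions: one must check that composing with the deterministic counter automaton does not disturb the correspondence $\limitbounded(N) \leftrightarrow \bucfn$ versus $\bounded(N) \leftrightarrow \bnd\bucfn$, and in particular that the prefix-independence needed to apply Theorem~\ref{thm:pushdown_games_regularity} is present for $\bucfn$ but (harmlessly) absent for $\bnd\bucfn$. Everything else is bookkeeping: the parity layer is vacuous, the configuration-wise equivalence is a verbatim instance of Theorem~\ref{thm:collapse}, and the upgrade to a uniform $N$ reuses the finiteness-of-automata stabilization argument already spelled out in the paper. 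I would therefore present the proof as ``specialize Theorem~\ref{thm:collapse}; observe $\bounded(N)$ becomes $\bnd\bucfn$ and $\limitbounded(N)$ becomes $\bucfn$; conclude equality of the two winning regions via the stabilization of the regular sets $\WE(\bucfn)$,'' keeping it to a few lines.
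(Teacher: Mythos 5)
Your proposal is correct and follows essentially the same route as the paper: the paper presents this corollary as a direct specialization of Theorem~\ref{thm:collapse} via the identification of $\fin\bucf$ with the one-counter $\omega B$-condition ($\limitbounded(N)$ becoming $\bucfn$), and the uniformity of $N$ over all configurations that you re-derive by stabilization of the regular sets $\WE(\bucfn)$ (Theorem~\ref{thm:pushdown_games_regularity} plus finiteness of alternating $\PP$-automata of size $|Q|$) is exactly the argument used in the paper's proof of Theorem~\ref{thm:collapse} itself. The only blemish is the early parenthetical calling $\bucfn$ ``prefix-dependent'' (it is prefix-independent, as you correctly state later when invoking Theorem~\ref{thm:pushdown_games_regularity}).
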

The collapse bound depends on the following two relevant parameters of the pushdown arena: $n = |Q|$,
the number of states, and $k = |\Gamma|$, the size of the stack alphabet.

We show that the collapse bound is at least doubly-exponential in the number of states 
and exponential in the stack alphabet.

\subsubsection{The collapse bound for deterministic pushdown systems}

We start by considering deterministic pushdown systems,
which is the very restricted case of pushdown games where 
from every configuration, there is only one transition,
so no player has choice.

Standard pumping arguments shows that the collapse bound is at most exponential in both
the number of states and the stack alphabet.

\begin{lemma}\label{lem:pushdown_deterministic_upperbound}
For all deterministic pushdown systems,
we have:
$$\WE(\fin\bucf) = \WE(\buc(F,N))\ ,$$ 
for $N = n^2 \cdot k^{n \cdot k + 1}$.
\end{lemma}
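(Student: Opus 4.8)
The plan is to analyze the unique infinite run of the deterministic pushdown system from a given starting configuration and show that if this run satisfies $\fin\bucf$ (i.e.\ visits $F$ infinitely often with bounded gaps in the limit), then in fact the gaps are bounded by $N = n^2 \cdot k^{n\cdot k + 1}$ from the start (after discarding an irrelevant prefix), where $n = |Q|$ and $k = |\Gamma|$. Since the system is deterministic, a configuration $(q, u\bot)$ determines the entire future, so the behaviour between two consecutive $F$-visits is governed by a segment of computation starting in some control state $q$ with some stack content, and we must bound the length of any such $F$-free segment that nonetheless lies on a run that visits $F$ infinitely often.

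First I would set up the standard observation that a long $F$-free segment must, at some point, reach a maximal stack height beyond which the segment never pops below a certain level again before the next $F$-visit, or else keeps the stack bounded throughout; either way one identifies a sub-segment during which only the top part of the stack (of bounded size) is ever touched, so the relevant state is a pair (control state, bounded stack window) ranging over a set of size roughly $n \cdot k^{O(nk)}$. If such a configuration-type repeats within an $F$-free segment, then by determinism the computation between the two occurrences is a ``pumpable'' loop that, crucially, does \emph{not} visit $F$ and returns to the same configuration-type; iterating it would produce arbitrarily long $F$-free stretches arbitrarily late in the run, contradicting $\limsup_k \distk(\play,F) < \infty$. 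Hence no configuration-type repeats inside an $F$-free segment, which caps its length by the number of types, giving the claimed $N$. The exact bookkeeping — distinguishing the case where the stack stays bounded in the segment from the case where it grows, and pinning down exactly which finite window of the stack is ``active'' — is where the constant $n^2 \cdot k^{nk+1}$ comes from: one factor of $n$ for the control state at the start of the segment, another for the control state that is reached, a $k^{nk}$-type factor for the bounded stack window together with the ``surface'' behaviour (which bottom states can be re-exposed), and a final $k$ for the top symbol.

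I would then combine this with the fixpoint/characterization machinery already available: $\WE(\buc(F,N)) \subseteq \WE(\fin\bucf)$ is immediate (a uniform bound is a fortiori an eventual uniform bound), and the reverse inclusion is exactly the pumping argument above applied to the unique play from each winning configuration — with the one subtlety that $\fin\bucf$ only guarantees the bound \emph{in the limit}, so strictly speaking one first discards a finite prefix and then argues the bound on the tail is at most $N$; but since in the deterministic setting ``winning'' is prefix-closed in the appropriate sense and $\buc(F,N)$ as used in the surrounding text refers to the limit-bounded version via $\bucfn = V^* \cdot \bnd\bucfn$, the prefix causes no real trouble. The main obstacle I anticipate is not the repetition-implies-pumping idea, which is routine, but rather making precise the notion of the ``active window'' of the stack during an $F$-free segment so that the number of relevant configuration-types is genuinely bounded by $n^2 \cdot k^{nk+1}$ and not something larger; in particular one must be careful that when the segment pops deep into the stack, the symbols it exposes were themselves pushed long ago and could in principle be arbitrary, so the right move is to track the sequence of ``return points'' (control state on first exposing each old stack level), of which there are at most $n$ distinct relevant ones, rather than the stack contents themselves. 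Getting that encoding right, and checking it is preserved by the pumped loop, is the delicate step.
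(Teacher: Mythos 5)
Your reduction to bounding the lengths of $F$-free segments, the trivial inclusion $\WE(\buc(F,N)) \subseteq \WE(\fin\bucf)$, and the remark about discarding a finite prefix are all fine. The genuine gap is the central claim that ``no configuration-type repeats inside an $F$-free segment'', and it sits exactly at the step you yourself flag as delicate. When an $F$-free segment pops deep into stack content pushed before the segment, a repetition of (control state, bounded stack window) yields no pump: by determinism the future from the second occurrence mimics the first only as long as the first occurrence's future never reads below its window, and during a descent it does read below, exposing old symbols that may differ between the two occurrences. Concretely, a system that from state $q$ on top symbol $a$ simply pops and stays in $q$ produces arbitrarily long $F$-free descending segments in which the type $(q,a)$ repeats constantly, with no contradiction; so no counting of window-types or ``return points'' can cap the length of a single $F$-free segment by a function of $n$ and $k$. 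What must be shown is not that long descending segments are impossible, but that they occur only finitely often along the run, and that is intrinsically a global statement, not a segment-local one. Relatedly, the phrase ``iterating it would produce arbitrarily long $F$-free stretches arbitrarily late in the run'' is not available in a deterministic system: there is a single run, and the only legitimate consequence of a genuine repetition (equal configurations, or equal state and top symbol at a position after which the run never pops below that level) is that the run itself loops forever without visiting $F$, contradicting plain B\"uchi rather than the limsup condition.

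The paper closes this gap with a global structure argument that your sketch lacks. Either some configuration repeats, in which case the run is ultimately periodic, a vertical pumping argument bounds the height range of the simple cycle by $n \cdot k$, and hence its length by $n \cdot k^{n \cdot k}$; or no configuration repeats, and one considers the ``steps'' of the whole run, i.e.\ configurations whose stack height is minimal among all later configurations. Consecutive steps are at most $n \cdot k^{n \cdot k}$ apart, two among the first $n \cdot k$ steps share control state and top symbol, and by determinism (using that the run never pops below a step) the block between them is repeated forever with increasing stack height; its length is at most $n^2 \cdot k^{n \cdot k + 1}$. Since the run visits $F$ infinitely often, the repeated cycle or climbing block contains an $F$-visit, which bounds all sufficiently late gaps by $N$. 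This ``ultimately repeats a climbing block'' insight is precisely what rules out infinitely many long descents, and I do not see how your local counting argument can be completed without it.
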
 

\begin{proof}
We prove the left-to-right inclusion.
Consider a path $\pi$, and assume it satisfies the finitary B\"uchi condition $\fin\bucf$.
We will show that it also satisfies the uniform B\"uchi condition for the bound $N = n^2 \cdot k^{n \cdot k + 1}$.
This collapse result is similar in fashion to the one obtained from the study of finitary games
over finite arenas. 
It is clear that in this setting, if a path in a \textit{deterministic} arena satisfies the finitary B\"uchi condition,
then it satisfies the uniform B\"uchi condition for the bound $n-1$ ($n$ being the number of vertices).
Indeed, such a path is ultimately periodic, and the simple cycle it describes has length at most $n$.
The content of this proof is to exhibit such a periodic pattern in $\pi$.
Using a case distinction, we prove that either $\pi$ ultimately repeats a cycle of length at most $n \cdot k^{n \cdot k}$,
or ultimately repeats a cycle of increasing height (with respect to the stack) 
of length at most $n^2 \cdot k^{n \cdot k + 1}$.
The two cases we consider are the following, they are illustrated in Figure~\ref{fig:pushdown_deterministic_upperbound}:
\begin{enumerate}
	\item there is some configuration that appears twice;
	\item no configuration appears twice.
\end{enumerate}

Before going through these two cases, we state an observation that will be used several times in the proof:
a simple path (that is, where each configuration appears at most once) 
whose maximal stack height difference is less than $H$
has length at most $n \cdot k^H$.

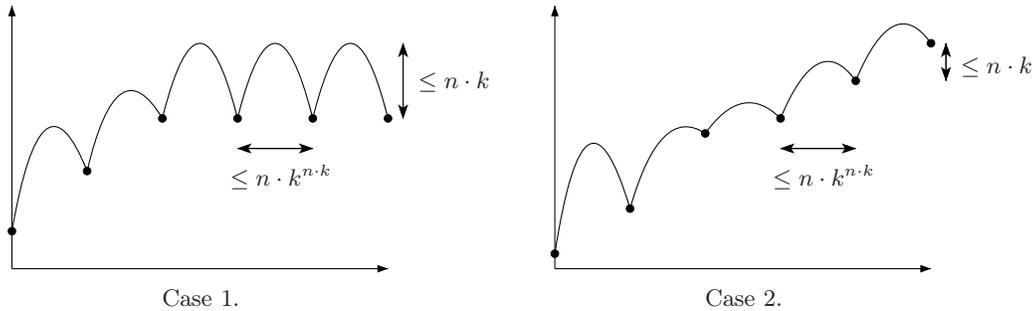
\begin{figure}
\begin{center}
\begin{picture}(60,40)(0,-2)
	\drawline(0,0)(50,0)
	\drawline(0,0)(0,35)

	\gasset{Nfill=y,Nh=1,Nw=1,AHLength=0}

	\node(1)(0,5){}
	\node(2)(10,13){}
	\node(3)(20,20){}
	\node(4)(30,20){}
	\node(5)(40,20){}
	\node(6)(50,20){}

	\drawqbezier(0,5,4,28,10,13)
	\drawqbezier(10,13,14,30,20,20)
	\drawedge[curvedepth=10](3,4){}
	\drawedge[curvedepth=10](4,5){}
	\drawedge[curvedepth=10](5,6){}

	\gasset{Nfill=y,Nh=1,Nw=1,AHLength=2,ATnb=1}
	\drawline(30,16)(40,16)
	\drawline(40,16)(30,16)
	\put(29,11){$\le n \cdot k^{n \cdot k}$}

	\drawline(52,20)(52,30)
	\drawline(52,30)(52,20)
	\put(54,24){$\le n \cdot k$}

	\put(20,-5){Case 1.}
\end{picture}
\hspace*{1cm}
\begin{picture}(60,40)(0,-2)
	\drawline(0,0)(50,0)
	\drawline(0,0)(0,35)

	\gasset{Nfill=y,Nh=1,Nw=1,AHLength=0}

	\node(1)(0,2){}
	\node(2)(10,8){}
	\node(3)(20,18){}
	\node(4)(30,20){}
	\node(5)(40,25){}
	\node(6)(50,30){}

	\drawqbezier(0,2,4,28,10,8)
	\drawqbezier(10,8,14,22,20,18)
	\drawqbezier(20,18,25,25,30,20)
	\drawqbezier(30,20,35,32,40,25)
	\drawqbezier(40,25,45,37,50,30)

	\gasset{Nfill=y,Nh=1,Nw=1,AHLength=2,ATnb=1}
	\drawline(30,16)(40,16)
	\drawline(40,16)(30,16)
	\put(29,11){$\le n \cdot k^{n \cdot k}$}

	\drawline(52,25)(52,30)
	\drawline(52,30)(52,25)
	\put(54,26){$\le n \cdot k$}

	\put(20,-5){Case 2.}
\end{picture}
\end{center}
\caption{The case distinction for Lemma~\ref{lem:pushdown_deterministic_upperbound}.}
\label{fig:pushdown_deterministic_upperbound}
\end{figure}

We start with the first case. It is clear that $\pi$ is ultimately periodic; let $C$ be the simple cycle described by $\pi$.
We can see that the maximal stack height difference in the cycle $C$ is less than $n \cdot k$,
relying on a vertical pumping argument.
It follows, relying on the earlier observation, that the cycle has length at most $n \cdot k^{n \cdot k}$.

We now focus on the second case. Here $\pi$ is not ultimately periodic, 
but we will show that it repeats a cycle of increasing height.
Define a step to be a configuration $(q,u \bot)$ in $\pi$ whose stack height is minimal among the configurations
that are visited after $(q, u \bot)$ in $\pi$.
Since no configuration appears twice, it is clear that $\pi$ has infinitely many steps.
We say that two steps are consecutive in $\pi$ if there are no steps inbetween in $\pi$.
We first observe that two consecutive steps are separated by at most $n \cdot k^{n \cdot k}$ transitions:
indeed the stack height, which remains higher than the height of the first step, 
must remain within the $n \cdot k$ intervall above the first step.
Consider now the $n \cdot k$ first steps; two of them share the same state and top stack content,
let us denote them $(q,au \bot)$ and $(q,avau \bot)$.
The path $\pi$ ultimately repeats a cycle of increasing height, as follows:
$$(q,au \bot) \rightarrow (q,avau \bot) \rightarrow \ldots \rightarrow (q,(av)^pau \bot),$$
whose length is bounded by $n^2 \cdot k^{n \cdot k + 1}$.
This concludes.
\hfill\qed
\end{proof}

The collapse bound proved in this lemma seems a priori quite large for such an easy case,
as it is exponential in both $n$ and $k$.
However, Example~\ref{ex:pushdown_deterministic_lowerbound} shows
that it is asymptotically tight.

\begin{example}
\label{ex:pushdown_deterministic_lowerbound}
Figure~\ref{fig:pushdown_deterministic_lowerbound} presents a deterministic pushdown system,
where the only path from $(F,\bot)$ satisfies the condition $\buc(F,N)$ for $N = O(2^n)$
but not for asymptotically less.
This system encodes a number in binary in the stack with the least significant bit on the top of the stack.
It has two phases: an initialization phase and an increment phase.

The initialization phase has $n$ states and consists in pushing $n$ times the symbol $0$.
The increment phase consists in adding one to the number encoded in the stack,
\textit{i.e} $1^k0u\bot \xrightarrow{\ *\ } 0^k1u\bot$.
This phase goes on until it reaches the stack content $1^n \bot$, which is emptied to reach the only B\"uchi state $F$,
and restarts from scratch.
This pushdown process has $O(n)$ states and the collapse bound is $O(2^n)$.

An easy generalization consists in encoding in base $k$ instead of $2$, 
which would give an arena of size $O(k \cdot n)$ and a collapse bound asymptotically in $k^n$,
\textit{i.e} exponential in the number of states but not in the stack alphabet.

To obtain an arena where the collapse bound is exponential in both parameters,
we perform slight modifications, as follows.
In the latter arena, the numbers are encoded with $n$ bits; we improve this by encoding the numbers using $k \cdot n$ bits.
The increment phase remains the same.
The initialization phase is not optimal;
an ideal initialization phase would use $O(n)$ states to push $0^{k \cdot n}$ on the stack,
but this is not possible, so we use a weaker initialization phase with $n$ states that pushes:
$$\underbrace{(k-1) \ldots (k-1)}_n \cdot \underbrace{(k-2) \ldots (k-2)}_n \cdot \ldots \cdot 
\underbrace{1 \ldots 1}_n \cdot \underbrace{0 \ldots 0}_n \ .$$
The modified gadget is represented in Figure~\ref{ex:pushdown_deterministic_improved_gadget}.

Since the counter does not start from $0$ but from the number encoded in the latter stack,
this new arena performs at bit less than $k^{k \cdot n}$ increment phases,
but more than half this number, so its collapse bound is $O(k^{k \cdot n})$,
exponential in both $n$ and $k$.

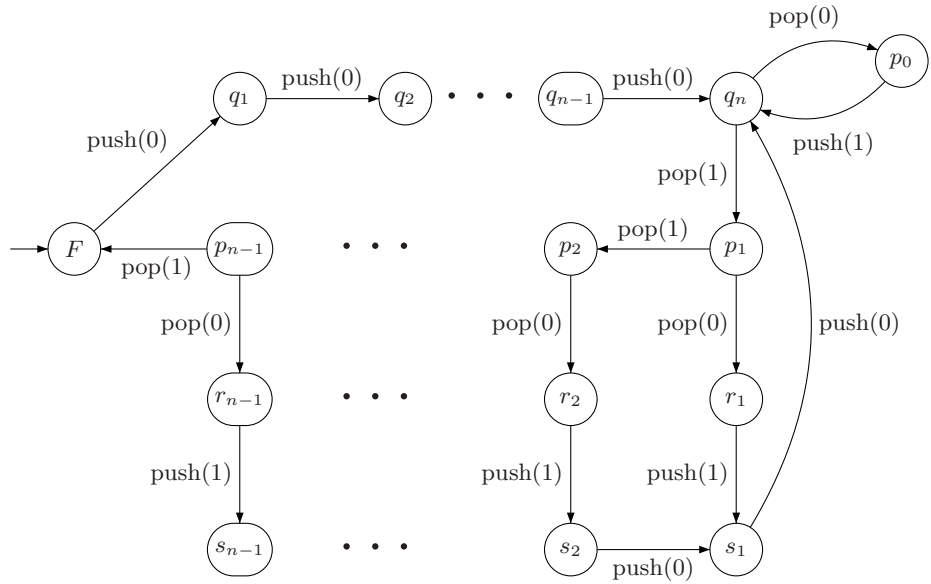
\begin{figure}
\begin{center}
\begin{picture}(110,73)(0,0)
	\gasset{Nh=7,Nw=7}

	\node[Nmarks=i,polyangle=45](F)(0,40){$F$}

	\node[polyangle=45](q_1)(22,60){$q_1$}
	\node[polyangle=45](q_2)(44,60){$q_2$}
	\put(49,60){\begin{Huge}$\ldots$\end{Huge}}
	\node[polyangle=45,Nadjust=w](q_n-1)(66,60){$q_{n-1}$}
	\node[polyangle=45](q_n)(88,60){$q_n$}

	\node[polyangle=45](p_0)(110,65){$p_0$}

	\node[polyangle=45,Nadjust=w](p_n-1)(22,40){$p_{n-1}$}
	\put(35,40){\begin{Huge}$\ldots$\end{Huge}}
	\node[polyangle=45](p_2)(66,40){$p_2$}
	\node[polyangle=45](p_1)(88,40){$p_1$}

	\node[polyangle=45,Nadjust=w](r_n-1)(22,20){$r_{n-1}$}
	\put(35,20){\begin{Huge}$\ldots$\end{Huge}}
	\node[polyangle=45](r_2)(66,20){$r_2$}
	\node[polyangle=45](r_1)(88,20){$r_1$}

	\node[polyangle=45,Nadjust=w](s_n-1)(22,0){$s_{n-1}$}
	\put(35,0){\begin{Huge}$\ldots$\end{Huge}}
	\node[polyangle=45](s_2)(66,0){$s_2$}
	\node[polyangle=45](s_1)(88,0){$s_1$}

	\drawedge(F,q_1){$\push(0)$}
	\drawedge(q_1,q_2){$\push(0)$}
	\drawedge(q_n-1,q_n){$\push(0)$}

	\drawedge[curvedepth=5](q_n,p_0){$\pop(0)$}
	\drawedge[curvedepth=5](p_0,q_n){$\push(1)$}
	
	\drawedge[ELside=r](q_n,p_1){$\pop(1)$}
	\drawedge[ELside=r](p_1,p_2){$\pop(1)$}
	\drawedge(p_n-1,F){$\pop(1)$}

	\drawedge[ELside=r](p_1,r_1){$\pop(0)$}
	\drawedge[ELside=r](p_2,r_2){$\pop(0)$}
	\drawedge[ELside=r](p_n-1,r_n-1){$\pop(0)$}

	\drawedge[ELside=r](r_1,s_1){$\push(1)$}
	\drawedge[ELside=r](r_2,s_2){$\push(1)$}
	\drawedge[ELside=r](r_n-1,s_n-1){$\push(1)$}

	\drawedge[ELside=r](s_2,s_1){$\push(0)$}
	\drawedge[ELside=r,curvedepth=-10](s_1,q_n){$\push(0)$}
\end{picture}
\end{center}
\caption{A deterministic pushdown system with an exponential collapse bound.}
\label{fig:pushdown_deterministic_lowerbound}
\end{figure}
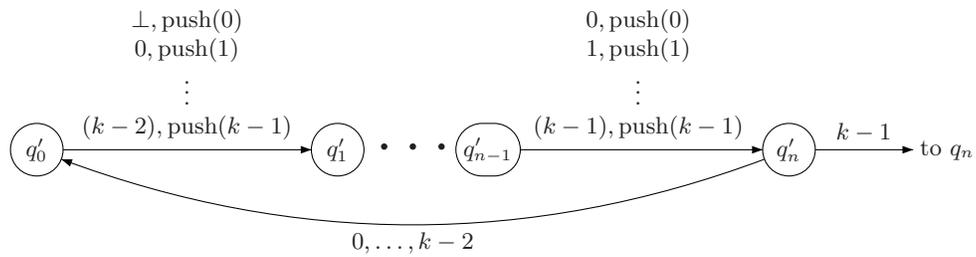
\begin{figure}
\begin{center}
\begin{picture}(120,35)(0,0)
	\gasset{Nh=7,Nw=7}

	\node[polyangle=45](q_0)(0,10){$q'_0$}
	\node[polyangle=45](q_1)(40,10){$q'_1$}
	\put(45,10){\begin{Huge}$\ldots$\end{Huge}}
	\node[polyangle=45,Nadjust=w](q_n-1)(60,10){$q'_{n-1}$}
	\node[polyangle=45](q_n)(100,10){$q'_n$}
	\node[polyangle=45,Nframe=n](q_G)(120,10){\ \ to $q_n$}

	\drawedge(q_0,q_1){$\begin{array}{cc}\bot,\push(0) \\ 0,\push(1) \\ \vdots \\ (k-2),\push(k-1)\end{array}$}
	\drawedge(q_n-1,q_n){$\begin{array}{cc}0,\push(0) \\ 1,\push(1) \\ \vdots \\ (k-1),\push(k-1)\end{array}$}
	\drawedge[curvedepth=10](q_n,q_0){$0,\ldots,k-2$}
	\drawedge(q_n,q_G){$k-1$}

\end{picture}
\end{center}
\caption{The improved initialization gadget.}
\label{ex:pushdown_deterministic_improved_gadget}
\end{figure}
\end{example}

\subsubsection{The collapse bound for pushdown games}

For the following three examples, we denote by $p_1,p_2,\ldots$ the sequence of prime numbers,
and by $q_n$ the product of the first $n$ prime numbers.
We first start with the case where the stack alphabet has size one,
\textit{i.e} the subclass of one-counter pushdown games.
Example~\ref{ex:lower_bound_one_counter} shows that in this case
the bound is exponential in the number of states.

\begin{example}
\label{ex:lower_bound_one_counter}
Figure~\ref{fig:lower_bound_one_counter} presents a one-counter pushdown game,
where for the condition $\buc(F,N)$, Eve wins for $N = q_n$
but not for $N-1$.
Eve first pushes a sequence of $a$'s on the stack, 
then Adam chooses a prime number up to $p_n$ 
and checks that the size of this sequence is divisible by this number.
For this, Adam goes to a loop of size $p_k$,
going deterministically through it while popping one $a$ at a time.
If the empty stack is encountered in the beginning of the loop,
then the size of the stack is divisible by $p_k$,
and the game starts from scratch, visiting a B\"uchi state on the way.

Since Eve does not know in advance which prime number Adam is going to choose among $p_1,\ldots,p_n$,
she has to push a non-empty sequence of size divisible by $q_n = \Pi_{1 \leq i \leq n} p_i$.
The size of the arena is $O(\sum_{1 \leq i \leq n} p_i) = O(n \cdot p_n)$, 
whereas the smallest bound Eve can secure is $q_n$.
An easy calculation shows that $q_n$ is exponential in $O(n \cdot p_n)$.
\begin{figure}
\begin{center}
\begin{picture}(90,50)(-3,0)
	\node[Nmarks=i,polyangle=45](push)(0,20){$i$}
	\drawloop[loopangle=90](push){$\push(a)$}

	\rpnode[polyangle=45](c)(15,20)(4,4){}

	\rpnode[polyangle=45](12)(35,40)(4,4){$1$}
	\rpnode[polyangle=45](02)(55,40)(4,4){$0$}
	
	\drawedge(push,c){$a$}

	\drawedge(c,12){$\pop(a)$}
	\drawedge[curvedepth=5](12,02){$\pop(a)$}
	\drawedge[curvedepth=5](02,12){$\pop(a)$}

	\rpnode[polyangle=45](13)(35,10)(4,4){$1$}
	\rpnode[polyangle=45](03)(55,20)(4,4){$0$}
	\rpnode[polyangle=45](23)(55,0)(4,4){$2$}
 	
	\drawedge[ELside=r](c,13){$\pop(a)$}
	\drawedge[ELside=r](23,03){$\pop(a)$}
	\drawedge[ELside=r](03,13){$\pop(a)$}
	\drawedge[ELside=r](13,23){$\pop(a)$}

  	\node(invisible)(75,20){$F$}
  	\node[linecolor=White](toi)(90,20){\large{\ to $i$}}
  	\drawedge(02,invisible){$\bot$}
  	\drawedge(03,invisible){$\bot$}
  	\drawedge(invisible,toi){}

\end{picture}
\end{center}
\caption{A one-counter pushdown game with an exponential bound.}
\label{fig:lower_bound_one_counter}
\end{figure}
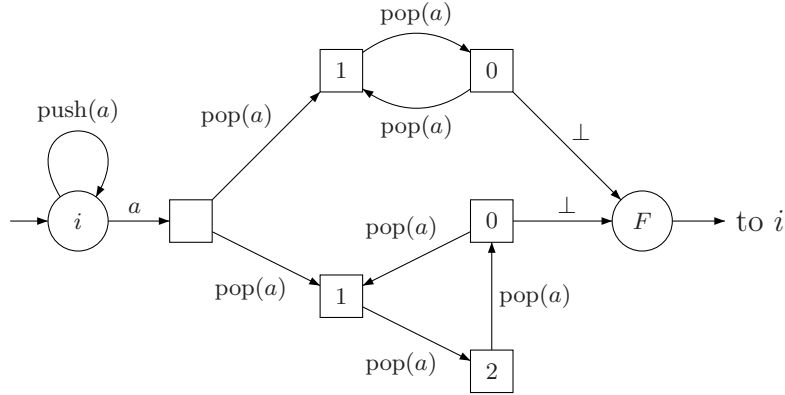
\end{example}

We now consider a stack alphabet of size two, 
and combine the two ideas underlying Example~\ref{ex:pushdown_deterministic_lowerbound}
and Example~\ref{ex:lower_bound_one_counter}, that is:
\begin{itemize}
	\item Eve needs to push a sequence of exponential size;
	\item this sequence, seen as a binary decomposition of the number $0$, is incremented by one
until it reaches the sequence of only $1$'s, where the game empties the stack, 
starts from scratch and visits a B\"uchi state along the way.
\end{itemize}
Example~\ref{ex:double_exponential_lower_bound} implements this idea,
showing that the collapse bound is at least doubly-exponential in the number of states.

\begin{example}
\label{ex:double_exponential_lower_bound}
Figure~\ref{fig:double_exponential_lower_bound} presents a pushdown game,
where Eve wins for the condition $\buc(F,N)$ for $N = O(2^{q_n})$,
but not for $N = o(2^{q_n})$.
In the figure, ``sh'' stands for stack-height: we saw in Example~\ref{ex:lower_bound_one_counter}
how Adam can check that the size of the stack is a multiple of $q_n$, product
of the $n$ first prime numbers, using only $O(n \cdot p_n)$ states.
As in the previous example, Eve first pushes a sequence of $0$'s on the stack, 
whose length must be a multiple of $q_n$,
otherwise Adam wins by checking it.
From $s$ starts a binary increment similar to the one presented in Example~\ref{ex:pushdown_deterministic_lowerbound};
however in this example, the number of bits allowed was linear in the size of the arena,
and we are now lifting this up to an exponential number of bits.
So, we have to rely on the players' interactions to ensure that the binary increment
is correctly executed.
The action performed in the stack should be:
$$(s,1^k0u\bot) \xrightarrow{\ *} (s,0^k1u\bot).$$
The first part is deterministic:
$$(s,1^k0u\bot) \xrightarrow{\ *} (c,1u\bot).$$
From $c$, Eve pushes some $0$ on the stack.
If she pushes less than $k$ symbols, then Adam wins by checking,
so she has to push at least $k$.
Note, however, that she could push $k$ plus any multiple of $q_n$,
but she would only do herself a disservice.

The arena has size $O(n \cdot p_n)$, so $N$ is doubly-exponential in the number of states.

\begin{figure}
\begin{center}
\begin{picture}(100,25)(-3,0)
	\gasset{Nw=6,Nh=6,loopdiam=5}
	
	\node[Nmarks=i,polyangle=45](push)(0,3){$i$}
	\rpnode[polyangle=45](choix)(15,3)(4,4){}
  	\node[Nadjust=w](check)(15,23){sh $\not\equiv 0 [q]$}
	\node(s)(30,3){$s$}
	\node(int)(50,3){}
	\node(c)(70,3){$c$}
  	\node(F)(85,3){$F$}
  	\node[linecolor=White](to_i)(100,3){\large{\ to $i$}}
 	
	\drawedge(choix,s){}
	\drawedge(choix,check){}
	\drawloop[loopangle=90](push){$\push(0)$}	
	\drawedge(push,choix){$0$}
	\drawloop[loopangle=90](s){$\pop(1)$}	
	\drawedge(s,int){$\pop(0)$}
	\drawedge(int,c){$\push(1)$}
	\drawloop[loopangle=90](c){$\push(0)$}	
	\drawedge[curvedepth=7](c,choix){}

  	\drawedge(c,F){$\bot$}
  	\drawedge(F,to_i){}
\end{picture}
\end{center}
\caption{A pushdown game with a doubly exponential bound.}
\label{fig:double_exponential_lower_bound}
\end{figure}
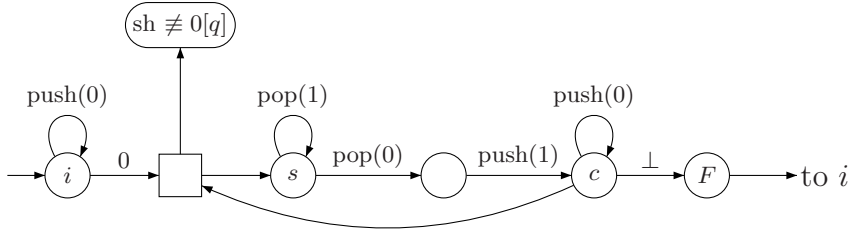
\end{example}

We now turn to a stack alphabet of size $2k+1$, 
and roughly ``nest'' Example~\ref{ex:double_exponential_lower_bound}.

Let $\Gamma = \set{a_1,b_1,\ldots,a_k,b_k} \cup \set{\sharp}$.
The stack configurations we consider belong to the regular language:
$$L = \bigcup_{1 \leq i \leq k} 
\left(\set{a_i,b_i}^{q_n}\right)^+ \cdot \sharp \cdot 
\left(\set{a_{i-1},b_{i-1}}^{q_n}\right)^+ \cdot \sharp 
\ldots \sharp \cdot \left(\set{a_1,b_1}^{q_n}\right)^+.$$
Each block $\left(\set{a_i,b_i}^{q_n}\right)^+$ is seen as a number encoded in binary,
where $a_i$ is $0$ and $b_i$ is $1$,
which is initialized to $a_i^{q_n}$ and incremented by one step by step.
However, the incrementation policy requires that to increment in the $i$\textsuperscript{th}
block for $i < k$, one must increment in the $(i+1)$\textsuperscript{th} block.
Hence two increment phases in the $i$\textsuperscript{th} block
are separated by $2^{q_n}$ increment phases in the $(i+1)$\textsuperscript{th} block,
which implies that two increment phases in the first block
are separated by $2^{(k-1) \cdot q_n}$ transitions.
Hence the $2^{q_n}$ increment phases required in the first block
are executed within $2^{k \cdot q_n}$ steps.
Example~\ref{ex:stack_alphabet_lower_bound} constructs such a game.

\begin{example}
\label{ex:stack_alphabet_lower_bound}
We sketch the construction of a pushdown game,
where Eve wins $\buc(F,N)$ for $N = O(2^{k \cdot q_n})$, but not for asymptotically less.

First, following an easy adaptation of Example~\ref{ex:lower_bound_one_counter}
we construct a game where Eve wins if and only if the stack content belongs to the language $L$.
It has $k$ components, each in charge of checking a block $\set{a_i,b_i}^{q_n}$.
Eve first chooses $i$, and then Adam chooses a prime number to check that the size of the block
is a multiple of the chosen prime number.
Once a $\sharp$ symbol is reached, it is popped and the run goes on 
with the $(i-1)$\textsuperscript{th} component, until the stack is empty.
The size of this game is $O(k \cdot n \cdot p_n)$.

As before, Eve first pushes a sequence of $a_1$'s on the stack, 
whose length must be a multiple of $q_n$,
otherwise Adam wins by checking it.
If he sends the pebble to $d$, then Eve chooses an $i$
and starts a binary increment from $v_i$, 
similar to the one presented in Example~\ref{ex:double_exponential_lower_bound}.
There are some differences, which appear at the end of an increment phase.
If the block contained no $a_i$'s, then the following case distinction occurs:
\begin{itemize}
	\item If $1 < i \leq k$,
then the symbol $\sharp$ is popped from the stack, and another increment phase starts from $v_{i-1}$.
	\item If $i = 1$,
then the game starts from scratch after paying a visit to a B\"uchi state.
\end{itemize}
Otherwise, the first $a_i$ is turned into a $b_i$, and then Eve pushes some $a_i$'s
before sending the pebble to a state controlled by Adam.
There, he can check that the stack content belongs to $L$,
but he also has another option, following the case distinction:
\begin{itemize}
	\item If $1 \leq i < k$,
then Adam can send the pebble back to the initial state, pushing a $\sharp$ symbol along the way.
	\item If $i = k$, 
then Adam can send the pebble to $v_k$.
\end{itemize}

Whenever Adam sends the pebble back to the initial state 
after an increment phase of the $i$\textsuperscript{th} block,
Eve has no choice but to push a sequence of $a_{i+1}$'s on the stack,
whose length must be a multiple of $q_n$,
otherwise Adam wins since the stack content would not belong to $L$.

The arena obtained has size $O(k \cdot n \cdot p_n) + O(k) = O(k \cdot n \cdot p_n)$,
so the bound required for Eve to win the uniform B\"uchi condition
is doubly-exponential in the number of states and exponential in the stack alphabet.
\end{example}

\section{Pushdown games with finitary and stack boundedness conditions}
\label{sec:stack_boundedness}
In this section, we consider pushdown games with finitary parity along with stack boundedness conditions,
following~\cite{BouquetSerreWalukiewicz03,Gimbert04}.
We prove that solving such games is $\EXPTIME$-complete.
This is achieved by a reduction which relies on two ideas,
that we present separately; the first is a reduction
from finitary parity to bounded parity, and the second
a collapse result for finitary B\"uchi along with stack boundedness conditions.
We then show how to combine them to obtain a complete reduction,
with an optimal complexity.

We denote by $\SB$ the stack boundedness condition:
$$\SB = \set{\pi \mid \exists N, 
\begin{array}{c}
\textrm{ all configurations in } \pi \textrm{ have}\\
\textrm{ stack height less than } N
\end{array}
}\ .$$

\subsection{A reduction from finitary parity to bounded parity}
The reduction relies on a \emph{restart} gadget.
We consider a pushdown game with finitary parity conditions, given by the coloring function
$c : Q \rightarrow [d]$, where we assume $d$ to be odd.
Between every edge of the game we add a restart gadget, where Eve
can choose either to follow the edge, or to ``restart'':
this entails that first a vertex with priority $0$ is visited, 
where Adam can stay as long as he wants by pushing on the stack a new symbol $\sharp$,
and then Eve takes over, staying in a vertex with priority $d$ 
until all the $\sharp$ symbols are popped away from the stack,
before following the original edge.
The intuition is the following: whenever Eve chooses to restart, visiting the vertex with priority $0$
answers all previous requests, but this comes with the cost that Adam
will be able to let a request unanswered for a long time.
Therefore, Eve can restart only finitely many times.
The gadget is represented in Figure~\ref{fig:restart_gadget}.

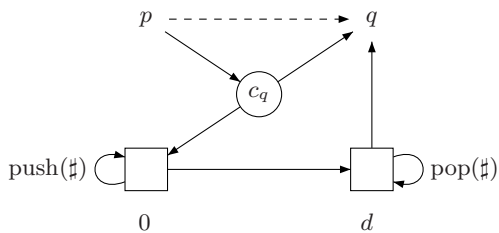
\begin{figure}
\begin{center}
\begin{picture}(30,30)(0,-5)
	\gasset{Nw=6,Nh=6,loopdiam=4}

	\node[linecolor=White](p)(0,20){$p$}
	\node[linecolor=White](q)(30,20){$q$}
	\node(c)(15,10){$c_q$}
	\rpnode[polyangle=45](0)(0,0)(4,4){}
	\rpnode[polyangle=45](2d+1)(30,0)(4,4){}

	\put(-1,-8){$0$}
	\put(28.5,-8){$d$}
		
	\drawedge[dash={1}0](p,q){}
	\drawedge(p,c){}
	\drawedge(c,q){}
	\drawedge(c,0){}
	\drawloop[loopangle=180](0){$\push(\sharp)$}
	\drawloop[loopangle=0](2d+1){$\pop(\sharp)$}
	\drawedge(0,2d+1){}
	\drawedge(2d+1,q){}
\end{picture}
\end{center}
\caption{The restart gadget.}
\label{fig:restart_gadget}
\end{figure}

\begin{lemma}
Eve wins the finitary parity game if and only if 
she wins the reduced bounded parity game.
\end{lemma}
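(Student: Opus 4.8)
The plan is to prove the two implications separately, both turning on one feature of the restart gadget: traversing the priority-$0$ vertex answers every request pending at that instant, while leaving it forces Eve through the priority-$d$ vertex, where Adam fixes (by the number of $\sharp$'s pushed) how long the maximal odd priority~$d$ stays pending before the original edge is taken. Thus a restart is a ``clean slate'' move whose cost in waiting time Adam can make arbitrarily large, and consequently every play consistent with an Eve strategy winning the bounded parity condition contains only finitely many restarts.

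\emph{From the reduced game to the original game.} Suppose Eve has a strategy $\sigma'$ winning $\bnd\parp$ in the reduced game. I would define an Eve strategy $\sigma$ in the original finitary parity game that runs $\sigma'$ internally: Adam's choices along original edges are relayed unchanged, and each time $\sigma'$ would restart for the $m$-th time, $\sigma$ simulates the restart episode with a virtual Adam pushing exactly $m$ copies of~$\sharp$, then resumes the simulation at the target state of the original edge. Since a restart episode leaves the real stack untouched, $\sigma$ is legal on the original pushdown arena. Every play $\play$ consistent with $\sigma$ is obtained from a play $\play'$ consistent with $\sigma'$ by deleting the simulated restart episodes; as $\sigma'$ wins, $\sup_k \distk(\play',c) < \infty$, but the $m$-th restart of $\play'$ keeps priority~$d$ pending for at least $m$ steps, so $\play'$ has only finitely many restarts. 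Beyond its last restart, $\play'$ and $\play$ share the same infinite suffix of states, hence the same $\distk$ values there, so $\limsup_k \distk(\play,c) \le \sup_k \distk(\play',c) < \infty$ and $\play \in \fin\parp$. Hence $\sigma$ wins the original game.

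\emph{From the original game to the reduced game.} For the converse I would use the slice decomposition of $\WE(\fin\parp)$ produced by Lemma~\ref{lem:fixpoint} in the proof of Proposition~\ref{prop:mem_fin_parity}: on the product with the memory structure $\M$ recording the most important pending request, the winning region splits into slices $S_1, S_2, \dots$ (so each winning vertex gets a finite rank), and on the core $\WE(\A_{k-1},\bnd\parp)$ of $S_k$ Eve has a finite-memory strategy winning \emph{bounded} parity inside the subarena $\A_{k-1}$. Eve's strategy in the reduced game follows these bounded-parity (and attractor) strategies, restarting exactly (i) when she first enters the core of her current slice from its attractor fringe, and (ii) whenever Adam takes an edge that in the unrestricted arena leaves the current $\A_{k-1}$, which necessarily lands the play in a slice of strictly smaller index. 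Since ranks are at least $1$ and drop strictly at every restart of type (ii), with at most one type-(i) restart per rank, only finitely many restarts occur on any play; after the last one the play stays inside a single subarena forever under a fixed bounded-parity winning strategy, bounding $\distk$ on that suffix. The last restart episode adds only a fixed finite waiting time, and every position before it lies in a finite prefix with finite $\distk$ because the priority-$0$ vertex of that restart answers all earlier requests. Therefore $\sup_k \distk < \infty$ along the whole play, so Eve wins $\bnd\parp$ in the reduced game.

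I expect the second implication to be the main obstacle, the delicate point being that the slice decomposition lives on $\A \times \M$ whereas Eve actually plays on the gadget-augmented arena: one must check that the freshly opened priority-$d$ request after a restart is correctly tracked by $\M$ and discharged within the subsequent bounded-parity strategy, that ``Adam escaping $\A_{k-1}$'' genuinely corresponds to a strict rank drop (equivalently, that while Eve follows the core strategy of $S_k$ the play cannot leave $\WE(\A_{k-1},\bnd\parp)$ except through such an edge), and that the priorities assigned inside the gadget do not create spurious unbounded waiting. The remaining work --- the play correspondence under insertion of restart episodes and the elementary $\distk$ estimates --- is routine.
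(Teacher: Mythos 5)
Your proof is correct, but both directions take routes different from the paper's, the second one substantially so. For the direction from the reduced game back to the original game, the paper argues via Adam and determinacy: it converts a winning strategy $\tau$ of Adam for $\co\fin\parp$ into a strategy $\tau_R$ that pushes an escalating number $N$ of $\sharp$'s at each restart, so that infinitely many restarts leave the last priority-$d$ request pending for unbounded time; your direct simulation of Eve's reduced-game strategy with a virtual Adam pushing $m$ symbols at the $m$-th restart uses exactly the same counting idea, with the small bonus of not invoking determinacy. For the direction from the original game to the reduced game, the paper's argument is far lighter than yours: it wraps Eve's $\fin\parp$-winning strategy $\sigma$ with a dynamic patience counter --- play like $\sigma$, restart as soon as some request has been pending for more than $N$ steps, then increment $N$ --- and observes that the play obtained by erasing the restart episodes is consistent with $\sigma$, hence satisfies $\fin\parp$, which rules out infinitely many restarts and bounds every waiting time either by the current counter value or by a restart; no structure on $\sigma$ or on the winning region is needed. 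Your route through the slice decomposition of Lemma~\ref{lem:fixpoint} and Proposition~\ref{prop:mem_fin_parity} does work: restarting on entry to a core and on every escape from the current subarena gives finitely many restarts by well-foundedness, and after the last restart the suffix is consistent with a $\bnd\parp$-winning strategy inside a fixed subarena, which is all one needs (one never has to argue that the play stays inside $\WE(\A',\bnd\parp)$, only inside $\A'$). Two side remarks of yours should be adjusted, though neither is fatal: over an infinite arena the slice ranks need not be natural numbers (transfinite stages may be required; only well-foundedness of the rank order is used), and the memory structure $\M$ is irrelevant to this lemma, since Eve's reduced-game strategy is not required to be positional, so the decomposition can be run on $\A$ directly. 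In short, the paper's counter-based wrapper buys a short, self-contained proof independent of Section~\ref{sec:strategy_complexity}; your decomposition-based strategy buys a more structural description of when Eve restarts, at the price of importing that machinery and of the bookkeeping details you flag, all of which do check out.
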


\begin{proof}
We prove both implications.
\begin{itemize}
	\item Assume Eve wins the finitary parity game, and let $\sigma$ be a winning strategy.
We construct a strategy $\sigma_R$ in the reduced bounded parity game. 
It maintains a counter, initially set to $1$, whose value is denoted by $N$.
The strategy $\sigma_R$ plays consistently with $\sigma$.
It restarts if there exists a request made before the last $N$ transitions that has not been serviced, 
and if so increments the counter by one.
We argue that $\sigma_R$ is winning for the bounded parity condition.
Consider $\pi_R$ a play consistent with $\sigma_R$:
if it remains in the restart gadget forever (Adam pushes $\sharp$ forever), it is winning.
Otherwise, if a restart occurs for a value $N$ of the counter, 
then there is a pending request not serviced within $N$
transitions, which got serviced through the restart.
Let $\pi$ be the corresponding play in the parity game, 
where we skip the restarts: $\pi$ is consistent with $\sigma$,
so it satisfies the finitary parity condition.
Now, it is clear that $\pi_R$ contains only finitely many restarts, 
otherwise it would include requests that are not serviced within $N$ transitions, 
for arbitrary $N$,
which contradicts the fact that $\pi$ satisfies the finitary parity condition.
It follows that $\pi_R$ and $\pi$ coincide from some point onwards,
so $\pi_R$ satisfies the bounded parity condition,
and $\sigma_R$ is a winning strategy in the reduced bounded parity game.
	\item Conversely, assume that Adam wins the finitary parity game, and let $\tau$ be a winning strategy.
We construct a strategy $\tau_R$ in the reduced bounded parity game.
As for the case of Eve, it features a counter, initialized to $1$ and whose value is denoted by $N$.
Outside the restart gadget, $\tau_R$ plays consistently with $\tau$,
and inside the restart gadget, $\tau_R$ pushes exactly $N$ times the symbol $\sharp$,
and then increments the counter by one.
Consider $\pi_R$ a play consistent with $\tau_R$, there are two cases:
either it includes finitely many uses of the restart gadgets, 
or infinitely many.
In the first case, $\pi_R$ coincides from some point onwards with a play $\pi$ consistent with $\tau$, 
so it spoils the bounded parity condition.
In the second case, the request made in the last vertex of the restart gadget remains
unserviced for an unbounded time, so the bounded parity condition is fooled as well.
It follows that $\pi_R$ spoils the bounded parity condition,
thus $\tau_R$ is a winning strategy in the reduced bounded parity game.
\end{itemize}
\hfill\qed
\end{proof}

\subsection{The special case of B\"uchi conditions}
In the study of finitary games over finite graphs~\cite{ChatterjeeHenzingerHorn09},
the following observation is made: finitary B\"uchi coincide with B\"uchi,
while finitary parity differs from parity as soon as three colors are involved.
Over pushdown arenas, even finitary B\"uchi differs from B\"uchi, as noted in Example~\ref{ex:intro}.
Yet when intersected with the stack boundedness condition,
the case of finitary B\"uchi specializes again and collapses to B\"uchi.

\begin{lemma}
\label{lem:buchi_collapse}
For all pushdown games,
$$\WE(\fin\bucf \cap \SB) = \WE(\bucf \cap \SB)\ .$$
\end{lemma}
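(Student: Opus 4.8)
The inclusion $\WE(\fin\bucf \cap \SB) \supseteq \WE(\bucf \cap \SB)$ is trivial, since any play that visits $F$ infinitely often while keeping the stack bounded automatically has bounded distance to the next B\"uchi vertex: if the stack height stays below $H$, there are at most $|Q| \cdot |\Gamma_\bot|^H$ reachable configurations along the play, so once $F$ is visited infinitely often the waiting times are bounded by this finite quantity. Hence $\fin\bucf \cap \SB = \bucf \cap \SB$ as sets of plays over any single bounded-stack play, and the real content is the reverse inclusion $\WE(\fin\bucf \cap \SB) \subseteq \WE(\bucf \cap \SB)$, \textit{i.e.} that Eve can use a strategy witnessing the weaker-looking finitary condition to actually win B\"uchi with bounded stack.

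The plan is to take a winning strategy $\sigma$ for Eve for $\fin\bucf \cap \SB$ and argue it is already winning for $\bucf \cap \SB$. Fix any play $\pi$ consistent with $\sigma$: by assumption $\pi \in \fin\bucf$, so there is a position $k_0$ and a bound $N_\pi$ such that every request after $k_0$ is answered within $N_\pi$ steps, and $\pi \in \SB$, so there is a height bound $H_\pi$. The key point is that $\SB$ forces $\pi$ to visit only finitely many distinct configurations whose stack height is minimal among all later configurations (the ``steps'' in the sense used in the proof of Lemma~\ref{lem:pushdown_deterministic_upperbound}); more simply, since the stack height is bounded by $H_\pi$, the play visits infinitely often some fixed stack height $h$, hence infinitely often some fixed configuration $(q,u\bot)$ with $|u\bot| = h$. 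From that configuration, the finitary B\"uchi condition on $\pi$ guarantees a B\"uchi vertex is reached within $N_\pi$ steps (treating a B\"uchi vertex as answering a standing ``request'' of color $1$, using the correspondence noted after the definition of $\distk(\play,F)$). Therefore $F$ is visited infinitely often along $\pi$, so $\pi \in \bucf$, and of course $\pi \in \SB$ still holds; thus $\pi \in \bucf \cap \SB$. Since this holds for every $\pi$ consistent with $\sigma$, the strategy $\sigma$ already witnesses $v_0 \in \WE(\bucf \cap \SB)$.

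\textbf{Main obstacle.} The subtle point I expect to need care is the quantifier structure: $\SB$ is a \emph{non-uniform} bound (each play has its own $H_\pi$, each play its own $N_\pi$), so one cannot appeal to a single finite configuration graph. The argument must be carried out play-by-play, exactly as above, never extracting a uniform bound --- which is fine, since winning a condition is by definition a play-by-play property. A secondary subtlety is the precise reading of $\fin\bucf$ as ``eventually, all requests answered within $N_\pi$ steps'': the distance $\distk(\pi,F)$ must be \emph{finite} for all large $k$, which is automatic on $\SB$-plays only because such plays cannot get ``stuck'' avoiding $F$ while the stack stays bounded --- but that is precisely what $\limsup_k \distk(\pi,F) < \infty$ asserts, so no extra work is needed. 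Everything else is routine, and the lemma follows.
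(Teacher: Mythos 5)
There is a genuine gap, and it sits exactly where you declared the lemma ``trivial''. The hard direction is $\WE(\bucf \cap \SB) \subseteq \WE(\fin\bucf \cap \SB)$, and your argument for it --- ``if the stack height stays below $H$ there are at most $|Q|\cdot|\Gamma_\bot|^H$ configurations, so the waiting times are bounded by this quantity'' --- is false at the level of plays. A bounded-stack play may \emph{repeat} configurations within a single waiting period: for instance, a play that lingers longer and longer in a non-B\"uchi control state with a $\Skip$ self-loop before returning to $F$ visits only two configurations, has bounded stack, satisfies $\bucf$, yet has $\limsup_k \distk(\pi,F)=\infty$. The finiteness of the configuration set bounds the length of a \emph{simple} path, not of an arbitrary play segment, so $\bucf \cap \SB \neq \fin\bucf \cap \SB$ as sets of plays, and no play-by-play argument of the kind you propose can establish this inclusion of winning regions. (Conversely, the direction you then spend your effort on, $\WE(\fin\bucf \cap \SB) \subseteq \WE(\bucf \cap \SB)$, really is immediate, since $\fin\bucf \subseteq \bucf$ pointwise; so the quantifier structure you identify as the ``main obstacle'' is attached to the wrong inclusion.)

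The missing idea is a \emph{strategy-improvement-free pumping argument through memoryless determinacy}: by~\cite{BouquetSerreWalukiewicz03}, Eve has a memoryless winning strategy $\sigma$ for $\bucf \cap \SB$ from her winning set. Along any play $\pi$ consistent with $\sigma$, between two visits of the \emph{same} configuration there must be a B\"uchi configuration --- otherwise iterating that $F$-free loop forever would produce a play consistent with the memoryless $\sigma$ that violates $\bucf$, contradicting that $\sigma$ is winning. Since $\pi \in \SB$, the number of distinct configurations it visits is finite (bounded in terms of its stack-height bound), so the two facts together bound $\distk(\pi,F)$ along $\pi$, \textit{i.e.} $\pi \in \fin\bucf \cap \SB$. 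Thus the \emph{same} strategy $\sigma$ witnesses membership in $\WE(\fin\bucf \cap \SB)$; without the memorylessness (or some equivalent finite-memory normalization) the loop-pumping step is unavailable, and your proposal provides no substitute for it.
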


The left-to-right inclusion is clear, since $\fin\bucf \subset \bucf$.
The converse inclusion follows from memoryless determinacy 
for the condition $\bucf \cap \SB$~\cite{BouquetSerreWalukiewicz03}:
assume $\sigma$ is a memoryless strategy ensuring $\bucf \cap \SB$,
and let $\pi$ be a play consistent with $\sigma$.
First note that between two visits of the same configuration,
there must be a B\"uchi configuration,
otherwise iterating this loop would be a play consistent with $\sigma$ yet losing.
The second observation is that since the stack height remains smaller than a bound $N$,
the number of different configurations visited in $\pi$ is finite and bounded by a function of $N$.
The combination of these two arguments imply that $\pi$ satisfies $\fin\bucf$.

Note however that in general, for a pushdown game,
$\WE(\fin\parp \cap \SB) \neq \WE(\parp \cap \SB)$.

\subsection{The complete reduction}
We show how to use both ideas to handle pushdown games with finitary parity
and stack boundedness conditions.
We present a three-step reduction, illustrated in Figure~\ref{fig:reductions}.

\begin{figure}[!ht]
\begin{center}
\begin{picture}(75,18)(0,0)
	\gasset{Nadjust=wh,Nframe=n,AHLength=2,AHlength=3,curvedepth=10}

	\node(finpar_cap_sb)(0,4){\begin{tabular}{ccc}$\fin\parp$\\ \textrm{ and }\\$\SB$\end{tabular}}
	\node(bndpar_cap_sbq)(25,4){\begin{tabular}{ccc}$\bnd\parp$\\ \textrm{ and }\\$\SB(Q)$\end{tabular}}
	\node(finbuc_cap_sbq)(50,4){\begin{tabular}{ccc}$\fin\bucf$\\ \textrm{ and }\\$\SB(Q)$\end{tabular}}
	\node(buc_cap_sbq)(75,4){\begin{tabular}{ccc}$\bucf$\\ \textrm{ and }\\$\SB(Q)$\end{tabular}}

  	\drawedge(finpar_cap_sb,bndpar_cap_sbq){\textrm{restart}}
  	\drawedge(bndpar_cap_sbq,finbuc_cap_sbq){$\times \M$}
  	\drawedge(finbuc_cap_sbq,buc_cap_sbq){\textrm{collapse}}
\end{picture}
\caption{Sequence of reductions}
\label{fig:reductions}
\end{center}
\end{figure}
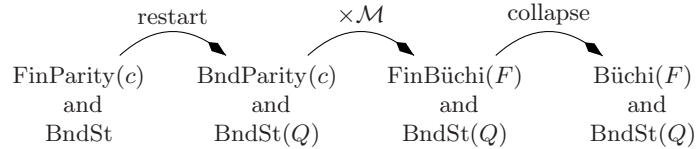

The first step is to adapt the reduction from finitary parity to bounded parity,
now intersected with the stack boundedness condition.
To this end, we need to modify the stack boundedness condition
so that it ignores the configurations in the restart gadget;
we define its restriction to $Q$:
$$\SB(Q) = \set{\pi \mid \exists N, 
\begin{array}{c}
\textrm{ all configurations in } \pi \\
\textrm{ with control state in } Q\\
\textrm{ have stack height less than } N
\end{array}
}\ .$$
Now the reduction is from finitary parity and stack boundedness to bounded parity and restricted stack boundedness.

The second step is the reduction from bounded parity to finitary B\"uchi
by composing with the memory structure from Proposition~\ref{prop:mem_bounded_parity},
keeping track of the most urgent pending request.
We are now left with a pushdown game with the condition finitary B\"uchi and restricted stack boundedness.

The third step is the collapse of finitary B\"uchi to B\"uchi.
Note that the collapse stated in Lemma~\ref{lem:buchi_collapse} deals
with stack boundedness, not restricted to a subset $Q$.
Indeed, the result does not hold in general for this modified stack boundedness condition,
but it does hold here due to the special form of the restart gadget,
that can be used only finitely many times.

Formally, we first need to extend the memoryless determinacy for the condition
B\"uchi and restricted stack boundedness.

\begin{lemma}
For all pushdown games with condition B\"uchi and restricted stack boundedness,
Eve has a memoryless winning strategy from her winning set.
\end{lemma}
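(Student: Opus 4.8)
The plan is to reduce the claim to the memoryless determinacy of the ordinary B\"uchi-with-stack-boundedness condition $\bucf \cap \SB$ of Bouquet, Serre and Walukiewicz~\cite{BouquetSerreWalukiewicz03}, following the ``prediction'' style of reduction from pushdown games to finite games of Walukiewicz~\cite{Walukiewicz01} and Serre~\cite{Serre06}. First I would record two easy structural facts. The condition $\bucf \cap \SB(Q)$ is Borel and prefix-independent, so the game is determined, and by Lemma~\ref{lem:union} it is enough to produce, for each vertex $v$ of the winning region, \emph{some} memoryless strategy winning from $v$, and then glue these into a single memoryless strategy. Second, write $P$ for the set of control states not in $Q$; since the coloring depends only on the control state, whether a play satisfies $\SB(Q)$ is insensitive to how tall the stack grows during a maximal ``$P$-excursion'' (a maximal infix of configurations with control state in $P$), and this is the only place where $\SB(Q)$ and $\SB$ can disagree.

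The core of the argument is a finite abstraction. I would follow Walukiewicz and build a finite game whose positions are pairs (control state, top stack symbol), enriched, at each point where Eve commits to a move that will eventually expose a lower stack symbol, with a \emph{prediction} of the set of states from which she claims to win after the matching $\pop$; to this I would add the extra bookkeeping of~\cite{BouquetSerreWalukiewicz03} certifying stack boundedness. The new ingredient is to make this boundedness bookkeeping \emph{transparent} to $P$-excursions: while the simulated play is inside a $P$-excursion no boundedness obligation is imposed, but the abstraction must guess, on entering the excursion, whether it ever returns to $Q$ and, if so, with which control state and top symbol; this guess is then played out exactly as in the pushdown B\"uchi game restricted to $P$-configurations, for which Walukiewicz already provides memoryless strategies and a finite abstraction. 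In particular an $\SB(Q)$-play that eventually stays inside $P$ reduces to an ordinary pushdown B\"uchi play and is handled by this $P$-component alone. Winning the finite game memorylessly then unfolds, by the standard correspondence, to a memoryless winning strategy for Eve in the original pushdown game.

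The step I expect to be the main obstacle is precisely the interface between the ``$Q$-part'' and the ``$P$-part'': showing that the boundedness certificate accumulated along the $Q$-configurations survives passage through an arbitrarily tall $P$-excursion, whose pushed content need not be popped before control returns to $Q$, and that an infinite play re-entering $Q$ infinitely often, with each intervening excursion possibly unboundedly tall, is accepted by the abstraction if and only if its $Q$-stack heights are bounded. Once this compositional lemma is established, the remaining verifications — that memoryless strategies in the finite game lift to memoryless strategies in the pushdown game, and that the abstraction is sound and complete for $\bucf \cap \SB(Q)$ — follow the lines of~\cite{Walukiewicz01,BouquetSerreWalukiewicz03} essentially verbatim.
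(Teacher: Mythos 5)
There is a genuine gap at the very last step of your plan, and it is precisely the step that carries the whole burden of the statement. You assert that a memoryless winning strategy in the Walukiewicz-style finite abstraction ``unfolds, by the standard correspondence, to a memoryless winning strategy for Eve in the original pushdown game''. This is not what the standard correspondence gives. In the reductions of \cite{Walukiewicz01,BouquetSerreWalukiewicz03,Serre06}, the strategy lifted back to the pushdown game must consult the predictions (conditional winning sets, plus the boundedness bookkeeping you add) that were committed at each push; these annotations are a function of the whole history of the play, not of the current configuration $(q,u\bot)$, since different plays reaching the same configuration may carry different annotations on the stack. The lifted strategy is therefore implemented by a pushdown transducer, i.e.\ it uses unbounded memory, and extracting a \emph{positional} strategy on the configuration graph from it is exactly the nontrivial content of the lemma --- it does not follow ``by the standard correspondence''. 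In addition, the compositional lemma about arbitrarily tall excursions through control states outside $Q$ (the interface between your $Q$-part and $P$-part) is only announced as the expected obstacle, not proved, so the abstraction itself is not established either.

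For comparison, the paper takes a much lighter route: it adapts Proposition~1 of \cite{Gimbert04}, whose engine is the union technique of Lemma~\ref{lem:union}. One writes $\bucf \cap \SB(Q)$ as the countable union over $N$ of the conditions ``B\"uchi, and every configuration with control state in $Q$ has stack height at most $N$''; each such condition is a conjunction of B\"uchi with a safety-like requirement, for which Eve has a positional winning strategy over an arbitrary arena, and the union is prefix-independent, so Lemma~\ref{lem:union} glues these into a single positional winning strategy from the whole winning region. No finite abstraction, no predictions, and positionality is obtained directly rather than having to be recovered from a stack-based strategy. If you want to salvage your approach, you would have to add an argument showing how to turn the pushdown-implemented strategy into a positional one (or argue positionality independently, as Gimbert does), at which point the finite abstraction is no longer doing the work for this particular lemma.
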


\begin{proof}
The proof is a straightforward adaptation of Proposition~1 from~\cite{Gimbert04}.
\hfill\qed
\end{proof}

Now, consider $\sigma$ a memoryless strategy ensuring
the condition B\"uchi and restricted stack boundedness
in the pushdown game obtained through the above reductions;
we prove that $\sigma$ ensures finitary B\"uchi.
Let $\pi$ be a play consistent with $\sigma$,
there are two cases: either the play remains forever
in the restart gadget, or from some point onwards 
the restart gadget is not used anymore.
In the first case, the finitary B\"uchi condition is clearly satisfied.
In the other case, the play satisfies the general stack boundedness condition,
and the same reasoning as for Lemma~\ref{lem:buchi_collapse}
concludes that the finitary B\"uchi condition is satisfied.

\vskip1em
This three-step reduction produces in linear time
an equivalent pushdown game with the condition B\"uchi and stack boundedness restricted to $Q$.
It has been shown in~\cite{BouquetSerreWalukiewicz03,Gimbert04} that 
deciding the winner in a pushdown game with condition B\"uchi and stack boundedness
is $\EXPTIME$-complete;
a slight modification of their techniques extends this 
to the restricted definition of stack boundedness.

\begin{theorem}
Determining the winner in a pushdown game with finitary parity and stack boundedness conditions
is $\EXPTIME$-complete.
\end{theorem}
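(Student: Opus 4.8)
The proof splits into a lower bound and an upper bound. For $\EXPTIME$-hardness, essentially nothing new is required: the problem contains as a special case pushdown games with the B\"uchi and stack boundedness conditions, which are $\EXPTIME$-hard~\cite{BouquetSerreWalukiewicz03,Gimbert04}. Indeed, with only the two colors $\set{0,1}$ a finitary parity condition is a finitary B\"uchi condition, and by Lemma~\ref{lem:buchi_collapse} (together with $\bnd\bucf = \fin\bucf$) its intersection with $\SB$ coincides with $\bucf \cap \SB$; alternatively one embeds plain pushdown parity games, $\EXPTIME$-hard by~\cite{Walukiewicz01}. In either case the embedding is the identity on arenas, so the hardness transfers for free.

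For membership in $\EXPTIME$, the plan is to carry out the three reductions of Figure~\ref{fig:reductions}, checking at each stage that the transformation is computable in linear time, keeps the object a pushdown game of size linear in $|Q|$ and fixed in $d$, and preserves the winner. First I would insert the restart gadget of Figure~\ref{fig:restart_gadget} on every edge and replace the objective $\fin\parp \cap \SB$ by $\bnd\parp \cap \SB(Q)$; the passage from $\SB$ to its $Q$-restriction $\SB(Q)$ is essential, since Adam's $\push(\sharp)$ self-loop in the gadget lets the stack grow without this being a genuine stack explosion of the original game, and correctness is the lemma of Section~\ref{sec:stack_boundedness} relativised to $\SB(Q)$: a winning strategy for Eve restarts only finitely often, hence eventually coincides with a play satisfying $\bnd\parp$, while never increasing the $Q$-stack-height; dually a strategy for Adam that pushes an increasing number of $\sharp$'s at successive visits either stays forever in the gadget (winning) or keeps a request unanswered for arbitrarily long, fooling the bound. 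Second I would compose with the memory structure $\M$ of Proposition~\ref{prop:mem_bounded_parity} tracking the most urgent pending request; since $\M$ is finite and touches only control states, $\A \times \M$ is again a pushdown arena with the same stack alphabet, $\SB(Q)$ becomes the analogous restricted condition over the control states $Q \times M$, and $\bnd\parp$ turns into $\bnd\bucf = \fin\bucf$ for the B\"uchi set $F$ of ``all requests answered''. The third step is to collapse $\fin\bucf \cap \SB(Q)$ to $\bucf \cap \SB(Q)$ and invoke a known algorithm.

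The third step is where the real content lies, and I expect it to be the main obstacle, because Lemma~\ref{lem:buchi_collapse} is stated only for the unrestricted condition $\SB$ and genuinely fails for $\SB(Q)$ in general; the fix must exploit the special shape of the restart gadget. I would first extend memoryless determinacy to pushdown games with $\bucf \cap \SB(Q)$ by adapting Proposition~1 of~\cite{Gimbert04} (as in the lemma preceding the theorem). Then, given a memoryless strategy $\sigma$ winning for $\bucf \cap \SB(Q)$ in the reduced game, I would show it already ensures $\fin\bucf$: along any play consistent with $\sigma$, either the play stays forever inside one restart gadget, in which case it is ultimately trapped in the $\push(\sharp)/\pop(\sharp)$ structure and $\fin\bucf$ holds vacuously, or it leaves every restart gadget for good after a finite prefix, and then no $\sharp$ is ever pushed again, so the entire stack (auxiliary symbols included) stays bounded and $\SB$ holds from that point; the argument of Lemma~\ref{lem:buchi_collapse} then applies, since between two occurrences of the same configuration there is a B\"uchi configuration (else iterating the loop would be a losing play consistent with $\sigma$) and only finitely many configurations are visited, whence the B\"uchi distances are bounded and $\fin\bucf$ is satisfied. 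Finally I would appeal to the $\EXPTIME$-completeness of pushdown games with B\"uchi and stack boundedness~\cite{BouquetSerreWalukiewicz03,Gimbert04}, whose algorithm adapts to the $Q$-restricted condition $\SB(Q)$; composing it with the linear-time reductions above places the original problem in $\EXPTIME$, which with the matching hardness gives $\EXPTIME$-completeness.
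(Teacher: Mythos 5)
Your proposal is correct and takes essentially the same route as the paper: the identical three-step reduction (restart gadget turning $\fin\parp \cap \SB$ into $\bnd\parp \cap \SB(Q)$, product with the memory structure $\M$ to reach $\fin\bucf \cap \SB(Q)$, then the collapse to $\bucf \cap \SB(Q)$ via memoryless determinacy adapted from~\cite{Gimbert04}), followed by the $\EXPTIME$ algorithm for B\"uchi with stack boundedness~\cite{BouquetSerreWalukiewicz03,Gimbert04} adjusted to the restricted condition, with hardness inherited from those games. The only caveat is your parenthetical alternative for the lower bound: embedding plain pushdown parity games by the identity on arenas does not work, since winning for the parity condition and for $\fin\parp \cap \SB$ can differ on the same arena (see Example~\ref{ex:intro}); your primary hardness route through $\bucf \cap \SB$ and Lemma~\ref{lem:buchi_collapse} is the correct one and suffices.
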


\medskip\noindent\textbf{Conclusion.}
We studied boundedness games over infinite arenas, and investigated two questions. 
First, the strategy complexity over general infinite arenas; we proved
that finite-memory winning strategies exist for finitary parity games.
It remains open to extend this to cost-parity games~\cite{FijalkowZimmermann12}.
Second, the decidability of pushdown games; we proved that 
pushdown $\omega B$-games are decidable, and pushdown games with finitary parity
along with stack boundedness conditions are $\EXPTIME$-complete.

\medskip\noindent\textbf{Acknowledgments.}
We thank Denis Kuperberg and Thomas Colcombet for sharing and explaining~\cite{BlumensathColcombetKuperbergVandenboom13},
Damian Niwinski for raising the question of pushdown finitary games,
Olivier Serre for many inspiring discussions and Florian Horn for interesting suggestions.
We are grateful to the LICS anonymous reviewers for their valuable comments.

\bibliographystyle{plain}
\bibliography{bib}

\end{document}